\newtheorem{theorem}{Theorem}
\newtheorem{definition}{Definition}
\newtheorem{corollary}{Corollary}
\newtheorem{lemma}{Lemma}
\begin{document}


\title{High-dimensional quantum XYZ product codes for biased noise}


\author{Zhipeng Liang}
\affiliation{Harbin Institute of Technology, Shenzhen. Shenzhen, 518055, China}
\author{Zhengzhong Yi}
\email[]{zhengzhongyi@cs.hitsz.edu.cn} 
\affiliation{Hefei National Research Center for Physical Sciences at the Microscale and School of Physical Sciences, University of Science and Technology of China, Hefei 230026, China}
\affiliation{Shanghai Research Center for Quantum Science and CAS Center for Excellence in Quantum Information and Quantum Physics, University of Science and Technology of China, Shanghai 201315, China}
\affiliation{Hefei National Laboratory, University of Science and Technology of China, Hefei 230088, China}
\author{Fusheng Yang}
\affiliation{Harbin Institute of Technology, Shenzhen. Shenzhen, 518055, China}
\author{Jiahan Chen}
\affiliation{Harbin Institute of Technology, Shenzhen. Shenzhen, 518055, China}
\author{Zicheng Wang}
\affiliation{Harbin Institute of Technology, Shenzhen. Shenzhen, 518055, China}
\author{Xuan Wang}
\email[]{wangxuan@cs.hitsz.edu.cn}
\affiliation{Harbin Institute of Technology, Shenzhen. Shenzhen, 518055, China}


\date{\today}

\begin{abstract}
Three-dimensional (3D) quantum XYZ product can construct a class of non-CSS quantum codes by using three classical codes. However, there has been limited study on their error-correcting performance so far and whether this code construction can be generalized to higher dimension is an open question. In this paper, we first study the error-correcting performance of the 3D Chamon code, which is an instance of the 3D XYZ product of three repetition codes. Second, we show that the 3D XYZ product can be generalized to four dimension and propose four-dimensional (4D) XYZ product code construction, which constructs a class of non-CSS quantum codes by using either four classical codes or two CSS quantum codes. Compared with the 4D homological product, we show that the 4D XYZ product can construct non-CSS codes with higher code dimension or code distance. Third, we consider two instances of the 4D XYZ product, to which we refer as the 4D Chamon code and the 4D XYZ product concatenated code, respectively. Our simulation results show that, the 4D XYZ product can construct non-CSS codes with better error-correcting performance against Pauli-$Z$-biased noise than CSS codes constructed by the 4D homological product. Finally, we present the geometric arrangement of the 4D Chamon code within a 4D cubic lattice, demonstrating that it possesses two key characteristics of fracton models, which strongly suggest that it is a novel 4D fracton model. 
\end{abstract}


\maketitle

\section{Introduction}
\label{1}

Quantum error-correcting code (QECC) \cite{PhysRevA.52.R2493,PhysRevLett.77.793} provides a promising approach to realize large-scale fault-tolerant quantum computing. In recent years, quantum low density parity-check (QLDPC) codes \cite{breuckmann2021quantum,10.5555/2685179.2685184} have attracted a great deal of attention, since they may have high coding rate, large code distance and low stabilizer weight, which benefits their engineering application. So far, many constructions of QLDPC codes have been proposed, such as XYZ product \cite{leverrier2022quantum}, hypergraph product \cite{tillich2013quantum}, homological product \cite{10.1145/2591796.2591870}, lifted product \cite{9567703}, and balanced product \cite{9490244}. All the above construction methods generate CSS codes \cite{681315}, except XYZ product which generates non-CSS codes.

Compared with CSS codes, non-CSS codes have shown better error-correcting performance against biased noise \cite{bonilla2021xzzx,PhysRevResearch.5.013035,6qbg-xslr}. In quantum computing hardware system, the probability of Pauli $Z$ noise is often much higher than the probabilities of Pauli $X$ and $Y$ noise \cite{Aliferis_2009,Demonstration,topologically2014}, thus considering quantum error-correcting codes with better performance against biased noise is more meaningful in this sense.

Three-dimensional (3D) XYZ product proposed in Ref. \cite{leverrier2022quantum} constructs a class of non-CSS codes by using three classical codes. However, so far there is limited study on the error-correcting performance of 3D XYZ product codes, and there exists some natural questions: whether this construction method can be generalized to higher dimension? If possible, how is the error-correcting performance of these high-dimensional XYZ product codes, especially against biased noise?

The motivation of our work is to answer the problems above. In this paper, we first study the error-correcting performance of the 3D Chamon code \cite{PhysRevLett.94.040402,BRAVYI2011839}, which is an instance of the 3D XYZ product of three repetition codes with length $n_1$, $n_2$, $n_3$. Exploiting fully decoupled binary belief propagation combined with $order\mbox{-}0$ ordered statistics decoding (FDBP-OSD-0 \cite{yi2025improved}), when $n_1=n_2=n_3$, under the depolarizing noise and pure Pauli $X$, $Y$ and $Z$ noise, the code-capacity thresholds of the Chamon code are all around $14.5\%$. 

Second, we show that that the 3D XYZ product can be generalized to four dimension and propose the four-dimensional (4D) XYZ product code construction, which constructs a non-CSS code by making use of either four classical codes or two CSS codes. It also can be regarded as a non-CSS variant of the standard homological product (namely, 4D homological product) of two CSS codes .  

Third, we show that using two identical component CSS codes, 4D XYZ product can construct codes with higher code dimension (the number of encoded logical qubits) or code distance than those constructed by the 4D homological product.

Fourth, to explore the error-correcting performance of 4D XYZ product codes, we consider two instances. The first one is constructed from two quantum concatenated codes, each of which is derived from a pair of repetition codes. The second one is constructed from two hypergraph product codes, similarly obtained from two pairs of repetition codes. We refer to the codes as the 4D XYZ product concatenated code and the 4D Chamon code, respectively. Exploiting FDBP-OSD-0, our simulation results show that under the pure Pauli $Z$ noise, the code-capacity thresholds of both the 4D XYZ product concatenated code and the 4D Chamon code are much higher than those of the 4D toric code \cite{breuckmann2016local} and 4D homological product concatenated code (namely, the codes constructed by 4D homological product of two quantum concatenated codes, which are the two identical component codes used to construct 4D XYZ product concatenated code) respectively, while their code-capacity thresholds are close under depolarizing noise. Furthermore, under Pauli-$Z$-biased noise with varying bias rates, the 4D XYZ product concatenated code and the 4D Chamon code both exhibit strong error-correction performance in terms of code-capacity threshold. These results indicate that, when utilizing identical component codes, the 4D XYZ product construction yields non-CSS codes with superior performance against biased noise compared to CSS codes generated by the 4D homological product.

At last, we study the geometric arrangement of the 4D Chamon code within a 4D cubic lattice and demonstrate that it possesses two key characteristics of fracton models—ground state degeneracy dependent on cubic lattice size and the restricted mobility of excitations, which strongly suggest that it is a novel 4D fracton model. Our simulation results also demonstrate that the error-correcting performance of the 4D Chamon code is significantly better than that of the other fracton model-the 4D X-cube model, which implies that the 4D Chamon code is a promising candidate for self-correcting quantum memory.

This paper is organized as follow. Sect. \ref{2} introduces some preliminaries, including quantum stabilizer codes, chain complex, hypergraph product and 3D XYZ product code construction. In Sect. \ref{3}, we introduce the 4D XYZ product code construction and study its code dimension and code distance. In Sect. \ref{4}, we explore the error-correcting performance of 3D Chamon codes, 4D Chamon codes and 4D XYZ product concatenated codes. In Sect. \ref{Further discussions}, we provide more discussions on the 4D Chamon code. In Sect. \ref{6}, we conclude our work.

\section {Preliminaries}
\label{2}

\subsection {Quantum stabilizer codes}
\label{QSCs}
Quantum stabilizer codes (QSCs) \cite{gottesman1997stabilizer} are an important class of quantum error correcting codes, which can be seen as the analogue of classical linear codes in the quantum field.

Giving a $[[n,k]]$ QSC $C$ is equivalent to explicitly giving a set of independent $n$-qubits Pauli operators $S_1,\cdots,S_{(n-k)}\in \mathcal{P}_n$, which commute with each other and are called stabilizer generators, where $\mathcal{P}_n$ is the n-qubits Pauli group. This is because the code space $\mathcal{P}_C$ of $C$ is the common +1 eigenspace of $S_1,\cdots,S_{(n-k)}\in \mathcal{P}_n$. Formally,
\begin{equation}
	\mathcal{Q}_C=\left\{\ket{\varphi}\in\mathcal{H}_2^{\left(\otimes n\right)}:S\ket{\varphi}=\ket{\varphi},\forall S\in\mathcal{S}\right\}
\end{equation}
where $\mathcal{H}_2^{\left(\otimes n\right)}$ is the $n$-qubits Hilbert space and $\mathcal{S}$ is the stabilizer group, which is generated by $S_1,\cdots,S_{\left(n-k\right)}$, namely $\mathcal{S}=\left\langle S_1,\cdots,S_{\left(n-k\right)}\right\rangle$.

If the stabilizer generators of a QSC $C$ can be divided into two parts, each of which only contains either Pauli $X$ or Pauli $Z$ operators, it is a CSS code, otherwise it is a non-CSS code. It can be seen that each stabilizer generator of a CSS code can only detect two types of Pauli errors, while that of a non-CSS code can detect all three types of Pauli errors.

The logical operators of a QSC $C$ are the set of operators in $\mathcal{P}_n$ which commute with all elements in $\mathcal{S}$ but are not in $\mathcal{S}$. Formally, the logical operators are the elements of $\mathcal{C}\left(\mathcal{S}\right)\backslash \mathcal{S}$, where $\mathcal{C}\left(\mathcal{S}\right)$ is the centralizer of $\mathcal{S}$, which is defined as $\mathcal{C}\left(\mathcal{S}\right)=\left\{P\in\mathcal{P}_n:SP=PS,\forall S\in\mathcal{S}\right\}$. For a $\left[\left[n,k,d\right]\right]$ QSC, we can find $k$ pairs of logical operators $\left({\bar{X}}_j,{\bar{Z}}_j\right)_{j=1,\cdots,k}$ such that ${\bar{X}}_i{\bar{Z}}_j=\left(-1\right)^{\delta_{ij}}{\bar{Z}}_j{\bar{X}}_i$, where $\delta$ is the Kronecker delta, which means for the same pair of logical operators ${\bar{X}}_j,{\bar{Z}}_j$, they are anti-commute, but they commute with other pairs of logical operators. We can see that $\mathcal{C}\left(\mathcal{S}\right)=\left\langle S_1,\cdots,S_{\left(n-k\right)},{\bar{X}}_1,{\bar{Z}}_1,\cdots,{\bar{X}}_k,{\bar{Z}}_k\right\rangle$. The code distance $d$ is defined as the minimum weight of logical operators, namely,
\begin{equation}
	d=\mathop{\min}_{L\in\mathcal{C}\left(\mathcal{S}\right)\backslash \mathcal{S}}wt\left(L\right)
\end{equation}
where the weight of an operator $P\in\mathcal{P}_n$ is defined as the number of qubits on which it acts nontrivially, and we use notation $wt\left(\cdot\right)$ to denote it. For instance, $wt\left(I_1X_2Y_3Z_4\right)=3$.

\subsection {Chain complex}
\label{chain complex}
In this section, we introduce the concept of chain complex, which will help to better understand the hypergraph product code, 3D XYZ product code and our proposed 4D XYZ product code.

A chain complex $\mathfrak{C}$ with length $L$ is a collection of $L+1$ vector spaces $C_0,\ C_1,\cdots,C_L$ and $L$ linear maps (which are also called boundary operators) $\partial_i:C_i\rightarrow C_{i+1}\ \left(0\le i\le L-1\right)$, namely,
\begin{equation}
	\mathfrak{C}=C_0\stackrel{\partial_0}{\longrightarrow}C_1\stackrel{\partial_1}{\longrightarrow}\cdots\stackrel{\partial_{L-1}}{\longrightarrow}C_L
\end{equation}
which satisfies $\partial_i\partial_{i-1}=0$ for all $0\le i\le L-1$.

If we consider vector space over $\mathbb{F}_2$, namely, $C_i:=F_2^{n_i}$, a chain complex $\mathfrak{C}$ with length 2 naturally corresponds to a CSS code $C\left(\mathfrak{C}\right)$, namely,
\begin{equation}
	C\left(\mathfrak{C}\right)=\mathbb{F}_2^{m_z}\stackrel{H_z^T}{\longrightarrow}\mathbb{F}_2^N\stackrel{H_x}{\longrightarrow}\mathbb{F}_2^{m_x}
\end{equation}
where the commutation condition $H_xH_z^T=\mathbf{0}$ is naturally satisfied.

It can be seen that a classical linear code $C=\ker{H}$ corresponds to a chain complex of length 1, namely,
\begin{equation}
	\mathbb{F}_2^N\stackrel{H}{\longrightarrow}\mathbb{F}_2^m
\end{equation}

\subsection {Hypergraph product and 3D XYZ product code construction}
\label{HP and 3D XYZ}
Hypergraph product is making use of two classical linear codes to construct a CSS code \cite{tillich2013quantum}. More precisely, given two classical linear code $C_1=\ker{H_1}$ and $C_2=\ker{H_2}$ (where $H_i, \ i\ \in\ \left\{1,\ 2\right\}$, are the parity check matrices of size $m_i\times n_i$ of codes $C_i$), hypergraph product is to construct a chain complex $\mathfrak{C}$ with length 2 as follow,
\begin{equation}
	\label{length2Chain}
	\mathfrak{C}=F_2^{m_1\times n_2}\stackrel{H_z^T}{\longrightarrow}F_2^{n_1\times n_2\oplus m_1\times m_2}\stackrel{H_x}{\longrightarrow}F_2^{n_1m_2}
\end{equation}
where $H_x = \left(I_{n_1}\otimes H_{2},\ H_{1}^{T}\otimes I_{m_2}\right)$ and $H_z = \left(H_1\otimes I_{n_2},\ I_{m_2}\otimes H_2^T\right)$. It is easy to verify that $H_xH_z^T=2H_1^T\otimes H_2=\mathbf{0}$. Thus Eq. (\ref{length2Chain}) naturally corresponds to a CSS code $C\left(\mathfrak{C}\right)$. Fig. \ref{HPchaincomplex} shows the tensor-product structure corresponding to Eq. (\ref{length2Chain}).
\begin{figure}[htbp]
	\centering
	\includegraphics[width=0.48\textwidth]{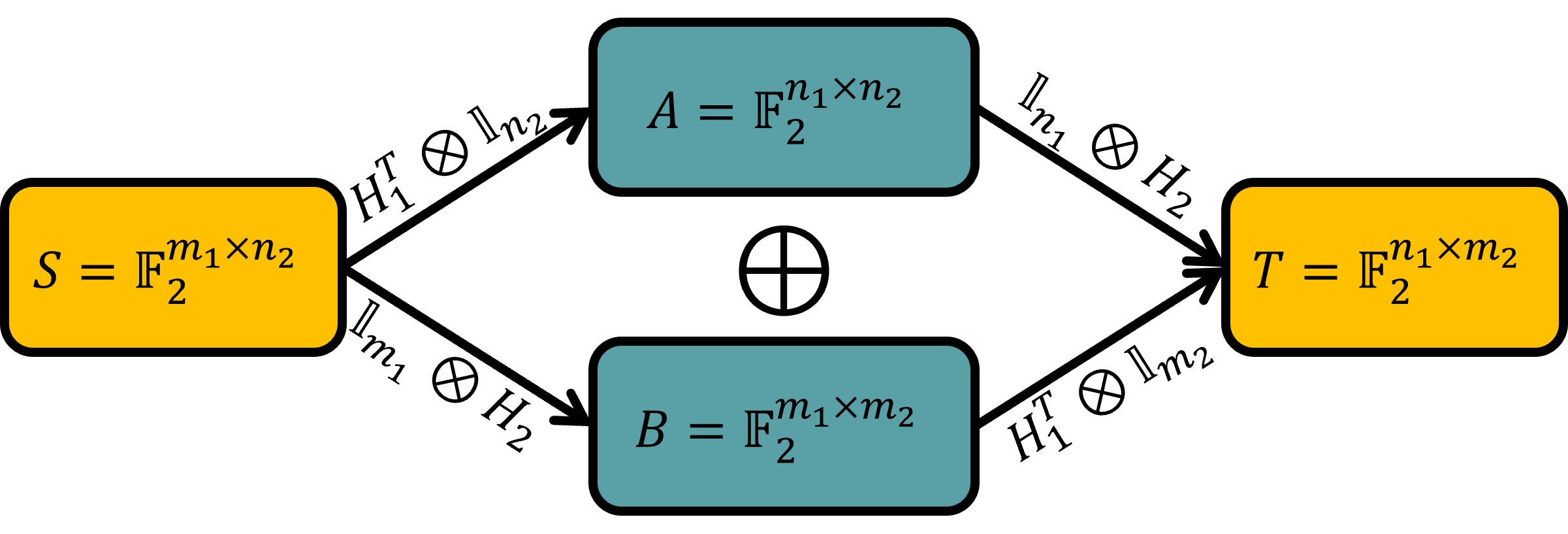}
	\caption{The tensor-product structure of the  hypergraph product of two classical linear codes $C_1=\ker{H_1}$ and $C_2=\ker{H_2}$.}
	\label{HPchaincomplex}
\end{figure}

The tensor-product structure depicted in Fig. \ref{HPchaincomplex} can derive not only a CSS code but also a non-CSS code. As shown in Fig. \ref{HPchaincomplex}, the stabilizer generators can be divided into two parts, $S$ and $T$, and each part consists of two classes of qubits, $A$ and $B$. In the case of constructing a CSS code, we assign $X$-type stabilizers to one of $S$ or $T$ and $Z$-type stabilizers to the other. If we let stabilizer generators in $S$ and $T$ to be mixed $X$ and $Z$ types, we can obtain a non-CSS code. More precisely, let the nontrivial operators of stabilizers in $S$ acting on qubits in $A$ and $B$ be Pauli $X$ and $Z$ operators, respectively. Similarly, let the nontrivial operators of stabilizers in $T$ acting on qubits in $A$ and $B$ be Pauli $Z$ and $X$ operators, respectively. In this way, we obtain the following stabilizer generator matrix,
\begin{equation}
	\mathcal{S} = 
	\begin{bmatrix}
		S\\
		T
	\end{bmatrix} = \begin{bmatrix}
		X^{\left(H_1\otimes I_{n_2}\right)},Z^{\left(I_{m_1}\otimes H_2^T\right)}\\
		Z^{\left(I_{n_1}\otimes H_2\right)},X^{\left(H_1^T\otimes I_{m_2}\right)}
	\end{bmatrix}
\end{equation}
where the notation $\mathcal{P}=P^H\ \left(P\in\left\{X,\ Y,Z\right\}\right)$ denotes a Pauli tensor, which means for any entry of matrix $H$, if the entry is 1, the tensor $\mathcal{P}$ places a Pauli operator $P$ at the corresponding position, and an identity operator $I$ otherwise. It can be verified any pair of stabilizer generators in $\mathcal{S}$ commutes.
\begin{figure}[htbp]
	\centering
	\includegraphics[width=0.48\textwidth]{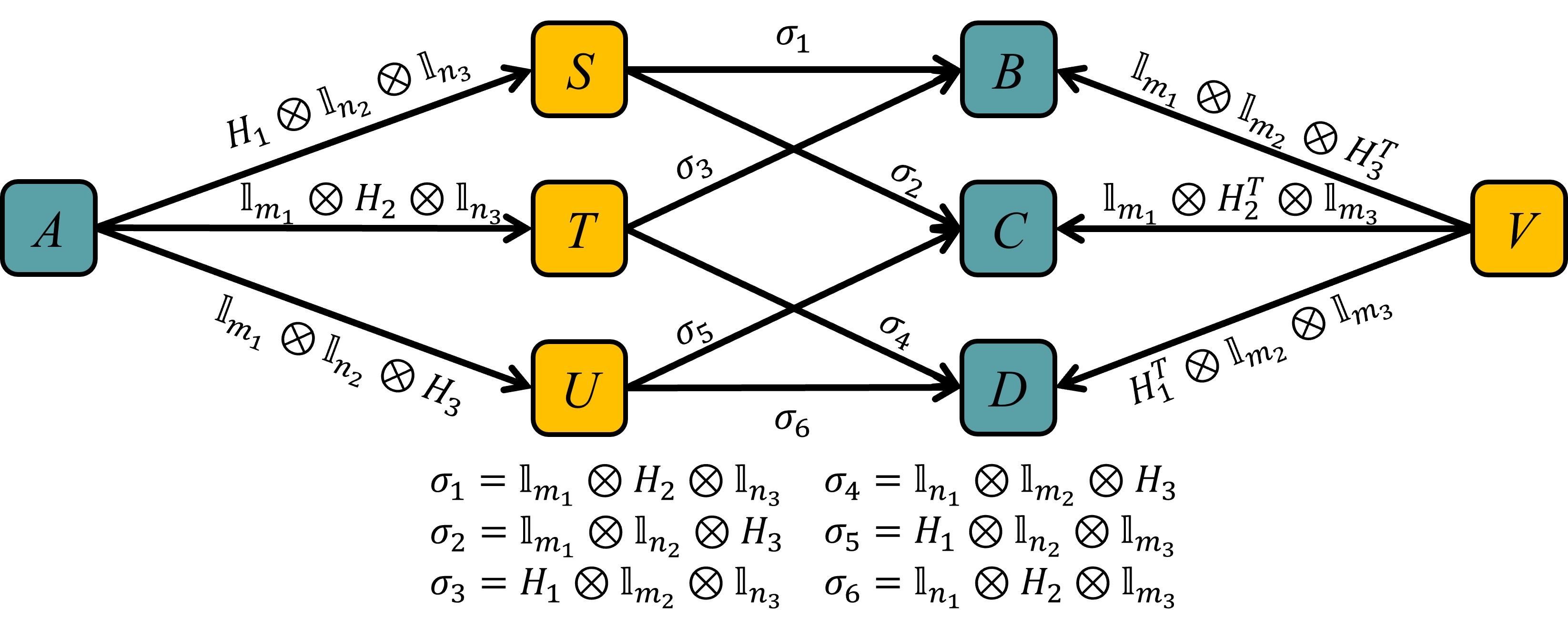}
	\caption{The tensor-product structure of the 3D XYZ product of three classical linear codes $C_1=\ker{H_1}$, $C_2=\ker{H_2}$ and $C_3=\ker{H_3}$.}
	\label{XYZchaincomplex}
\end{figure}

The 3D XYZ product \cite{leverrier2022quantum} is making use of three classical linear codes to construct a non-CSS code, which can be seen as a three-fold variant of the hypergraph product. Formally, giving three parity check matrices $H_i$ of size $m_i\times n_i\ \left(i\ =\ 1,\ 2,\ 3\right)$, one can construct the tensor-product structure as shows in Fig. \ref{XYZchaincomplex}.

In Fig. \ref{XYZchaincomplex}, $A$, $B$, $C$ and $D$ represent vector spaces which index the qubits and $S$, $T$, $U$ and $V$ represent vector spaces which index the stabilizer generators, namely,
\begin{equation}
	\begin{aligned}
		&A\in\mathbb{F}_{2}^{n_1\times n_2\times n_3},\ B\in\mathbb{F}_{2}^{m_1\times m_2\times n_3},\\
		&C\in\mathbb{F}_{2}^{m_1\times n_2\times m_3},\ D\in\mathbb{F}_{2}^{n_1\times m_2\times m_3}
	\end{aligned}	
\end{equation}
and
\begin{equation}
	\begin{aligned}
		&S\in\mathbb{F}_{2}^{m_1\times n_2\times n_3},\ T\in\mathbb{F}_{2}^{n_1\times m_2\times n_3},\\
		&U\in\mathbb{F}_{2}^{n_1\times n_2\times m_3},\ V\in\mathbb{F}_{2}^{m_1\times m_2\times m_3}
	\end{aligned}
\end{equation}

Similar with the case of constructing a non-CSS code from the tensor-product structure shown in Fig. \ref{HPchaincomplex}, the stabilizer generator matrix $\mathcal{S}$ of the corresponding 3D XYZ product code is
\begin{widetext}
	\begin{equation}
		\mathcal{S} = \begin{bmatrix}
			X^{\left(H_1\otimes I_{n_2}\otimes I_{n_3}\right)} &Y^{\left(I_{m_1}\otimes H_2^T\otimes I_{n_3}\right)} &Z^{\left(I_{m_1}\otimes I_{n_2}\otimes H_3^T\right)} &I^{\left(m_1n_2n_3\times n_1m_2m_3\right)}\\
			Y^{\left(I_{n_1}\otimes H_2\otimes I_{n_3}\right)} &X^{\left(H_1^T\otimes I_{m_2}\otimes I_{n_3}\right)} &I^{\left(n_1m_2n_3\times m_1n_2m_3\right)} &Z^{\left(I_{n_1}\otimes I_{m_2}\otimes H_3^T\right)}\\
			Z^{\left(I_{n_1}\otimes I_{n_2}\otimes H_3\right)} &I^{\left(n_1n_2m_3\times m_1m_2n_3\right)} &X^{\left(H_1^T\otimes I_{n_2}\otimes I_{m_3}\right)} &Y^{\left(I_{n_1}\otimes H_2^T\otimes I_{m_3}\right)}\\
			I^{\left(m_1m_2m_3\times n_1n_2n_3\right)} &Z^{\left(I_{m_1}\otimes I_{m_2}\otimes H_3\right)} &Y^{\left(I_{m_1}\otimes H_2\otimes I_{m_3}\right)} &X^{\left(H_1\otimes I_{m_2}\otimes I_{m_3}\right)}
		\end{bmatrix}
	\end{equation}
\end{widetext}
The code length is $N=n_1n_2n_3+m_1m_2n_3+m_1n_2m_3+n_1m_2m_3$.

So far, there has been limited study on the error-correcting performance of 3D XYZ product codes. In Sect. \ref{4}, we study the error-correcting performance of the 3D Chamon code, which is an instance of the 3D XYZ product of three repetition codes.

\begin{figure}[htbp]
	\centering
	\includegraphics[width=0.48\textwidth]{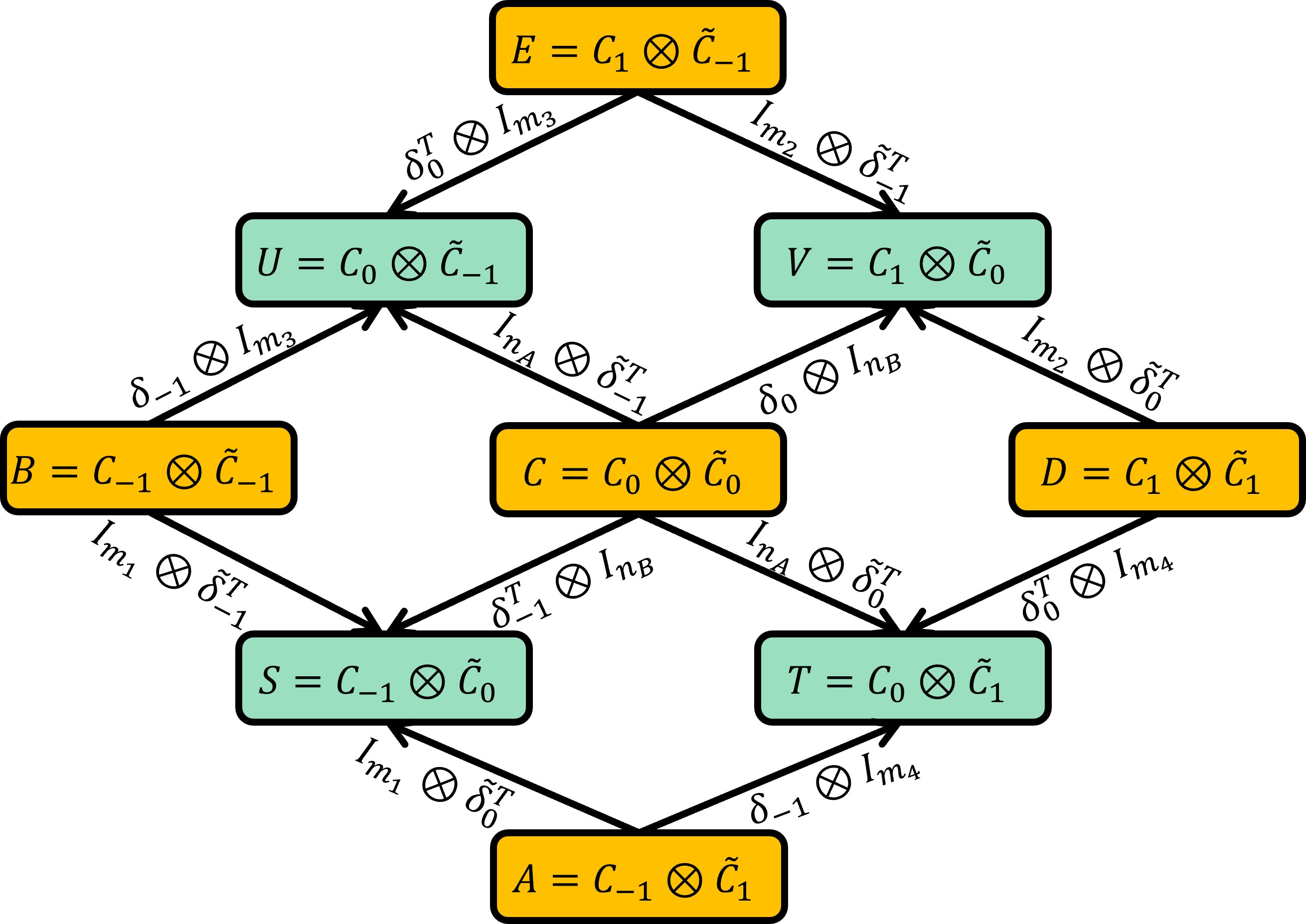}
	\caption{The tensor-product structure of the 4D XYZ product of two length-2 chain complexes $\mathfrak{C}_1=C_{-1}\stackrel{\delta_{-1}}{\longrightarrow}C_0\stackrel{\delta_{0}}{\longrightarrow}C_1$ and $\mathfrak{C}_2={\widetilde{C}}_{-1}\stackrel{\widetilde{\delta}_{-1}}{\longrightarrow}{\widetilde{C}}_0\stackrel{\widetilde{\delta}_{0}}{\longrightarrow}{\widetilde{C}}_1$.}
	\label{4DXYZproduct}
\end{figure}

\section {Four-dimensional XYZ product codes}
\label{3}
In this section, we first introduce the 4D XYZ product code construction. Second, we explain how to compute the code dimension of 4D XYZ product codes and compute two instances' code dimension. Third, we present the general form of the logical operators of 4D XYZ product codes, based on which we prove the upper bound of the code distance.

\subsection {The 4D XYZ product code construction}
\label{4D XYZ product code construction}
The 4D XYZ product construct a non-CSS code by using either two CSS codes or four classical linear codes, since one can use four classical codes to construct two CSS codes through hypergraph product or concatenation. Formally, the 4D XYZ product code construction is defined as follows.

\begin{definition}[\textbf{The 4D XYZ product code construction}]
	\label{definition4DXYZ}
	Giving two length-2 chain complexes $\mathfrak{C}_1=C_{-1}\stackrel{\delta_{-1}}{\longrightarrow}C_0\stackrel{\delta_{0}}{\longrightarrow}C_1$ and $\mathfrak{C}_2={\widetilde{C}}_{-1}\stackrel{\widetilde{\delta}_{-1}}{\longrightarrow}{\widetilde{C}}_0\stackrel{\widetilde{\delta}_{0}}{\longrightarrow}{\widetilde{C}}_1$, which corresponds to two CSS codes $C\left(\mathfrak{C}_1\right)$ and $C\left(\mathfrak{C}_2\right)$, one can construct the tensor-product structure as shown in Fig. \ref{4DXYZproduct},
	where the qubits are divided into A, B, C, D, E five parts and stabilizer generators are divided into S, T, U, V four parts. The corresponding stabilizer matrix $\mathcal{S}$ is
	\begin{widetext}
		\begin{equation}
			\label{4DXYZ stabilizer1}
			\mathcal{S} =\begin{bmatrix}
				S\\
				T\\
				U\\
				V
			\end{bmatrix} = \begin{bmatrix}
				X^{\left(I_{m_1}\otimes{\widetilde{\delta}}_0^T\right)} &Y^{\left(I_{m_1}\otimes{\widetilde{\delta}}_{-1}\right)} &Z^{\left(\delta_{-1}^T\otimes I_{n_B}\right)} &I &I\\
				Y^{\left(\delta_{-1}\otimes I_{m_4}\right)} &I &X^{\left(I_{n_A}\otimes{\widetilde{\delta}}_0\right)} &Z^{\left(\delta_0^T\otimes I_{m_4}\right)} &I\\
				I &Z^{\left(\delta_{-1}\otimes I_{m_3}\right)} &X^{\left(I_{n_A}\otimes{\widetilde{\delta}}_{-1}^T\right)} &I &Y^{\left(\delta_0^T\otimes I_{m_3}\right)}\\
				I &I &Z^{\left(\delta_0\otimes I_{n_B}\right)} &Y^{\left(I_{m_2}\otimes{\widetilde{\delta}}_{0}^{T}\right)} &X^{\left(I_{m_2}\otimes{\widetilde{\delta}}_{-1}\right)}
			\end{bmatrix}
		\end{equation}
	\end{widetext}
	where $m_1$, $m_2$, $m_3$, $m_4$, $n_A$, $n_B$ are the dimensions of vector spaces $C_{-1}$, $C_1$, ${\widetilde{C}}_{-1}$, ${\widetilde{C}}_1$, $C_0$ and ${\widetilde{C}}_0$, respectively. The total number of qubits is
	\begin{equation}
		\label{4DXYZ code length}
		N=m_1m_4+m_1m_3+n_An_B+m_2m_4+m_2m_3
	\end{equation}
\end{definition}

One can see that the tensor-product structure of 4D XYZ product in Fig. \ref{4DXYZproduct} is the same as that of 4D homological product (see Fig. 4 in Ref. \cite{Campbell_2019}). Therefore, it can be considered as a variant of the standard 4D homological product of two CSS codes.

Careful readers may wonder whether the codes constructed via the 4D XYZ product can be transformed into CSS codes by finite-depth unitary circuits. In Appendix \ref{CSS version}, we affirmatively answer this question.

Next, we prove that the resulting stabilizer group in Eq. (\ref{4DXYZ stabilizer1}) is Abelian.
\begin{corollary}
	The stabilizer group of the 4D XYZ product code is Abelian.
\end{corollary}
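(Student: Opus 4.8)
The plan is to verify directly, in the binary symplectic picture, that every pair of rows of the stabilizer matrix $\mathcal{S}$ in Eq.~(\ref{4DXYZ stabilizer1}) commutes. I would represent each Pauli tensor block by the pair (its $X$-support, its $Z$-support) of binary matrices, $X^{H}\mapsto(H,\mathbf{0})$, $Z^{H}\mapsto(\mathbf{0},H)$, $Y^{H}\mapsto(H,H)$, $I\mapsto(\mathbf{0},\mathbf{0})$. Since the qubits split into the five blocks $A,B,C,D,E$, every generator in a row-block $R_i$ commutes with every generator in a row-block $R_j$ iff the $\mathbb{F}_2$-matrix
\begin{equation}
	M_{ij}=\sum_{c}\left(X_i^{(c)}(Z_j^{(c)})^{T}+Z_i^{(c)}(X_j^{(c)})^{T}\right)
\end{equation}
vanishes, where $c$ runs over the five qubit blocks and $X_i^{(c)},Z_i^{(c)}$ are the $X$- and $Z$-parts of the block of $R_i$ supported on $c$. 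So it suffices to check $M_{ij}=\mathbf 0$ for the $\binom{4}{2}=6$ pairs with $i\neq j$ from $\{S,T,U,V\}$ together with the four diagonal cases $i=j$.

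The only algebraic inputs are the mixed-product rule $(A\otimes B)(C\otimes D)^{T}=(AC^{T})\otimes(BD^{T})$ and the characteristic-two identity $2M=\mathbf 0$. The diagonal cases are immediate: within one row-block each block is $X$-only or $Z$-only, contributing nothing to $M_{ii}$, or else a $Y$-block, contributing $HH^{T}+HH^{T}=2HH^{T}=\mathbf 0$. Among the six off-diagonal pairs, reading off Eq.~(\ref{4DXYZ stabilizer1}) shows that $(S,V)$ and $(T,U)$ have overlapping support only on block $C$, where both carry the same Pauli letter ($Z$ for the former, $X$ for the latter), so they commute trivially. For the remaining pairs $(S,T)$, $(S,U)$, $(T,V)$, $(U,V)$ the Pauli letters clash on exactly two qubit blocks, and after the mixed-product rule the two resulting contributions evaluate to the \emph{same} matrix --- e.g.\ for $(S,T)$ both block $A$ and block $C$ yield $\delta_{-1}^{T}\otimes\widetilde{\delta}_0^{T}$ --- so their sum is again $2(\cdot)=\mathbf 0$. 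Notably, just as the 3D XYZ product can be built from three arbitrary classical codes, the chain-complex relations $\delta_0\delta_{-1}=0$ and $\widetilde{\delta}_0\widetilde{\delta}_{-1}=0$ are never used here: commutativity is forced solely by the bipartite ``checkerboard'' placement of the Pauli letters and by working mod $2$.

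The remainder is finite bookkeeping over the $6+4$ cases, and the only delicate point is keeping the tensor factors, their left/right order, and the transposes consistent across the five qubit blocks, since the dimensions $m_1,m_2,m_3,m_4,n_A,n_B$ are all distinct and a single misplaced transpose would destroy the pairing of the clashing contributions. In particular one should first reconcile the printed entries of Eq.~(\ref{4DXYZ stabilizer1}) with the block sizes in Eq.~(\ref{4DXYZ code length}) --- for instance, the blocks of $R_V$ on qubit blocks $D$ and $E$ must be the transposes of $\widetilde{\delta}_0$ and $\widetilde{\delta}_{-1}$ of $\mathfrak{C}_2$ for the dimensions to agree --- after which every $M_{ij}$ is visibly either trivially zero or twice a matrix over $\mathbb{F}_2$, and hence the stabilizer group is Abelian.
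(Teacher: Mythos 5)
Your proposal is correct and follows essentially the same route as the paper: the paper also writes the generators in symplectic form and verifies $H_xH_z^T+H_zH_x^T=2(\cdot)=\mathbf{0}$ over $\mathbb{F}_2$, with every block of the product appearing twice (so the chain-complex relations $\delta_0\delta_{-1}=0$, $\widetilde{\delta}_0\widetilde{\delta}_{-1}=0$ are indeed never needed), the only difference being that the paper carries out the computation as a single block-matrix identity rather than pair by pair over the row-blocks $S,T,U,V$.
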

\begin{proof}
	To prove that the stabilizer group of the the 4D XYZ product code is Abelian, we need to prove that any pair of stabilizers commute.  Given the stabilizer matrix $\mathcal{S}$ of the 4D XYZ product code as shown in Eq. (\ref{4DXYZ stabilizer1}), the corresponding parity-check matrix $H$ in symplectic representation is $H=\left(H_x \mid H_z\right)$, where
	\begin{widetext}
		\begin{equation}
			\label{Hx}
			H_x =\begin{bmatrix}
				I_{m_1}\otimes{\widetilde{\delta}}_0^T &I_{m_1}\otimes{\widetilde{\delta}}_{-1} &\textbf{0} &\textbf{0} &\textbf{0}\\
				\delta_{-1}\otimes I_{m_4} &\textbf{0} &I_{n_A}\otimes{\widetilde{\delta}}_0 &\textbf{0} &\textbf{0}\\
				\textbf{0} &\textbf{0} &I_{n_A}\otimes{\widetilde{\delta}}_{-1}^T &\textbf{0} &\delta_0^T\otimes I_{m_3}\\
				\textbf{0} &\textbf{0} &\textbf{0} &I_{m_2}\otimes{\widetilde{\delta}}_{0}^{T} &I_{m_2}\otimes{\widetilde{\delta}}_{-1}
			\end{bmatrix}
		\end{equation}
	\end{widetext}
	and
	\begin{widetext}
		\begin{equation}
			\label{Hz}
			H_z = \begin{bmatrix}
				\textbf{0} &I_{m_1}\otimes{\widetilde{\delta}}_{-1} &\delta_{-1}^T\otimes I_{n_B} &\textbf{0} &\textbf{0}\\
				\delta_{-1}\otimes I_{m_4} &\textbf{0} &\textbf{0} &\delta_0^T\otimes I_{m_4} &\textbf{0}\\
				\textbf{0} &\delta_{-1}\otimes I_{m_3} &\textbf{0} &\textbf{0} &\delta_0^T\otimes I_{m_3}\\
				\textbf{0} &\textbf{0} &\delta_0\otimes I_{n_B} &I_{m_2}\otimes{\widetilde{\delta}}_{0}^{T} &\textbf{0}
			\end{bmatrix}
		\end{equation}
	\end{widetext}
Then we have
	\begin{widetext}
		\begin{equation}
			H_xH_z^T+H_zH_x^T =2\begin{bmatrix}
				{\widetilde{\delta}}_{-1}{\widetilde{\delta}}_{-1}^T & \delta_{-1}^T\otimes{\widetilde{\delta}}_0^T &\delta_{-1}^T\otimes{\widetilde{\delta}}_{-1} &\textbf{0}\\
				\delta_{-1}\otimes{\widetilde{\delta}}_0 &\delta_{-1}\delta_{-1}^T &\textbf{0} &\delta_0^T\otimes{\widetilde{\delta}}_0^T\\
				\delta_{-1}\otimes{\widetilde{\delta}}_{-1}^T &\textbf{0} &\delta_0^T\delta_0 &\delta_0^T\otimes{\widetilde{\delta}}_{-1}^T\\
				\textbf{0} &\delta_0\otimes{\widetilde{\delta}}_0^T &\delta_0\otimes{\widetilde{\delta}}_{-1} &{\widetilde{\delta}}_0^T{\widetilde{\delta}}_0
			\end{bmatrix}=\textbf{0}
		\end{equation}
	\end{widetext}
	which means any pair of stabilizers commute and the proof is completed.
\end{proof}

Next, we construct a 4D XYZ product code from two quantum concatenated codes. Each concatenated code is derived from a pair of repetition codes with respective block lengths $\left(n_1,\ n_2\right)$ and $\left(n_3,\ n_4\right)$, where $n_1$, $n_2$, $n_3$ and $n_4$ are all odd integers.

\subsection {The code dimension of 4D XYZ product codes}
\label{Dimension}
In this section, we compute the code dimension of 4D XYZ product codes. Similar to Ref. \cite{leverrier2022quantum}, this problem reduces to finding the number of independent stabilizer generators. First, we give a general solution to this problem in \textbf{Theorem} \ref{The dimension of 4D XYZ product code}. Second, we consider two instances of the 4D XYZ product. The first one is constructed from two concatenated codes. Each concatenated code is derived from a pair of repetition codes with respective block lengths $\left(n_1,\ n_2\right)$ and $\left(n_3,\ n_4\right)$, where $n_1$, $n_2$, $n_3$ and $n_4$ are all odd integers. We prove its code dimension is $1$ in textbf{Corollary} \ref{dimension of 4D XYZ concatenated}. The second one is constructed from two hypergraph product codes, which are also obtained from two pairs of repetition codes with block lengths $\left(n_1,\ n_2\right)$ and $\left(n_3,\ n_4\right)$, and its code dimension is proven to be $8\gcd(n_1, n_2)\gcd(n_3, n_4)$ in \textbf{Corollary} \ref{dimension of the 4D Chamon code}. We refer to these two codes as the 4D XYZ concatenated code and the 4D Chamon code (the rationale for this name is discussed in Sect. \ref{4D Chamon code}), respectively.

For convenience, we rewrite Eq. (\ref{4DXYZ stabilizer1}) as
\begin{widetext}
	\begin{equation}
		\label{4DXYZ stabilizer2}
		\mathcal{S} =\begin{bmatrix}
			S\\
			T\\
			U\\
			V
		\end{bmatrix} = \begin{bmatrix}
			X^{\left(I_{m_1}\otimes H_{x_2}^T\right)} &Y^{\left(I_{m_1}\otimes H_{z_2}^T\right)} &Z^{\left(H_{z_1}\otimes I_{n_B}\right)} &I &I\\
			Y^{\left(H_{z_1}^T\otimes I_{m_4}\right)} &I &X^{\left(I_{n_A}\otimes H_{x_2}\right)} &Z^{\left(H_{x_1}^T\otimes I_{m_4}\right)} &I\\
			I &Z^{\left(H_{z_1}^T\otimes I_{m_3}\right)} &X^{\left(I_{n_A}\otimes H_{z_2}\right)} &I &Y^{\left(H_{x_1}^T\otimes I_{m_3}\right)}\\
			I &I &Z^{\left(H_{x1}\otimes I_{n_B}\right)} &Y^{\left(I_{m_2}\otimes H_{x_2}^T\right)} &X^{\left(I_{m_2}\otimes H_{z_2}^T\right)}
		\end{bmatrix}
	\end{equation}
\end{widetext}
where $H_{x_1}=\delta_0$ ($H_{x_2}={\widetilde{\delta}}_0$) and $H_{z_1}=\delta_{-1}^T$ ($H_{z_2}={\widetilde{\delta}}_{-1}^T$) are the $X$-type and $Z$-type parity-check matrices of code $C\left(\mathfrak{C}_1\right)$ $\left(C\left(\mathfrak{C}_2\right)\right)$, respectively. Recall that the dimensions of $H_{x_1}$, $H_{z_1}$, $H_{x_2}$ and $H_{z_2}$ are $m_1\times n_A$, $m_2\times n_A$, $m_3\times n_B$ and $m_4\times n_B$, respectively. 

For the remainder of this paper, bold lowercase letters represent column vectors.
\begin{theorem}[\textbf{The code dimension of 4D XYZ product code}]
	\label{The dimension of 4D XYZ product code}
	The code dimension $k$ of 4D XYZ product code is $\left(n_A-m_1-m_2\right)\left(n_B-m_3-m_4\right)+k_{SV}+k_{TU}$, where $k_{SV}=\dim\left(\ker\left(\left[H_{z_1}^T,H_{x_1}^T\right]\right)\otimes \ker\left(\begin{bmatrix}H_{x_2}\\H_{z_2}\end{bmatrix}\right)\right)$ is the number of independent solutions $\begin{bmatrix}\textbf{s}\\ \textbf{v}\end{bmatrix}$ of
	\begin{equation}
		\label{sv}
		\begin{aligned}
			&I_{m_1}\otimes H_{x_2}\textbf{s}=I_{m_1}\otimes H_{z_2}\textbf{s}=\textbf{0}\\
			&I_{m_2}\otimes H_{x_2}\textbf{v}=I_{m_2}\otimes H_{z_2}\textbf{v}=\textbf{0}\\                      
			&H_{z_1}^T\otimes I_{n_B}\textbf{s}+H_{x_1}^T\otimes I_{n_B}\textbf{v}=\textbf{0}
		\end{aligned}		
	\end{equation}
	and $k_{TU}=\dim\left(\ker\left(\begin{bmatrix}H_{x_1}\\H_{z_1}\end{bmatrix}\right)\otimes \ker\left(\left[H_{z_2}^T,H_{x_2}^T\right]\right) \right)$ is the number of independent solutions $\begin{bmatrix}\textbf{t}\\ \textbf{u}\end{bmatrix}$ of
	\begin{equation}
		\label{tu}
		\begin{aligned}
			&H_{z_1}\otimes I_{m_4}\textbf{t}=H_{x_1}\otimes I_{m_4}\textbf{t}=\textbf{0}\\
			&H_{z_1}\otimes I_{m_3}\textbf{u}=H_{z_1}\otimes I_{m_3}\textbf{u}=\textbf{0}\\                      
			&I_{n_A}\otimes H_{x_2}^T \textbf{u}=I_{n_A}\otimes H_{z_2}^T\textbf{u}=\textbf{0}
		\end{aligned}		
	\end{equation}
\end{theorem}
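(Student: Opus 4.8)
The plan is to turn the computation of $k$ into a rank computation over $\mathbb{F}_2$ and then dissect the space of linear relations among the stabilizer generators. Let $H=\left(H_x\mid H_z\right)$ be the symplectic parity-check matrix from the proof of the preceding corollary, and let $R=|S|+|T|+|U|+|V|$ be the total number of rows of $\mathcal{S}$; from Definition~\ref{definition4DXYZ} and the block sizes one reads off $R=n_B(m_1+m_2)+n_A(m_3+m_4)$. Because the stabilizer group is Abelian, $k=N-\operatorname{rank}_{\mathbb{F}_2}H$, and $\operatorname{rank}_{\mathbb{F}_2}H=R-\rho$, where $\rho$ is the dimension of the left kernel of $H$ (equivalently, the space of $\mathbb{F}_2$-linear dependencies among its rows, i.e. tuples $(\textbf{s},\textbf{t},\textbf{u},\textbf{v})$ for which the corresponding product of stabilizer generators is trivial). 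Hence $k=N-R+\rho$, and since a one-line computation gives $N-R=(n_A-m_1-m_2)(n_B-m_3-m_4)$, the theorem reduces to showing $\rho=k_{SV}+k_{TU}$.

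To evaluate $\rho$, I would expand the left-kernel conditions column-block by column-block against the explicit forms of $H_x$ and $H_z$ in Eqs.~(\ref{Hx})--(\ref{Hz}), obtaining ten tensor-structured linear equations in $(\textbf{s},\textbf{t},\textbf{u},\textbf{v})$. The key structural fact is that these can be solved by back-substitution so that each of $\textbf{s},\textbf{t},\textbf{u},\textbf{v}$ is driven into an intersection of kernels: the blocks of $H_z$ touched by only one generator type immediately give $(H_{z_1}\otimes I)\textbf{t}=\textbf{0}$ and $(H_{x_1}\otimes I)\textbf{u}=\textbf{0}$, and feeding these into the remaining blocks successively forces $(I\otimes H_{x_2})\textbf{s}=(I\otimes H_{z_2})\textbf{s}=\textbf{0}$, $(I\otimes H_{x_2})\textbf{v}=(I\otimes H_{z_2})\textbf{v}=\textbf{0}$, $(H_{x_1}\otimes I)\textbf{t}=(H_{z_1}\otimes I)\textbf{t}=\textbf{0}$ and $(H_{x_1}\otimes I)\textbf{u}=(H_{z_1}\otimes I)\textbf{u}=\textbf{0}$; after that the only surviving equations are one coupling $\textbf{t}$ and $\textbf{u}$ (on the $C$-block of $H_x$) and one coupling $\textbf{s}$ and $\textbf{v}$ (on the $C$-block of $H_z$). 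Because the \emph{checkerboard} placement of the $X,Y,Z$ entries in Eq.~(\ref{4DXYZ stabilizer2}) means no block couples $\{\textbf{s},\textbf{v}\}$ with $\{\textbf{t},\textbf{u}\}$, the relation space splits as a direct sum whose $(\textbf{s},\textbf{v})$-summand is exactly the solution space of the system~(\ref{sv}) and whose $(\textbf{t},\textbf{u})$-summand is exactly that of~(\ref{tu}); hence $\rho=k_{SV}+k_{TU}$.

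It remains to see that these summands have the stated tensor-product dimensions, which follows from the elementary identity $\ker(M_1\otimes I)\cap\ker(I\otimes M_2)=\ker M_1\otimes\ker M_2$. Stacking $\textbf{s}$ and $\textbf{v}$ into a single vector $\xi\in\mathbb{F}_2^{m_1+m_2}\otimes\mathbb{F}_2^{n_B}$, the four kernel conditions on $\textbf{s},\textbf{v}$ combine into $\bigl(I\otimes\begin{bmatrix}H_{x_2}\\ H_{z_2}\end{bmatrix}\bigr)\xi=\textbf{0}$, while the surviving coupling equation becomes $\bigl([\,H_{z_1}^{T},\,H_{x_1}^{T}\,]\otimes I\bigr)\xi=\textbf{0}$; thus the solution space is $\ker[\,H_{z_1}^{T},H_{x_1}^{T}\,]\otimes\ker\begin{bmatrix}H_{x_2}\\ H_{z_2}\end{bmatrix}$, of dimension $k_{SV}$, and $k_{TU}$ follows by swapping $\mathfrak{C}_1$ and $\mathfrak{C}_2$. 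Assembling the pieces, $k=N-R+\rho=(n_A-m_1-m_2)(n_B-m_3-m_4)+k_{SV}+k_{TU}$.

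I expect the block-by-block back-substitution to be the main obstacle: it demands careful bookkeeping of which tensor factor each of $H_{x_1},H_{z_1},H_{x_2},H_{z_2}$ acts on and of the transposes introduced in passing from $H_x,H_z$ to the left-kernel conditions, and one has to check that the equations on the outer blocks $A,B,D,E$ are automatically satisfied once all kernel conditions hold, so that they impose nothing beyond the two residual coupling equations. The other ingredients — the value of $R$, the identity $N-R=(n_A-m_1-m_2)(n_B-m_3-m_4)$, and the tensor-kernel identity — are routine.
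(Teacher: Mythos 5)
Your proposal is correct and follows essentially the same route as the paper: both reduce the theorem to counting the linear dependencies among the rows of the stabilizer matrix, observe that these dependencies decouple into an $(\textbf{s},\textbf{v})$ system and a $(\textbf{t},\textbf{u})$ system (which are exactly Eqs.~(\ref{sv}) and~(\ref{tu})), and evaluate each via the tensor-kernel identity. Your explicit block-by-block back-substitution is in fact a slightly more careful justification of the decoupling than the paper's assertion that no generator in $S\cup V$ is a product of generators in $T\cup U$, but the argument is the same.
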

\begin{proof}
	The total number $m$ of stabilizer generators in Eq. (\ref{4DXYZ stabilizer2}) is $m=m_1n_B+m_2n_B+m_3n_A+m_4n_A$. Thus, according to Eq. (\ref{4DXYZ code length}), the 4D XYZ product code encodes at least $N-m=\left(n_A-m_1-m_2\right)\left(n_B-m_3-m_4\right)$ logical qubits. However, we should notice that stabilizer generators in Eq. (\ref{4DXYZ stabilizer2}) may not be independent of each other. Thus, we should find out the number of independent stabilizer generators in Eq. (\ref{4DXYZ stabilizer2}).
	
	Observing Eq. (\ref{4DXYZ stabilizer2}), one can see that the stabilizer generators are divided into $S$, $T$, $U$, $V$ four parts, and any stabilizer in $S$ or $V$ cannot be generated by the product of some stabilizers in $T$ and $U$. Thus, we first consider the independence of the stabilizers in $S$ and $V$, with a similar analysis applied to the stabilizers in T and U.
	
	Considering stabilizers in $S$ and $V$ simultaneously, one can see that the product of some stabilizers in $S$ might be equal to the product of some stabilizers in $V$, namely,
	\begin{equation}
		\begin{aligned}
			&\left[I_{m_1}\otimes H_{x_2}^T,\ I_{m_1}\otimes H_{z_2}^T,\ H_{z_1}\otimes I_{n_B},\ \textbf{0},\ \textbf{0}\right]^T\textbf{\emph{s}}\\
			&=\left[\textbf{0},\ \textbf{0},\ H_{x_1}\otimes I_{n_B},\ I_{m_2}\otimes H_{x_2}^T,\ I_{m_2}\otimes H_{z_2}^T\right]^T\textbf{\emph{v}}
		\end{aligned}
	\end{equation}
	Then we have Eq. (\ref{sv}).
	
	Our goal is to compute $k_{SV}$ and $k_{TU}$. Observing that equation systems (\ref{sv}) and (\ref{tu}) are similar, thus we only need to compute $k_{SV}$, and $k_{TU}$ will follow a similar fashion.
	
	It can be seen that the solution of $I_{m_1}\otimes H_{x_2}\textbf{\emph{s}}=I_{m_1}\otimes H_{z_2}\textbf{\emph{s}}=\mathbf{0}$ is $\textbf{\emph{s}}\in\mathcal{C}_{m_1}\otimes \ker\left(\begin{bmatrix}H_{x_2}\\H_{z_2}\end{bmatrix}\right)$ and the solution of $I_{m_2}\otimes H_{x_2}\textbf{\emph{v}}=I_{m_2}\otimes H_{z_2}\textbf{\emph{v}}=\mathbf{0}$ is $\textbf{\emph{v}}\in\mathcal{C}_{m_2}\otimes \ker\left(\begin{bmatrix}H_{x_2}\\H_{z_2}\end{bmatrix}\right)$. Similarly, the solution of $H_{z_1}^T\otimes I_{n_B}\textbf{\emph{s}}=H_{x_1}^T\otimes I_{n_B}\textbf{\emph{v}}$ is $\begin{bmatrix}\textbf{\emph{s}}\\ \textbf{\emph{v}}\end{bmatrix}\in \ker\left(\left[H_{z_1}^T,H_{x_1}^T\right]\right)\otimes \mathcal{C}_{n_B}$. Thus, the solution of equation system (\ref{sv}) is $\begin{bmatrix}\textbf{\emph{s}}\\ \textbf{\emph{v}}\end{bmatrix}\in \ker\left(\left[H_{z_1}^T,H_{x_1}^T\right]\right)\otimes \ker\left(\begin{bmatrix}H_{x_2}\\H_{z_2}\end{bmatrix}\right)$, and the dimension of $\ker\left(\left[H_{z_1}^T,H_{x_1}^T\right]\right)\otimes \ker\left(\begin{bmatrix}H_{x_2}\\H_{z_2}\end{bmatrix}\right)$ is the number of independent solutions $\begin{bmatrix}\textbf{\emph{s}}\\ \textbf{\emph{v}}\end{bmatrix}$ of equation system (\ref{sv}).
	
	Notice that
	\begin{equation}
		k_{SV}=\dim\left(\ker\left(\left[H_{z_1}^T,H_{x_1}^T\right]\right)\otimes \ker\left(\begin{bmatrix}H_{x_2}\\H_{z_2}\end{bmatrix}\right)\right)
	\end{equation}
	and according to the theory of linear algebra, the number of independent stabilizers in $S$ and $V$, $r_{SV}$, is
	\begin{equation}
		r_{SV}=m_1n_B+m_2n_B-k_{SV}
	\end{equation}
	
	Similarly, notice that $k_{TU}=\dim\left(\ker\left(\begin{bmatrix}H_{x_1}\\H_{z_1}\end{bmatrix}\right)\otimes \ker\left(\left[H_{z_2}^T,H_{x_2}^T\right]\right) \right)$ is the number of independent solutions $\begin{bmatrix}\textbf{\emph{t}}\\ \textbf{\emph{u}}\end{bmatrix}$ of equation system (\ref{tu}), thus the number of independent stabilizers in $T$ and $U$, $r_{TU}$, is
	\begin{equation}
		r_{TU}=m_3n_A+m_4n_A-k_{TU}
	\end{equation}
	
	Thus, the total number of independent stabilizers in Eq. (\ref{4DXYZ stabilizer2}) is $r_{SV}+r_{TU}$, and the code dimension of 4D XYZ product code is
	\begin{equation}
		\label{dimension equation}
		\begin{aligned}
			&N-r_{SV}-r_{TU}\\
			&=\left(n_A-m_1-m_2\right)\left(n_B-m_3-m_4\right)+k_{SV}+k_{TU}
		\end{aligned}
	\end{equation}
\end{proof}

Next, we construct a 4D XYZ product code from two quantum concatenated codes. Each concatenated code is derived from a pair of repetition codes with respective block lengths $\left(n_1,\ n_2\right)$ and $\left(n_3,\ n_4\right)$, where $n_1$, $n_2$, $n_3$ and $n_4$ are all odd integers. To characterize this structure, let $C_1$ denote the length-$n_1$ repetition code (outer code) with parity-check matrix  $H_1$ of size $(n_1-1)\times n_1$, and $C_2$ the length-$n_2$ repetition code (inner code) with parity-check matrix $H_2$ of size $(n_2-1)\times n_2$. If we utilize $C_1$ as the phase-flip code which can only correct Pauli $Z$ errors and $C_2$ as the bit-flip code which can only correct Pauli $X$ errors, the corresponding $X$-type and $Z$-type parity-check matrices $H_x$ and $H_z$ of the concatenated code obtained from $C_1$ and $C_2$ are then given by:
\begin{equation}
	\label{concatenated code Hx}
	H_x = \begin{bmatrix}
		\overbrace{1,\cdots,1}^{n_2}, &\overbrace{1,\cdots,1}^{n_2},  &\overbrace{0,\cdots,0}^{(n_1 - 2)n_2}\\
		\vdots&\ddots&\vdots\\
		\underbrace{0,\cdots,0}_{(n_1 - 2)n_2}, &\underbrace{1,\cdots,1}_{n_2},  &\underbrace{1,\cdots,1}_{n_2}
	\end{bmatrix}
\end{equation}
\begin{equation}
	\label{concatenated code Hz}
	H_z=\begin{bmatrix}H_1&\cdots&\textbf{0}\\\vdots&\ddots&\vdots\\ \textbf{0}&\cdots&H_1\\\end{bmatrix}
\end{equation}

In Appendix \ref{Code concatenation}, we provide more details about code concatenation.
\begin{corollary}[\textbf{The code dimension of the 4D XYZ product concatenated code}]
	\label{dimension of 4D XYZ concatenated}
	The code dimension $k$ of the 4D XYZ product of two concatenated codes, each of which is derived from a pair of repetition codes with respective block lengths $\left(n_1,\ n_2\right)$ and $\left(n_3,\ n_4\right)$, is $1$. Here $n_1$, $n_2$, $n_3$ and $n_4$ are all odd integers.
\end{corollary}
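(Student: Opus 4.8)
The plan is to feed the two constituent CSS codes into Theorem~\ref{The dimension of 4D XYZ product code} and show that each of its three contributions is forced: the ``Euler term'' $(n_A-m_1-m_2)(n_B-m_3-m_4)$ equals $1\cdot 1=1$, while the correction terms $k_{SV}$ and $k_{TU}$ both vanish, so that $k=1$.

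First I would pin down the parameters from Eqs.~(\ref{concatenated code Hx})--(\ref{concatenated code Hz}). The concatenated code built from the repetition pair $(n_1,n_2)$ (outer length $n_1$, inner length $n_2$) is an $[[n_1n_2,1]]$ CSS code: $n_A=n_1n_2$, the $X$-check $H_{x_1}=H_1\otimes\mathbf{1}_{n_2}^{T}$ has $m_1=n_1-1$ rows, and the block-diagonal $Z$-check $H_{z_1}$ (with $n_1$ copies of the inner-code check, each of size $(n_2-1)\times n_2$) has $m_2=n_1(n_2-1)$ rows, so $n_A-m_1-m_2=n_1n_2-(n_1-1)-n_1(n_2-1)=1$; symmetrically $n_B-m_3-m_4=1$ for the pair $(n_3,n_4)$. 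The substantive point --- the one that uses oddness --- is that these $m_1+m_2=n_A-1$ checks are linearly \emph{independent}, i.e. the concatenated code really has dimension $1$. I would verify this by a row-space count: $\mathrm{rank}\begin{bmatrix}H_{x_1}\\H_{z_1}\end{bmatrix}=\mathrm{rank}(H_{x_1})+\mathrm{rank}(H_{z_1})-\dim\big(\mathrm{rowsp}(H_{x_1})\cap\mathrm{rowsp}(H_{z_1})\big)$, where $\mathrm{rank}(H_{x_1})=n_1-1$, $\mathrm{rank}(H_{z_1})=n_1(n_2-1)$, the row space of $H_{z_1}$ is the set of vectors that are even-weight on each of the $n_1$ length-$n_2$ blocks, and every nonzero vector in the row space of $H_{x_1}$ is all-ones on at least one such block; since $n_2$ is odd such a block has odd weight, so the two row spaces meet only in $\mathbf{0}$ and $\mathrm{rank}\begin{bmatrix}H_{x_1}\\H_{z_1}\end{bmatrix}=(n_1-1)+n_1(n_2-1)=n_1n_2-1$. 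The identical argument (now using $n_4$ odd) handles the second code.

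With this in hand, Theorem~\ref{The dimension of 4D XYZ product code} gives $k=1+k_{SV}+k_{TU}$, and it remains only to show $k_{SV}=k_{TU}=0$. Using $\dim(V\otimes W)=\dim V\cdot\dim W$, write $k_{SV}=\dim\ker\big([H_{z_1}^{T},H_{x_1}^{T}]\big)\cdot\dim\ker\begin{bmatrix}H_{x_2}\\H_{z_2}\end{bmatrix}$. But $[H_{z_1}^{T},H_{x_1}^{T}]$ is precisely $\big(\begin{bmatrix}H_{z_1}\\H_{x_1}\end{bmatrix}\big)^{T}$, an $n_A\times(m_1+m_2)$ matrix which by the previous step has rank $m_1+m_2$, hence full column rank, so its kernel is $\{\mathbf{0}\}$ and $k_{SV}=0$. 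Symmetrically $k_{TU}=\dim\ker\begin{bmatrix}H_{x_1}\\H_{z_1}\end{bmatrix}\cdot\dim\ker\big([H_{z_2}^{T},H_{x_2}^{T}]\big)$, whose second factor vanishes because $\begin{bmatrix}H_{z_2}\\H_{x_2}\end{bmatrix}$ has full row rank $m_3+m_4$. Hence $k=1+0+0=1$.

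The only genuine work is the rank computation for the concatenated-code check matrix; everything else is bookkeeping on top of Theorem~\ref{The dimension of 4D XYZ product code}. The delicate step is showing $\mathrm{rowsp}(H_{x_1})\cap\mathrm{rowsp}(H_{z_1})=\{\mathbf{0}\}$, which is exactly where the odd block lengths enter (an all-ones block of odd length has odd weight and so cannot lie in the $Z$-check row space); an alternative is to write the $n_A-1$ checks out explicitly and show that no nontrivial $\mathbb{F}_2$-combination can be zero. I would also double-check the precise shapes of $H_{x_i}$ and $H_{z_i}$ as displayed in Eqs.~(\ref{concatenated code Hx})--(\ref{concatenated code Hz}) --- in particular that $H_{z_1}$ is block-diagonal with $n_1$ blocks of size $(n_2-1)\times n_2$ --- since the count $n_A-m_1-m_2=1$ relies on it.
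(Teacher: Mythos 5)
Your proposal is correct and follows essentially the same route as the paper: plug the two concatenated codes into Theorem~\ref{The dimension of 4D XYZ product code}, observe $(n_A-m_1-m_2)(n_B-m_3-m_4)=1$, and kill $k_{SV}$ and $k_{TU}$ by showing that $\ker\left(\left[H_{z_i}^T,H_{x_i}^T\right]\right)=\{\mathbf{0}\}$, i.e.\ that the $n-1$ checks of each concatenated code are independent. The paper asserts this via the dual statement $\dim\ker\begin{bmatrix}H_x\\H_z\end{bmatrix}=1$ (the only undetectable $Y$-type error is the all-$Y$ logical), labelling it ``easy to prove''; your explicit row-space intersection argument, which is also exactly where the oddness of $n_2$ and $n_4$ enters, supplies the justification the paper omits.
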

\begin{proof}
	According to Theorem \ref{The dimension of 4D XYZ product code} and $\left(n_A-m_1-m_2\right)\left(n_B-m_3-m_4\right)=1$, the code dimension $k$ of the 4D XYZ product concatenated code is $1+k_{SV}+k_{TU}$, where $k_{SV}=\dim\left(\ker\left(\left[H_{z_1}^T,H_{x_1}^T\right]\right)\otimes \ker\left(\begin{bmatrix}H_{x_2}\\H_{z_2}\end{bmatrix}\right)\right)$ and $k_{TU}=\dim\left(\ker\left(\begin{bmatrix}H_{x_1}\\H_{z_1}\end{bmatrix}\right)\otimes \ker\left(\left[H_{z_2}^T,H_{x_2}^T\right]\right) \right)$.
	
	Observing that $\dim\left(\ker\left(\begin{bmatrix}H_{x}\\H_{z}\end{bmatrix}\right)\right)$ can be seen as the number of independent solutions $\hat{\textbf{\emph{e}}}=\left(\textbf{\emph{e}},\textbf{\emph{e}}\right)^T$ of 
	\begin{equation}
		\begin{pmatrix}
			H_x &\textbf{0}\\
			\textbf{0} &H_z
		\end{pmatrix}\left(\textbf{\emph{e}},\textbf{\emph{e}}\right)^T=\begin{pmatrix}
			\textbf{0}\\
			\textbf{0}
		\end{pmatrix}
	\end{equation}
	Here, the solution $\hat{\textbf{\emph{e}}}$ can be interpreted as the symplectic representation of undetectable $Y$-type errors, which include $Y$-type stabilizers and logical operators of the corresponding code.
	
	For a concatenated code $C$ obtained from a pair of repetition codes, whose $X$-type and $Z$-type parity-check matrices are $H_x$ and $H_z$ as shown in Eq. (\ref{concatenated code Hx}) and Eq. (\ref{concatenated code Hz}) respectively, it is easy to prove that $\dim\left(\ker\left(\begin{bmatrix}H_{x}\\H_{z}\end{bmatrix}\right)\right)=1$ and $\dim\left(\ker\left(\left[H_{z}^T,H_{x}^T\right]\right)\right)=n-1-\dim\left(row\left(\begin{bmatrix}H_{x}\\H_{z}\end{bmatrix}\right)\right)=n-1-(n-1)=0$. The reason is that the undetectable $Y$-type error of code $C$ only contains one $Y$-type logical operator whose weight is code length. Thus, the code dimension $k$ of 4D XYZ product concatenated code is 1.
\end{proof}

Next, we consider the second instance of the 4D XYZ product, the 4D Chamon code, and compute its code dimension. Before computing it, we first prove \textbf{Lemma} \ref{2DToricCode} which will be used later.

\begin{lemma}[\textbf{The number of independent undetectable $Y$-type errors of the two-dimensional (2D) toric code}]
	\label{2DToricCode}
	The number of the independent $Y$-type stabilizers and logical operators of the 2D toric code is $2\gcd\left(j,k\right)$, where $j$ and $k$ are dimensions of the toric code lattice.
\end{lemma}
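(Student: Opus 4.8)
The plan is to realize this number as $\dim_{\mathbb{F}_2}\!\left(\ker H_x\cap\ker H_z\right)$ for the $j\times k$ toric code: an operator $Y^{\mathbf{e}}$ lies in the centralizer of the stabilizer group precisely when $\mathbf{e}$ has even overlap with the support of every $X$-check and every $Z$-check, i.e.\ $\mathbf{e}\in\ker H_x\cap\ker H_z$, which is exactly the quantity asked for (this is the same interpretation of $\dim(\ker\begin{bmatrix}H_x\\H_z\end{bmatrix})$ used for the concatenated code above). To compute it I would use that the $j\times k$ toric code is the hypergraph product of two cyclic repetition codes, whose $n\times n$ parity-check matrix is the circulant $H_n=I+P_n$ with $\ker H_n=\langle\mathbf{1}\rangle$. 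Identifying each of the two qubit blocks of the hypergraph product with the ring $R:=\mathbb{F}_2[x,y]/(x^j-1,\,y^k-1)$ — so that the two cyclic shifts act as multiplication by $x$ and by $y$ — an undetectable $Y$-error becomes a pair $(p,q)\in R^2$, and the two commutation conditions become $M\begin{bmatrix}p\\q\end{bmatrix}=\mathbf{0}$ for a $2\times2$ matrix $M$ over $R$ which, after a suitable choice of these identifications, equals $\begin{bmatrix}1+x & 1+y\\ x(1+y) & y(1+x)\end{bmatrix}$; a direct expansion gives $\det M=(x+y)(1+xy)$. The goal is then $\dim_{\mathbb{F}_2}\ker M=2\gcd(j,k)$.

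I would split $\ker M$ according to the two zero-divisors dividing $\det M$. First, one checks $\{(p,xp):(1+xy)p=\mathbf{0}\}\subseteq\ker M$, and this submodule is $\mathbb{F}_2$-isomorphic to the kernel of multiplication by $1+xy$. Second, forming the two combinations of the equations ``$x$ times the first plus the second'' and ``$y$ times the first plus the second'' shows that the linear map $\Phi:\ker M\to R^2$, $\Phi(p,q)=(xp+q,\ p+yq)$, takes values in $\ker(\text{mult.\ by }x+y)\oplus\ker(\text{mult.\ by }x+y)$; moreover the two components of $\Phi(p,q)$ are always equal (their sum is the left-hand side of the first equation, hence $\mathbf{0}$), and $\ker\Phi$ is exactly the submodule just described. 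Hence $\dim_{\mathbb{F}_2}\ker M=\dim_{\mathbb{F}_2}\ker(\text{mult.\ by }1+xy)+\dim_{\mathbb{F}_2}\operatorname{im}\Phi$, and it remains to show $\operatorname{im}\Phi$ is the whole ``diagonal'' copy $\{(u,u):(x+y)u=\mathbf{0}\}$.

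The $\gcd$ enters through the two kernels: multiplication by $1+xy$ is $I+P_j\otimes P_k$, whose kernel consists of the functions on $\mathbb{Z}_j\times\mathbb{Z}_k$ that are constant along the orbits of $(a,b)\mapsto(a+1,b+1)$ — the $\gcd(j,k)$ ``diagonal'' wrap-around lines of the torus — so it has dimension $\gcd(j,k)$; symmetrically, multiplication by $x+y$ is $P_j\otimes I+I\otimes P_k$, with kernel the functions constant along the $\gcd(j,k)$ ``anti-diagonal'' lines, again of dimension $\gcd(j,k)$. Granting the surjectivity of $\Phi$ onto the diagonal copy, both summands equal $\gcd(j,k)$ and the total is $2\gcd(j,k)$.

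I expect the surjectivity of $\Phi$ to be the real obstacle: this is exactly where the two parity checks genuinely interact — a single equation only yields the weaker relation $(x+y)(\,\cdot\,)=\mathbf{0}$ — and where the non-semisimplicity of $R$ for even $j$ or $k$ must be handled. The way I would close it is to show that, for every $u$ with $(x+y)u=\mathbf{0}$, the auxiliary system $\begin{bmatrix}x&1\\1&y\end{bmatrix}\begin{bmatrix}p\\q\end{bmatrix}=\begin{bmatrix}u\\u\end{bmatrix}$ is solvable (any solution of it automatically lies in $\ker M$ and maps to $(u,u)$ under $\Phi$); Gaussian elimination turns this into the single requirement $(1+x)u\in(1+xy)R$, i.e.\ that $(1+x)u$ has even sum along every diagonal line — a short computation, slightly different according to the parity of $\gcd(j,k)$, using that $u$ is a combination of indicator functions of anti-diagonal lines.
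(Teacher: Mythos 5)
Your proposal is correct, and it takes a genuinely different — and considerably more rigorous — route than the paper. The paper argues geometrically: the two $Y$-type logical operators are the full diagonals of the (square) lattice, and the $Y$-type stabilizers are obtained by selecting plaquettes and vertices periodically along diagonals, yielding $2\left[\gcd(j,k)-1\right]$ independent ones "by the topology of the toric code" and left--right symmetry; exhaustiveness and independence are asserted via figures rather than proved. You instead compute $\dim_{\mathbb{F}_2}(\ker H_x\cap\ker H_z)$ algebraically over $R=\mathbb{F}_2[x,y]/(x^j-1,y^k-1)$, and your key identities check out: the matrix $M$ you write down has $\det M=(x+y)(1+xy)$; the submodule $\{(p,xp):(1+xy)p=0\}$ does lie in $\ker M$ and is exactly $\ker\Phi$; the combinations "$x\cdot(\text{row 1})+(\text{row 2})$" and "$y\cdot(\text{row 1})+(\text{row 2})$" do factor as $(x+y)(xp+q)$ and $(x+y)(p+yq)$; and any solution of the auxiliary system $xp+q=p+yq=u$ with $(x+y)u=0$ does lie in $\ker M$, since $x(1+y)p+y(1+x)q=(1+xy)(p+q)=(1+y)u+(1+x)u=(x+y)u=0$. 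The surjectivity step you flag closes as you predict: $(1+xy)R$ is the set of functions with even sum on each diagonal orbit, and for $u$ an indicator of an anti-diagonal orbit $A$ the sum of $(1+x)u$ over a diagonal orbit $D$ is $|A\cap D|+|xA\cap D|=2\,jk/\gcd(j,k)^2$, which is always even — so in fact no case split on the parity of $\gcd(j,k)$ is needed. Your two orbit counts ($\gcd(j,k)$ diagonal and $\gcd(j,k)$ anti-diagonal lines) are the algebraic shadow of the paper's pictures, but your argument actually establishes both the upper and lower bounds, which the paper's does not; the price is the group-algebra machinery, whereas the paper's sketch is immediate to visualize.
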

\begin{proof}
	Without loss of generality, we assume that $j\le k$, and the proof consists of two parts:
	
	1. Proving that the number of independent $Y$-type logical operators of 2D toric code is 2.
	
	2. Proving that the number of independent $Y$-type stabilizers of 2D toric code is $2[\gcd\left(j,k\right)-1]$.
	
	Since 2D toric code encodes 2 logical qubits, there are 2 independent $Y$-type logical operators at most. When the toric code lattice is a square, these two independent $Y$-type logical operators are full diagonal of Pauli $Y$ operators from the upper right to the lower left and from the upper left to the lower right, respectively. In Fig. \ref{logicalY2DToricCode}, we give an example of $Y$-type logical operators of 2D toric code with lattice size $3\times3$.
	\begin{figure}[htbp]
		\centering
		\includegraphics[width=0.4\textwidth]{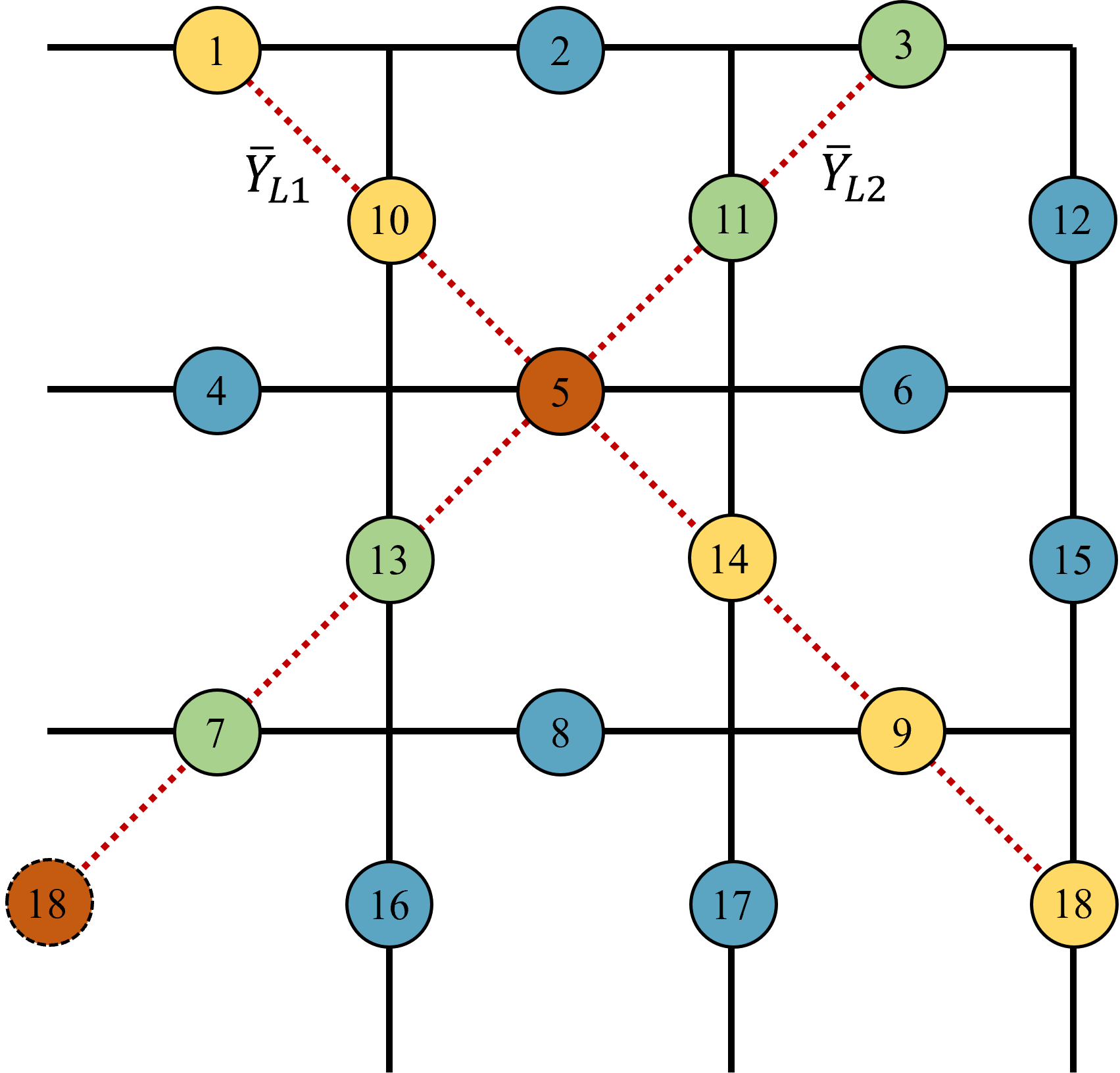}
		\caption{$Y$-type logical operators of 2D the toric code with lattice size $3\times3$.}
		\label{logicalY2DToricCode}
	\end{figure}
	
	Any $Y$-type stabilizer must be generated by the product of some $X$-type and $Z$-type stabilizers both of which act on the same set of qubits. According to the topology of toric code, this can only be realized by selecting plaquettes ($Z$-type stabilizers) and vertices ($X$-type stabilizers) along the diagonal of $j\times j$ square lattices periodically. Fig. \ref{stabilizerY2DToricCode} gives an example of $Y$-type stabilizer of toric code with size of $j=4,\ k=6$.
	\begin{figure}[htbp]
		\centering
		\includegraphics[width=0.48\textwidth]{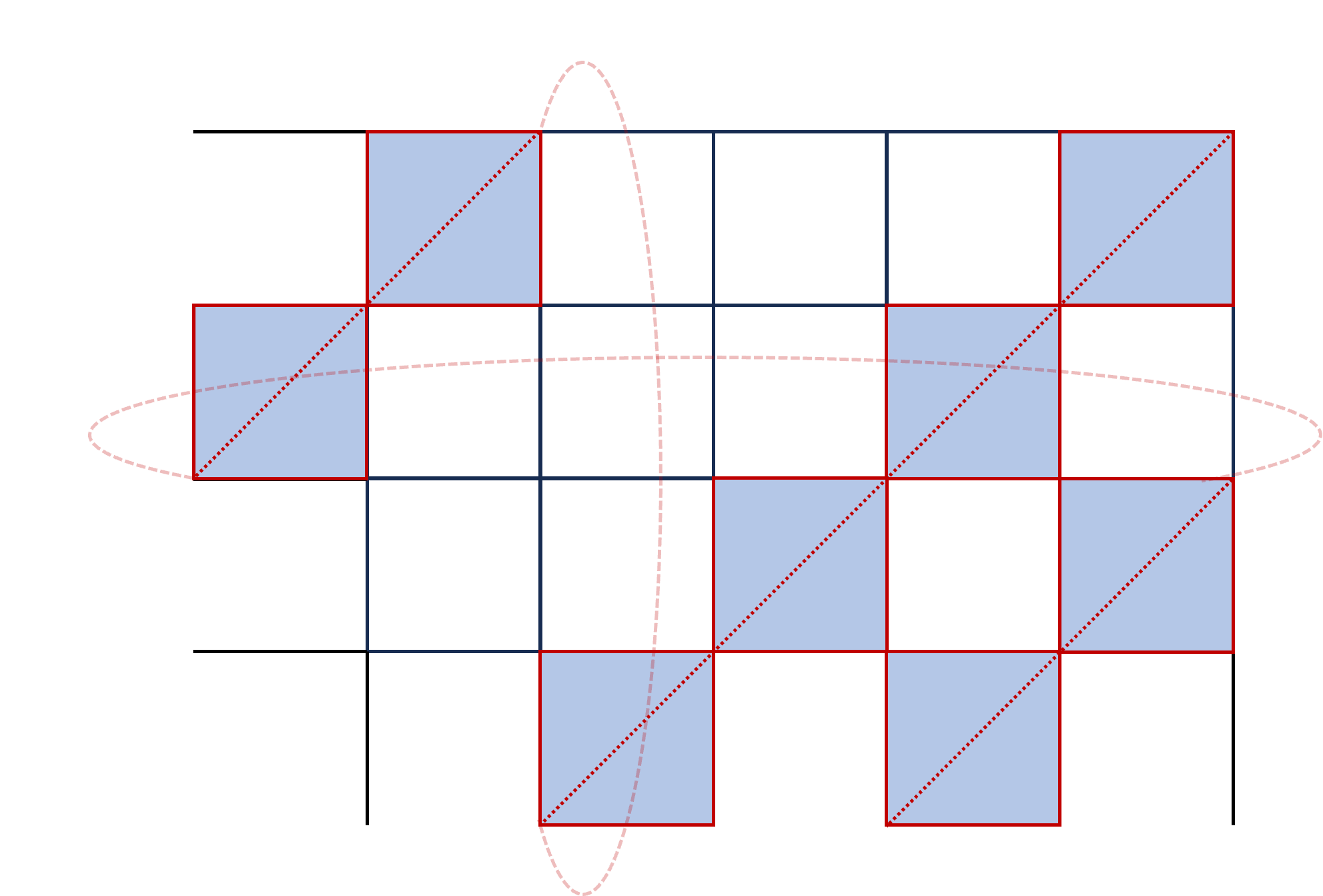}
		\caption{A $Y$-type stabilizer of toric code with lattice size $j=4,\ k=6$. The red solid lines represent the qubits which the $Y$-type stabilizer acts on. The blue-grey plaquettes denote the selected $Z$-type stabilizers, while the top-right and bottom-left vertices indicate the selected $X$-type stabilizers.}
		\label{stabilizerY2DToricCode}
	\end{figure}
	
	When $j$ and $k$ are coprime, namely, $\gcd\left(j,k\right)=1$, it is impossible to select some $X$-type and $Z$-type stabilizers to generate a $Y$-type stabilizer, thus the number of $Y$-type stabilizers is zero. While in the case of $\gcd\left(j,k\right)=g\neq1$, using the selection method as shown in Fig. \ref{stabilizerY2DToricCode}, one can find $2\left(g-1\right)$ independent $Y$-type stabilizers by shifting to left or right, where the factor 2 comes from the left-right symmetry of toric code.
	
	To sum up, the total number of the independent $Y$-type stabilizers and logical operators of the 2D toric code is $2\gcd\left(j,k\right)$, and the proof is completed.
\end{proof}

\begin{corollary}[\textbf{The code dimension of the 4D Chamon code}]
	\label{dimension of the 4D Chamon code}
	The code dimension $k$ of the 4D Chamon code, which is the 4D XYZ product of two hypergraph product codes that are obtained from two pairs of repetition codes with block lengths $\left(n_1,\ n_2\right)$ and $\left(n_3,\ n_4\right)$, is
	\begin{equation}
		k=8\gcd(n_1, n_2)\gcd(n_3, n_4)
	\end{equation}
\end{corollary}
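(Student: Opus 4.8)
The plan is to invoke Theorem~\ref{The dimension of 4D XYZ product code} with $C(\mathfrak{C}_1)$ and $C(\mathfrak{C}_2)$ taken to be the $n_1\times n_2$ and $n_3\times n_4$ 2D toric codes, i.e.\ the hypergraph products of the (cyclic, length-$n_i$) repetition codes. First I would record the parameters of these two codes in the notation of Eq.~(\ref{4DXYZ stabilizer2}): the $n_1\times n_2$ toric code has $2n_1n_2$ qubits, $n_1n_2$ $X$-checks and $n_1n_2$ $Z$-checks, so $n_A=2n_1n_2$ and $m_1=m_2=n_1n_2$; likewise $n_B=2n_3n_4$ and $m_3=m_4=n_3n_4$. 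Hence the bulk term in Theorem~\ref{The dimension of 4D XYZ product code} vanishes, $(n_A-m_1-m_2)(n_B-m_3-m_4)=0$, and so $k=k_{SV}+k_{TU}$.

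Next, using that the dimension of a tensor product of $\mathbb{F}_2$-spaces is the product of the dimensions, I would reduce $k_{SV}$ and $k_{TU}$ to products of four ``single-code'' kernel dimensions. Writing, for a 2D toric code with lattice dimensions $(j,k)$, $a(j,k):=\dim\ker\big(\left[\begin{smallmatrix}H_x\\H_z\end{smallmatrix}\right]\big)$ and $b(j,k):=\dim\ker\big(\left[H_z^T,H_x^T\right]\big)$, one has $k_{SV}=b(n_1,n_2)\,a(n_3,n_4)$ and $k_{TU}=a(n_1,n_2)\,b(n_3,n_4)$. A solution $\hat{\textbf{e}}$ of $H_x\textbf{e}=H_z\textbf{e}=\textbf{0}$ is precisely the symplectic representation of an undetectable $Y$-type error, so $a(j,k)=2\gcd(j,k)$ follows immediately from Lemma~\ref{2DToricCode}.

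For $b(j,k)$ I would use the elementary rank identity $\dim\ker\big(\left[H_z^T,H_x^T\right]\big)=(m_x+m_z)-\dim\big(\mathrm{row}(H_x)+\mathrm{row}(H_z)\big)$ together with $\dim\ker\big(\left[\begin{smallmatrix}H_x\\H_z\end{smallmatrix}\right]\big)=N-\dim\big(\mathrm{row}(H_x)+\mathrm{row}(H_z)\big)$, where $N$ is the number of qubits: for the 2D toric code $m_x+m_z=2jk=N$, so $b(j,k)=a(j,k)=2\gcd(j,k)$. (Alternatively, $b$ can be computed directly from $\dim\ker\big(\left[H_z^T,H_x^T\right]\big)=\dim\ker H_z^T+\dim\ker H_x^T+\dim\big(\mathrm{row}(H_x)\cap\mathrm{row}(H_z)\big)$, using that the toric code has a single $X$-metacheck and a single $Z$-metacheck and that $\mathrm{row}(H_x)\cap\mathrm{row}(H_z)$ is the support space of the $Y$-type stabilizers, of dimension $2[\gcd(j,k)-1]$ by the proof of Lemma~\ref{2DToricCode}; then $b=1+1+2[\gcd(j,k)-1]=2\gcd(j,k)$.) Combining the pieces, $k_{SV}=k_{TU}=2\gcd(n_1,n_2)\cdot 2\gcd(n_3,n_4)$, and therefore $k=k_{SV}+k_{TU}=8\gcd(n_1,n_2)\gcd(n_3,n_4)$.

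The step I expect to be the main obstacle is the evaluation of $b(j,k)$: the quick $m_x+m_z=N$ argument is clean, but one should check that this balance is a genuine property of the hypergraph product of repetition codes (it is, and it is exactly what makes the bulk term of the 4D formula vanish on these inputs); in the fallback route the only nontrivial ingredient is again Lemma~\ref{2DToricCode}, now read as the assertion that $\mathrm{row}(H_x)\cap\mathrm{row}(H_z)$ has dimension $2[\gcd(j,k)-1]$. Everything else — the vanishing of the bulk term and the tensor-product dimension identity — is routine bookkeeping.
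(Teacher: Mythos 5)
Your proposal is correct and follows essentially the same route as the paper: apply Theorem~\ref{The dimension of 4D XYZ product code}, observe that $(n_A-m_1-m_2)(n_B-m_3-m_4)=0$, evaluate $\dim\ker\bigl(\bigl[\begin{smallmatrix}H_x\\H_z\end{smallmatrix}\bigr]\bigr)=2\gcd(j,k)$ via Lemma~\ref{2DToricCode}, and equate $\dim\ker\bigl(\left[H_z^T,H_x^T\right]\bigr)$ to it by the rank identity using $m_x+m_z=N$ for the toric code. Your explicit justification of that last equality (and the metacheck fallback) is in fact slightly more careful than the paper's, which writes the kernel dimension of $\left[H_{z_1}^T,H_{x_1}^T\right]$ as $n_A$ minus the rank — valid only because of exactly the balance $m_1+m_2=n_A$ you point out.
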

\begin{proof}
	The hypergraph product of a pair of repetition codes is a 2D toric code\cite{tillich2013quantum}. Thus, given two toric codes which are obtained from two pairs of repetition codes with block lengths $\left(n_1,\ n_2\right)$ and $\left(n_3,\ n_4\right)$, whose parity-check matrices are $H_1$, $H_2$, $H_3$ and $H_4$, respectively, by \textbf{Definition} \ref{definition4DXYZ}, the corresponding stabilizer matrix $\textbf{\emph{S}}$ of the 4D Chamon code is as shown in Eq. (\ref{4DXYZ stabilizer2}), where $H_{x_1}=\left[I_{n_1}\otimes H_2,H_1^T\otimes I_{n_2}\right]$ with dimension of $m_1\times n_A$ and $H_{z_1}=\left[H_1\otimes I_{n_2},\ I_{n_1}\otimes H_2^T\right]$ with dimension of $m_2\times n_A$ are the $X$-type and $Z$-type parity-check matrices of toric code $C_1$, respectively, and $H_{x_2}=\left[I_{n_3}\otimes H_4,H_3^T\otimes I_{n_4}\right]$ with dimension of $m_3\times n_B$ and $H_{z_2}=\left[H_3\otimes I_{n_4},\ I_{n_3}\otimes H_4^T\right]$ with dimension of $m_4\times n_B$ are the $X$-type and $Z$-type parity-check matrices of toric code $C_2$, respectively. Recall that $n_A=2n_1n_2=m_1+m_2$ and $n_B=2n_3n_4=m_3+m_4$ is the code length of $C_1$ and $C_2$, respectively.
	
	According to \textbf{Theorem} \ref{The dimension of 4D XYZ product code} and $\left(n_A-m_1-m_2\right)\left(n_B-m_3-m_4\right)=0$, the code dimension $k$ of the 4D Chamon code is $k_{SV}+k_{TU}$, where $k_{SV}=\dim\left(\ker\left(\left[H_{z_1}^T,H_{x_1}^T\right]\right)\otimes \ker\left(\begin{bmatrix}H_{x_2}\\H_{z_2}\end{bmatrix}\right)\right)$ and $k_{TU}=\dim\left(\ker\left(\begin{bmatrix}H_{x_1}\\H_{z_1}\end{bmatrix}\right)\otimes \ker\left(\left[H_{z_2}^T,H_{x_2}^T\right]\right) \right)$.
	
	Similar with the proof in \textbf{Corollary} \ref{dimension of 4D XYZ concatenated}, $\ker\left(\begin{bmatrix}H_{x_2}\\H_{z_2}\end{bmatrix}\right)$ is the number of undetectable $Y$-type errors, which include $Y$-type stabilizers and logical operators of the toric code $C_2$. According to \textbf{Lemma} 1, the number of independent solutions $\hat{\textbf{\emph{e}}}$ is $2\gcd(n_3, n_4)$.
	
	For $\dim{\left(\ker{\left(\left[H_{z1}^T,\ \ H_{x1}^T\right]\right)}\right)}$, it is equal to $n_A-\dim{\left(Im\left(\left[H_{z_1}^T,\ \ H_{x_1}^T\right]\right)\right)}=n_A-row\left(\begin{bmatrix}H_{x_1}\\H_{z_1}\end{bmatrix}\right)=\ker\left(\begin{bmatrix}H_{x_1}\\H_{z_1}\end{bmatrix}\right)=2\gcd\left(n_1,n_2\right)$, where $Im\left(\left[H_{z_1}^T,\ \ H_{x_1}^T\right]\right)$ is the image space of $\left[H_{z_1}^T,\ \ H_{x_1}^T\right]$ and $row\left(\begin{bmatrix}H_{x_1}\\H_{z_1}\end{bmatrix}\right)$ is the row space of $\begin{bmatrix}H_{x_1}\\H_{z_1}\end{bmatrix}$.
	
	Observing that the dimension $k_{SV}$ of $\ker\left(\left[H_{z_1}^T,H_{x_1}^T\right]\right)\otimes \ker\left(\begin{bmatrix}H_{x_2}\\H_{z_2}\end{bmatrix}\right)$ is $\dim\left(\ker\left(\left[H_{z_1}^T,H_{x_1}^T\right]\right) \right)\times \dim\left(\ker\left(\begin{bmatrix}H_{x_2}\\H_{z_2}\end{bmatrix}\right)\right)$, thus $k_{SV}=4\gcd(n_1,n_2)\gcd(n_3,n_4)$. Similarly, the dimension $k_{TU}$ of $\dim\left(\ker\left(\begin{bmatrix}H_{x_1}\\H_{z_1}\end{bmatrix}\right)\otimes \ker\left(\left[H_{z_2}^T,H_{x_2}^T\right]\right) \right)$ is also $4\gcd(n_1,n_2)\gcd(n_3,n_4)$. Thus, the code dimension $k$ of the 4D Chamon code is $k=k_{SV}+k_{TU}=8\gcd(n_1,n_2)\gcd(n_3,n_4)$, which completes the proof.
\end{proof}

For the 4D toric code which is the 4D homological product of two 2D toric codes, its code dimension remains $6$ regardless of the values of $n_1$, $n_2$, $n_3$ and $n_4$. While the code dimension of the 4D Chamon code, which is $8\gcd(n_1,n_2)\gcd(n_3,n_4)$, is higher than that of the 4D toric code, which implies that using two identical component CSS codes, the 4D XYZ product may construct codes with higher code dimensions than the 4D homological product.

\subsection {Logical operators and code distance of 4D XYZ product codes}
\label{3.3}
In this section, we study the logical operators and the code distance of 4D XYZ product codes. First, we present the general forms of the logical operators in \textbf{Theorem} \ref{The logical operators of 4D XYZ product code}, followed by the proof of the upper bound of their minimum weight in \textbf{corollary} \ref{code distance proof}.

Here we define a permutation notation $\pi\left(\cdot\right)$ as
\begin{equation}
	\pi\left(\begin{bmatrix}x_1\textbf{\emph{a}}\\ \vdots\\ x_{m_1}\textbf{\emph{a}}\\ x_1\textbf{\emph{b}}\\ \vdots\\ x_{m_1}\textbf{\emph{b}}\\ y_1\textbf{\emph{a}}\\ \vdots\\ y_{m_2}\textbf{\emph{a}}\\ y_1\textbf{\emph{b}}\\ \vdots\\ y_{m_2}\textbf{\emph{b}}\end{bmatrix}\right) = \begin{bmatrix}x_1\textbf{\emph{a}}\\ x_1\textbf{\emph{b}}\\ \vdots\\ x_{m_1}\textbf{\emph{a}}\\x_{m_1}\textbf{\emph{b}}\\ y_1\textbf{\emph{a}}\\ y_1\textbf{\emph{b}}\\ \vdots\\ y_{m_2}\textbf{\emph{a}}\\y_{m_2}\textbf{\emph{b}}\end{bmatrix}
\end{equation}
which will be used in the proof of \textbf{Theorem \ref{The logical operators of 4D XYZ product code}}.

\begin{theorem}[\textbf{The general forms of the logical operators of 4D XYZ product codes}]
	\label{The logical operators of 4D XYZ product code}
	Given two CSS codes $C_1$ and $C_2$, whose $X$-type and $Z$-type parity-check matrices are $H_{x_1}$, $H_{z_1}$ and $H_{x_2}$, $H_{z_2}$, respectively, the logical operators of the corresponding 4D XYZ product code are divided into two types. The first type of logical operators are
	\begin{equation}
		\begin{aligned}
			&X_L=\left[I\ \ \ \ \ I\ \ \ \ X\left(\textbf{r}^T\right)\ \ \ I\ \ \ \ \ I\right]\\
			&Z_L=\left[I\ \ \ \ \ I\ \ \ \ Z\left(\textbf{w}^T\right)\ \ \ I\ \ \ \ \ I\right]
		\end{aligned}		
	\end{equation}
	where
	\begin{equation}
		\begin{aligned}
			\textbf{r}\in \ker\left(\begin{bmatrix}H_{x_1}\\H_{z_1}\end{bmatrix}\right)\otimes \left(\mathcal{C}_{n_B}\backslash Im\left(\left[H_{x_2}^T,H_{z_2}^T\right]\right)\right)\\
			\textbf{w}\in\left(\mathcal{C}_{n_A}\backslash Im\left(\left[H_{x_1}^T,H_{z_1}^T\right]\right)\right) \otimes \ker\left(\begin{bmatrix}H_{x_2}\\H_{z_2}\end{bmatrix}\right)
		\end{aligned}		
	\end{equation}
	
	The second type of logical operators are
	\begin{equation}
		\label{second logical operators}
		\begin{bmatrix}X_{L_2}\\X_{L_3}\\Z_{L_2}\\Z_{L_3}\end{bmatrix}=\begin{bmatrix}X(\textbf{a}_1^T)\ &I\ &I\ &Y(\textbf{b}_1^T)\ &I\\I\ &Y(\textbf{a}_2^T)\ &I\ &I\ &X(\textbf{b}_2^T)\\Y(\textbf{c}_1^T)\ &Z(\textbf{d}_1^T)\ &I\ &I\ &I\\I\ &I\ &I\ &Z(\textbf{c}_2^T)\ &Y(\textbf{d}_2^T)\end{bmatrix}
	\end{equation}
	where
	\begin{equation}
		\begin{aligned}
			&{\pi\left(\textbf{a}_1^T,\ \textbf{a}_2^T,\textbf{b}_1^T,\textbf{b}_2^T\right)}^T\\
			&\in \ker\left(\left[H_{z_1}^T,H_{x_1}^T\right]\right)\otimes\left(\mathcal{C}_{m_3+m_4}\backslash Im\left(\begin{bmatrix}H_{x_2}\\H_{z_2}\end{bmatrix}\right)\right)\\
			&{\pi\left(\textbf{c}_1^T,\ \textbf{d}_1^T,\textbf{c}_2^T,\textbf{d}_2^T\right)}^T\\
			&\in\left(\mathcal{C}_{m_1+m_2}\backslash Im\left(\begin{bmatrix}H_{z_1}\\H_{x_1}\end{bmatrix}\right)\right)\otimes\ker\left(\left[H_{x_2}^T,H_{z_2}^T\right]\right)
		\end{aligned}
	\end{equation}
\end{theorem}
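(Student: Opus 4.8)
The plan is to work in the symplectic representation, where a Pauli operator $L=\left(L_x\mid L_z\right)$ is a logical operator of the 4D XYZ product code exactly when it commutes with every stabilizer, i.e.\ $H_zL_x^T+H_xL_z^T=\mathbf{0}$ with $H_x,H_z$ as in Eqs.~(\ref{Hx}) and~(\ref{Hz}), while not lying in the row space of $\left(H_x\mid H_z\right)$. Splitting $L_x,L_z$ according to the five qubit blocks $A,B,C,D,E$ and using that $H_x,H_z$ are block matrices of Kronecker products, this single condition unpacks into four block-structured vector equations, one per stabilizer family $S,T,U,V$, each involving only the blocks that family acts on. I would then prove two things: every operator written in the theorem is a logical operator (soundness), and modulo stabilizers these exhaust $\mathcal{C}\left(\mathcal{S}\right)\backslash\mathcal{S}$ (completeness).

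First I would handle the first type, i.e.\ operators supported on block $C$ alone. On $C$ the families $T$ and $U$ act by $X$-type Paulis and $S,V$ by $Z$-type Paulis, so $X\left(\textbf{r}^T\right)$ commutes with $T,U$ automatically and commuting with $S,V$ forces $\left(H_{z_1}\otimes I_{n_B}\right)\textbf{r}=\left(H_{x_1}\otimes I_{n_B}\right)\textbf{r}=\mathbf{0}$; reshaping $\textbf{r}$ into an $n_A\times n_B$ array, this says every column lies in $\ker\left(\begin{bmatrix}H_{x_1}\\H_{z_1}\end{bmatrix}\right)$, i.e.\ $\textbf{r}\in\ker\left(\begin{bmatrix}H_{x_1}\\H_{z_1}\end{bmatrix}\right)\otimes\mathcal{C}_{n_B}$, and dually $\textbf{w}\in\mathcal{C}_{n_A}\otimes\ker\left(\begin{bmatrix}H_{x_2}\\H_{z_2}\end{bmatrix}\right)$ for $Z\left(\textbf{w}^T\right)$. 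Passing to equivalence classes, the $C$-restrictions of products of $T$- and $U$-generators sweep out $\mathcal{C}_{n_A}\otimes Im\left(\left[H_{z_2}^T,H_{x_2}^T\right]\right)$ (dually $Im\left(\left[H_{z_1}^T,H_{x_1}^T\right]\right)\otimes\mathcal{C}_{n_B}$), and quotienting by these produces the factors $\mathcal{C}_{n_B}\backslash Im\left(\left[H_{x_2}^T,H_{z_2}^T\right]\right)$ and $\mathcal{C}_{n_A}\backslash Im\left(\left[H_{x_1}^T,H_{z_1}^T\right]\right)$ appearing in the statement.

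Next I would handle the second type by substituting the ansatz of Eq.~(\ref{second logical operators}) into the four commutation equations. Here the families couple the outer blocks in pairs --- $S$ touches $A$ and $B$, $T$ touches $A$ and $D$, $U$ touches $B$ and $E$, $V$ touches $D$ and $E$ --- so the four equations become genuine cross constraints relating $\textbf{a}_1,\textbf{a}_2,\textbf{b}_1,\textbf{b}_2$ via $H_{z_1},H_{x_1}$ and $H_{x_2},H_{z_2}$ (and $\textbf{c}_1,\textbf{d}_1,\textbf{c}_2,\textbf{d}_2$ likewise), with the CSS identities $H_{x_i}H_{z_i}^T=\mathbf{0}$ ensuring the mixed $Y$-type entries still commute with the stabilizers. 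Re-indexing the interleaved block ordering --- which is precisely what the permutation $\pi\left(\cdot\right)$ does --- collapses these constraints into the two tensor-product membership conditions on $\pi\left(\textbf{a}_1^T,\textbf{a}_2^T,\textbf{b}_1^T,\textbf{b}_2^T\right)^T$ and $\pi\left(\textbf{c}_1^T,\textbf{d}_1^T,\textbf{c}_2^T,\textbf{d}_2^T\right)^T$, with the ``$\backslash\,Im$'' factors once more excising the part lying in the stabilizer group.

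Finally, for completeness I would count dimensions and compare with Theorem~\ref{The dimension of 4D XYZ product code}. Using $\dim\left(\mathcal{C}_{n_B}\backslash Im\left[H_{x_2}^T,H_{z_2}^T\right]\right)=\dim\ker\left(\begin{bmatrix}H_{x_2}\\H_{z_2}\end{bmatrix}\right)$, the first type supplies $\dim\ker\left(\begin{bmatrix}H_{x_1}\\H_{z_1}\end{bmatrix}\right)\cdot\dim\ker\left(\begin{bmatrix}H_{x_2}\\H_{z_2}\end{bmatrix}\right)$ independent $X$-logicals and the second type supplies $\dim\ker\left(\left[H_{z_1}^T,H_{x_1}^T\right]\right)\cdot\dim\left(\mathcal{C}_{m_3+m_4}\backslash Im\begin{bmatrix}H_{x_2}\\H_{z_2}\end{bmatrix}\right)$ more; a one-line rank--nullity computation shows their sum equals $\left(n_A-m_1-m_2\right)\left(n_B-m_3-m_4\right)+k_{SV}+k_{TU}=k$, and symmetrically for the $Z$-logicals. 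Since within each type the $X$- and $Z$-representatives pair nondegenerately (a tensor product of two dual perfect pairings) and the two types have disjoint block supports, the listed operators are independent modulo $\mathcal{S}$, hence span $\mathcal{C}\left(\mathcal{S}\right)\backslash\mathcal{S}$. The step I expect to be the main obstacle is this second-type analysis together with the final matching: unpacking the four cross-block commutation equations, checking that $\pi$ is the exact re-indexing turning them into the stated $\ker\otimes\left(\mathcal{C}\backslash Im\right)$ conditions, and keeping the ``$\backslash\,Im$'' quotients consistent across all four outer blocks so that the parametrization is neither redundant nor incomplete --- the conceptual content is light, but the index bookkeeping over the tensor-product structure is where an error would most easily creep in.
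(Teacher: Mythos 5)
Your plan is correct and follows essentially the same route as the paper's proof: verify commutation for the two ansatz families, excise the operators generated by products of $S,V$ (resp.\ $T,U$) generators via the $Im\left(\cdot\right)$ quotients, and then confirm completeness by matching the count of anticommuting pairs against the code dimension of Theorem~\ref{The dimension of 4D XYZ product code}. The only loose point is your description of the $C$-restrictions of $T$- and $U$-products as sweeping out $\mathcal{C}_{n_A}\otimes Im\left(\left[H_{z_2}^T,H_{x_2}^T\right]\right)$; for the product to be supported on block $C$ alone the coefficients $\textbf{t},\textbf{u}$ are forced into $\ker\left(\begin{bmatrix}H_{x_1}\\H_{z_1}\end{bmatrix}\right)\otimes\mathcal{C}_{m_4}$ and $\ker\left(\begin{bmatrix}H_{x_1}\\H_{z_1}\end{bmatrix}\right)\otimes\mathcal{C}_{m_3}$, so the swept-out set is $\ker\left(\begin{bmatrix}H_{x_1}\\H_{z_1}\end{bmatrix}\right)\otimes Im\left(\left[H_{x_2}^T,H_{z_2}^T\right]\right)$ — which is what your quotient ultimately uses anyway.
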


The proof of Theorem \ref{The logical operators of 4D XYZ product code} is given in the Appendix \ref{Proof of Theorem 2}.

Next, we will prove the upper bound of code distance of 4D XYZ product code in \textbf{corollary} \ref{code distance proof}. Before proving it, we first present \textbf{Lemma} \ref{minimum weight} which will be used later.

\begin{lemma}
	\label{minimum weight}
	Given a vector space $\mathcal{C}_{n}$ with dimension $n$ over $GF(2)$, two binary matrices $H_1$ and $H_2$ with size $n\times m_1$ and $m_2\times n$ respectively, the minimum weight of non-zero vector in $\mathcal{C}_{n}\backslash Im (H_1)$ and $\mathcal{C}_{n}\backslash \ker (H_2)$ is one.
\end{lemma}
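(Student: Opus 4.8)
The plan is to argue by contradiction, exploiting the fact that both $Im(H_1)$ and $\ker(H_2)$ are linear subspaces of $\mathcal{C}_{n}$ and that a weight-one vector is the lightest possible non-zero vector; hence it suffices to exhibit, in each complement, a standard basis (column) vector $\textbf{e}_i$ having a single $1$ in position $i$. I would first flag the implicit non-emptiness hypothesis: the statement is meaningful only when $\mathcal{C}_{n}\backslash Im(H_1)\neq\emptyset$ (equivalently $Im(H_1)\subsetneq\mathcal{C}_{n}$) and $\mathcal{C}_{n}\backslash\ker(H_2)\neq\emptyset$ (equivalently $H_2\neq\textbf{0}$, so $\ker(H_2)\subsetneq\mathcal{C}_{n}$); I would state this at the outset so the properness steps below are justified.

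For $\mathcal{C}_{n}\backslash Im(H_1)$: suppose no weight-one vector lies in the complement, i.e.\ $\textbf{e}_i\in Im(H_1)$ for every $i=1,\dots,n$. Since $\{\textbf{e}_1,\dots,\textbf{e}_n\}$ spans $\mathcal{C}_{n}$ and $Im(H_1)$ is closed under addition, this forces $Im(H_1)=\mathcal{C}_{n}$, contradicting properness. Hence some $\textbf{e}_i\notin Im(H_1)$, and because every non-zero vector has weight at least $1$ while $wt(\textbf{e}_i)=1$, the minimum weight of a non-zero vector in $\mathcal{C}_{n}\backslash Im(H_1)$ is exactly one.

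For $\mathcal{C}_{n}\backslash\ker(H_2)$: here I would make the same idea concrete by noting that $\textbf{e}_i\notin\ker(H_2)$ precisely when $H_2\textbf{e}_i\neq\textbf{0}$, i.e.\ when the $i$-th column of $H_2$ is non-zero. If all columns of $H_2$ were zero then $H_2=\textbf{0}$ and $\ker(H_2)=\mathcal{C}_{n}$, making the complement empty; so whenever the statement is non-vacuous there is an index $i$ with non-zero $i$-th column, giving a weight-one vector $\textbf{e}_i\notin\ker(H_2)$, and again the minimum weight is one.

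I do not anticipate a genuine obstacle: each half reduces to the one-line observation that a proper subspace of $\mathcal{C}_{n}$ cannot contain all the $\textbf{e}_i$. The only point requiring care — and worth spelling out explicitly — is the implicit assumption that the two complements are non-empty, since the claim is vacuous when $Im(H_1)=\mathcal{C}_{n}$ or $H_2=\textbf{0}$.
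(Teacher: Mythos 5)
Your proof is correct and rests on the same key observation as the paper's: some standard basis vector $\textbf{e}_i$ (of weight one) must lie in each complement, because a proper subspace of $\mathcal{C}_n$ cannot contain all of $\textbf{e}_1,\dots,\textbf{e}_n$. The paper reaches the same conclusion constructively, by Gaussian elimination on the block matrix $\left[H_1, I_n\right]$ to locate which $\textbf{e}_i$ fall outside $Im(H_1)$, whereas you give a shorter existential argument and, usefully, make explicit the non-emptiness hypothesis that the paper leaves implicit.
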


The proof of \textbf{Lemma} \ref{minimum weight} is given in the Appendix \ref{Proof of Lemma 2}.

\begin{corollary}[\textbf{The upper bound of the code distance of 4D XYZ product code}]
	\label{code distance proof}
	The upper bound of code distance of 4D XYZ product code is
	\begin{equation}
		d\leq\min\{d_1, d_2, d_3, d_4\}
	\end{equation}
	where $d_1$, $d_2$, $d_3$ and $d_4$ is the minimum weight of the non-zero vectors in spaces $\ker\left(\begin{bmatrix}H_{x_1}\\H_{z_1}\end{bmatrix}\right)$, $\ker\left(\begin{bmatrix}H_{x_2}\\H_{z_2}\end{bmatrix}\right)$, $\ker\left(\left[H_{z_1}^T,H_{x_1}^T\right]\right)$ and $\ker\left(\left[H_{z_2}^T,H_{x_2}^T\right]\right)$, respectively.
\end{corollary}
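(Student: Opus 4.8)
The plan is to establish the four inequalities $d\le d_1$, $d\le d_2$, $d\le d_3$, $d\le d_4$ separately and then take the minimum. Since the code distance is the minimum weight of an element of $\mathcal{C}(\mathcal{S})\backslash\mathcal{S}$, for each $i$ it suffices to exhibit one non-trivial logical operator of weight exactly $d_i$. Every such operator will be produced by specializing the explicit forms of Theorem~\ref{The logical operators of 4D XYZ product code}, and the only computations required are that the Hamming weight of a tensor product of binary vectors is the product of their weights and that the coordinate permutation $\pi(\cdot)$ preserves weight.

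For $d\le d_1$ I would take the first type of logical operator $X_L=\left[I\ \ I\ \ X(\textbf{r}^T)\ \ I\ \ I\right]$ with $\textbf{r}=\textbf{v}_1\otimes\textbf{v}_2$, where $\textbf{v}_1$ is a minimum-weight non-zero vector of $\ker\left(\begin{bmatrix}H_{x_1}\\H_{z_1}\end{bmatrix}\right)$ (so $wt(\textbf{v}_1)=d_1$) and $\textbf{v}_2$ is a weight-one vector of $\mathcal{C}_{n_B}\backslash Im\left(\left[H_{x_2}^T,H_{z_2}^T\right]\right)$, which exists by Lemma~\ref{minimum weight}. Then $wt(X_L)=wt(\textbf{r})=wt(\textbf{v}_1)\,wt(\textbf{v}_2)=d_1$, and the conditions of Theorem~\ref{The logical operators of 4D XYZ product code}---in particular $\textbf{v}_2\notin Im\left(\left[H_{x_2}^T,H_{z_2}^T\right]\right)$---ensure that $X_L$ is indeed a non-trivial logical operator. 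The inequality $d\le d_2$ follows symmetrically from $Z_L=\left[I\ \ I\ \ Z(\textbf{w}^T)\ \ I\ \ I\right]$ with $\textbf{w}=\textbf{v}_1'\otimes\textbf{v}_2'$, where $\textbf{v}_1'$ is a weight-one vector of $\mathcal{C}_{n_A}\backslash Im\left(\left[H_{x_1}^T,H_{z_1}^T\right]\right)$ and $\textbf{v}_2'$ is a minimum-weight non-zero vector of $\ker\left(\begin{bmatrix}H_{x_2}\\H_{z_2}\end{bmatrix}\right)$ with $wt(\textbf{v}_2')=d_2$.

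For $d\le d_3$ and $d\le d_4$ I would use the second type of logical operators. To bound $d_3$, choose the defining vector to be a pure tensor $\pi(\textbf{a}_1^T,\textbf{a}_2^T,\textbf{b}_1^T,\textbf{b}_2^T)^T=\textbf{p}\otimes\textbf{q}$ with $\textbf{p}$ a minimum-weight non-zero vector of $\ker\left(\left[H_{z_1}^T,H_{x_1}^T\right]\right)$ (so $wt(\textbf{p})=d_3$) and $\textbf{q}$ a weight-one vector of $\mathcal{C}_{m_3+m_4}\backslash Im\left(\begin{bmatrix}H_{x_2}\\H_{z_2}\end{bmatrix}\right)$, again available by Lemma~\ref{minimum weight}. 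Writing $\textbf{p}=(\textbf{x},\textbf{y})$ and $\textbf{q}=(\textbf{a},\textbf{b})$ and matching the block lengths through $\pi$, one gets $\textbf{a}_1=\textbf{x}\otimes\textbf{a}$, $\textbf{a}_2=\textbf{x}\otimes\textbf{b}$, $\textbf{b}_1=\textbf{y}\otimes\textbf{a}$ and $\textbf{b}_2=\textbf{y}\otimes\textbf{b}$. Since $wt(\textbf{q})=1$, exactly one of $\textbf{a},\textbf{b}$ is zero, so exactly one of $X_{L_2}$ and $X_{L_3}$ collapses to the identity while the other becomes a non-trivial logical operator of weight $wt(\textbf{x})+wt(\textbf{y})=wt(\textbf{p})=d_3$. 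The inequality $d\le d_4$ is obtained identically from $Z_{L_2}$ and $Z_{L_3}$ with the two CSS codes interchanged, taking a weight-one vector of $\mathcal{C}_{m_1+m_2}\backslash Im\left(\begin{bmatrix}H_{z_1}\\H_{x_1}\end{bmatrix}\right)$ and a minimum-weight non-zero vector of $\ker\left(\left[H_{x_2}^T,H_{z_2}^T\right]\right)$ of weight $d_4$. Combining the four cases yields $d\le\min\{d_1,d_2,d_3,d_4\}$.

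The main obstacle is the second type, i.e.\ the bounds on $d_3$ and $d_4$: one must carry the block-length bookkeeping through $\pi$ so that the total weight genuinely factors as $wt(\textbf{p})\,wt(\textbf{q})$, and, more delicately, one must verify that once the weight-one choice forces one of the two rows to the identity the surviving row still satisfies the ``not in the image'' non-triviality condition of Theorem~\ref{The logical operators of 4D XYZ product code}, rather than that condition being shared between the two rows. With this checked, the bound is immediate, and the $d_1$, $d_2$ cases are a routine application of Theorem~\ref{The logical operators of 4D XYZ product code} and Lemma~\ref{minimum weight}.
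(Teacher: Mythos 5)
Your proposal is correct and follows essentially the same route as the paper: specialize the explicit logical operators of Theorem~\ref{The logical operators of 4D XYZ product code} to a pure tensor of a minimum-weight kernel vector with a weight-one vector supplied by Lemma~\ref{minimum weight}, so that the operator weight factors as $d_i\cdot 1$. Your handling of the second type is in fact more explicit than the paper's (which simply declares the bound ``obvious'' from Eq.~(\ref{second logical operators})), and the checkpoint you flag---that the surviving row of the pair $(X_{L_2},X_{L_3})$ remains a non-trivial logical operator---is genuine but resolves easily, since the product $X_{L_2}X_{L_3}$ is shown in the theorem's proof not to be a stabilizer, so the non-identity row cannot be one either.
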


\begin{proof}
	As proved in \textbf{Theorem} \ref{The logical operators of 4D XYZ product code}, the logical operators of 4D XYZ product code can be divided into two types. Thus, we should consider the upper bound of the minimum weight of these two types of  logical operators separately.
	
	First considering the first type of logical operators, and let $\alpha$ and $\beta$ (whose weight is $d_\beta$) be minimal weight non-zero vectors in the spaces $\ker\left(\begin{bmatrix}H_{x_1}\\H_{z_1}\end{bmatrix}\right)$ and $\left(\mathcal{C}_{n_B}\backslash Im\left(\left[H_{x_2}^T,H_{z_2}^T\right]\right)\right)$ respectively, then we have $\alpha\otimes \beta\in \ker\left(\begin{bmatrix}H_{x_1}\\H_{z_1}\end{bmatrix}\right)\otimes \left(\mathcal{C}_{n_B}\backslash Im\left(\left[H_{x_2}^T,H_{z_2}^T\right]\right)\right)$ and the weight of $\alpha\otimes \beta$ is $d_1d_\beta$. According to \textbf{Lemma} \ref{minimum weight}, we have $d_\beta=1$. Thus the weight of $\alpha\otimes \beta$ is $d_1$. For $\left(\mathcal{C}_{n_A}\backslash Im\left(\left[H_{x_1}^T,H_{z_1}^T\right]\right)\right) \otimes \ker\left(\begin{bmatrix}H_{x_2}\\H_{z_2}\end{bmatrix}\right)$, we have similar conclusion. Thus the upper bound of minimum weight of the vectors in space $\ker\left(\begin{bmatrix}H_{x_1}\\H_{z_1}\end{bmatrix}\right)\otimes\left(\mathcal{C}_{n_B}\backslash Im\left(\left[H_{x_2}^T,H_{z_2}^T\right]\right)\right) $ and $\left(\mathcal{C}_{n_A}\backslash Im\left(\left[H_{x_1}^T,H_{z_1}^T\right]\right)\right) \otimes \ker\left(\begin{bmatrix}H_{x_2}\\H_{z_2}\end{bmatrix}\right)$ is $d_1$ and $d_2$, respectively.
	
	For the second type of logical operators, the upper bound of the minimum weight for the vector ${\pi\left(\textbf{\emph{a}}_1^T,\textbf{\emph{a}}_2^T,\textbf{\emph{b}}_1^T,\textbf{\emph{b}}_2^T\right)}^T\in\ker{\left(\left[H_{z_1}^T,H_{x_1}^T\right]\right)}\otimes \left(\mathcal{C}_{m_3+m_4}\backslash Im\begin{bmatrix}
		H_{x_2}\\H_{z_2}
	\end{bmatrix}\right)$ is $d_3$. Observing Eq. (\ref{second logical operators}), it is obvious that the minimum weight of the logical operators $X_{L_2}$ and $X_{L_3}$ is no more than $d_3$. Similarly, the the minimum weight of the logical operators $Z_{L_2}$ and $Z_{L_3}$ is no more than $d_4$.
	
	To sum up, the upper bound of code distance is no more than the minimum of $d_1$, $d_2$, $d_3$ and $d_4$.
\end{proof}

Here we exploit Monte Carlo method proposed in Ref. \cite{liang2024determining,bravyi2024high}
to verify the code distance of the 4D Chamon code and the 4D XYZ product concatenated code. The corresponding results are consistent with \textbf{Corollary} \ref{code distance proof}. Further, we compare the code dimension and code distance between the 4D Chamon code, the 4D toric code, the 4D XYZ product concatenated code and the 4D homological product concatenated code in Table \ref{Comparison}, where $n_1, n_2, n_3, n_4$ are the code length of the repetition codes. The comparison results support that using two identical component CSS codes, the 4D XYZ product can construct non-CSS codes with higher code dimension or code distance than the CSS codes constructed by the 4D homological product.
\begin{table*}[htbp]
	\begin{center}
		\caption{Comparison of 4D Chamon codes, 4D toric codes, 4D XYZ product concatenated codes and 4D homological product concatenated codes in code dimension and code distance.}		
		\begin{tabular}{c|c|c|c}
			\hline
			& $n_1$, $n_2$, $n_3$, $n_4$ & code dimension & code distance \\
			\hline
			\multirow{7}{*}{4D Chamon codes} & $2,2,2,2$ & 32 & 4 \\
			\cline{2-4}
			& $3,3,3,3$ & 72 & 6 \\
			\cline{2-4}
			& $4,4,4,4$ & 128 & 8 \\
			\cline{2-4}
			& $5,5,5,5$ & 200 & 10 \\
			\cline{2-4}
			& $2,3,2,3$ & 8 & 6 \\
			\cline{2-4}	
			& $3,4,3,4$ & 8 & 12 \\
			\cline{2-4}
			& $4,5,4,5$ & 8 & 20 \\			
			\hline
			\multirow{7}{*}{4D toric codes} & $2,2,2,2$ & 6 & 4 \\
			\cline{2-4}
			& $3,3,3,3$ & 6 & 9 \\
			\cline{2-4}
			& $4,4,4,4$ & 6 & 16 \\
			\cline{2-4}
			& $5,5,5,5$ & 6 & 25 \\
			\cline{2-4}
			& $2,3,2,3$ & 6 & 4 \\
			\cline{2-4}	
			& $3,4,3,4$ & 6 & 9 \\
			\cline{2-4}
			& $4,5,4,5$ & 6 & 16 \\			
			\hline
			\multirow{5}{*}{4D XYZ product} & $3,3,3,3$ & 1 & 9 \\
			\cline{2-4}
			\multirow{5}{*}{concatenated codes}& $5,5,5,5$ & 1 & 25 \\
			\cline{2-4}
			& $7,7,7,7$ & 1 & 49 \\
			\cline{2-4}
			& $3,5,3,5$ & 1 & 15 \\
			\cline{2-4}
			& $3,7,3,7$ & 1 & 21 \\
			\hline
			\multirow{5}{*}{4D homological product} & $3,3,3,3$ & 1 & 9 \\
			\cline{2-4}
			\multirow{5}{*}{concatenated codes}& $5,5,5,5$ & 1 & 25 \\
			\cline{2-4}
			& $7,7,7,7$ & 1 & 49 \\
			\cline{2-4}
			& $3,5,3,5$ & 1 & 9 \\
			\cline{2-4}
			& $3,7,3,7$ & 1 & 9 \\
			\hline
		\end{tabular}
		\label{Comparison}
	\end{center}
\end{table*}

\section {Numerical simulations}
\label{4}
In this section, we exploit FDBP-OSD-0 proposed in Ref. \cite{yi2025improved} to study the error-correcting performance of the 3D Chamon code, the 4D Chamon code and the 4D XYZ product concatenated code under the independent single-qubit Pauli noise model, in which each qubit independently suffers a Pauli error $I$, $X$, $Y$, or $Z$ with error probability $1-p$, $p_x$, $p_y$, $p_z$, respectively, where $p=p_x+p_y+p_z$ is physical qubit error rate. Under this noise model, the bias rate of Pauli $Z$ noise is defined as
\begin{equation}
	\label{eta}
	\eta=\frac{p_z}{p_x+p_y}
\end{equation}
For instance, the bias rate of the depolarizing and pure Pauli $Z$ noise is $\eta=0.5$ and $\eta=\infty$, respectively. In this paper, we only consider noiseless stabilizer measurements.

It is important to note that this paper does not focus on optimizing the decoding algorithm for 3D and 4D XYZ product codes. Instead, we use the FDBP-OSD-0 algorithm, which is an improved version of binary BP and is applicable to all quantum stabilizer codes with improved performance over traditional binary BP. \cite{babar2015fifteen,roffe2020decoding}. However, as shown in Ref. \cite{PRXQuantum.4.020332}, even though tradition binary BP combined with OSD \cite{Panteleev2021degeneratequantum}  has demonstrated strong decoding performance for a wide range of QLDPC codes, but its performance can degrade for the codes with large code length and high degeneracy. FDBP also has the same problem, and its performance is also greatly degraded by short cycles, especially 4-cycles in the Tanner graph. In Appendix \ref{number of 4-cycles}, we compare the number of 4-cycles in the Tanner graphs of the 3D Chamon code, the 3D toric code, the 4D Chamon code and the 4D toric code. Our analysis reveals that 4D XYZ product codes exhibit significantly more 4-cycles in their Tanner graphs compared to 3D and 4D homological product codes. Thus, the error-correcting performance of the 3D Chamon code, the 4D Chamon code and the 4D XYZ product concatenated code obtained by FDBP-OSD-0 in this paper is not the optimal. Nonetheless, from the perspective of code-capacity threshold, they still show good error-correcting performance against Pauli-$Z$-biased noise.

\subsection {The 3D Chamon code}
\label{3D Chamon code}
The 3D Chamon code proposed in Ref. \cite{PhysRevLett.94.040402} is an instance of the 3D XYZ product of three repetition codes \cite{leverrier2022quantum}. Its characteristics has been studied in depth by Bravyi \emph{et al} \cite{BRAVYI2011839}. In Ref. \cite{schwartzman2025generalizing} and \cite{zhao2023quantum}, the researchers study its error-correcting performance (namely, code-capacity threshold) against the depolarizing noise and pure Pauli $X$ noise. However, the corresponding code-capacity thresholds obtained in these work is not satisfying enough. Our work aims to further explore the performance under various noise models.

The code dimension of the 3D Chamon code is $4\gcd\left(n_1,n_2,n_3\right)$, where $n_1$, $n_2$ and $n_3$ are the code length of three repetition codes, respectively. Besides, we can also understand 3D Chamon code from the perspective of three-dimensional layout. As shown in Fig. \ref{Chamoncode}, on a three-dimensional cubic Bravais lattice, the qubits of the 3D Chamon code are placed on the face centers each of which neighbors two cubes and on the vertices each of which is shared in eight cubes. Each edge of cube represents a stabilizer that involves neighboring six qubits. These six qubits are divided into three classes, each of which contains two qubits along one of the $x$, $y$, $z$ three directions. For each class of qubits, the type of the Pauli operators that the stabilizer acts on them is consistent with the type of direction. We conjecture that the 3D Chamon code may exhibit identical error-correcting capability for Pauli $X$, $Z$ and $Y$ errors when $n_1=n_2=n_3$, due to its spatial symmetry in this configuration.
\begin{figure}[htbp]
	\centering
	\includegraphics[width=0.4\textwidth]{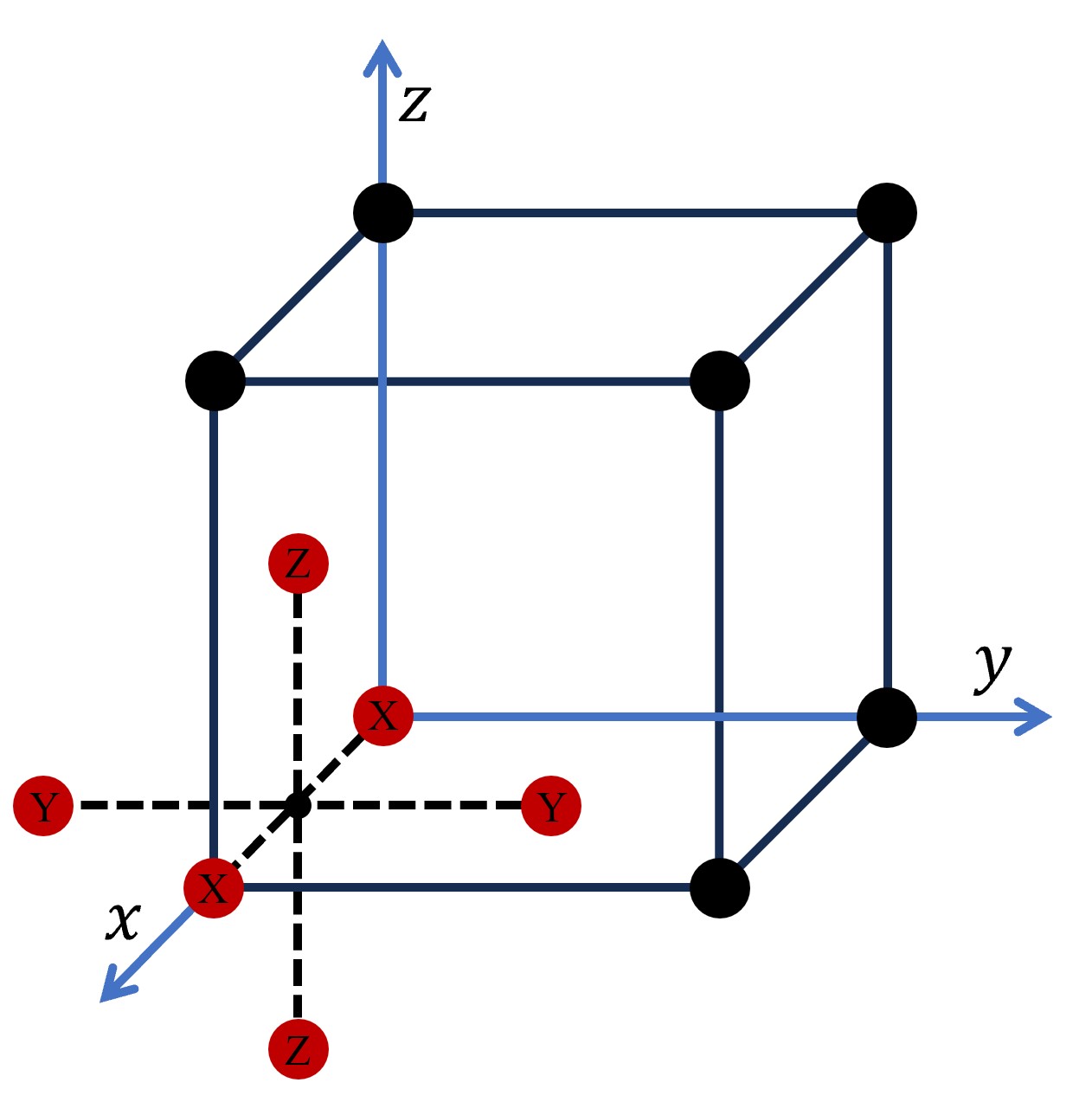}
	\caption{Three-dimensional layout of the 3D Chamon code. Qubits are placed on the face centers and the vertices. The geometric centers of edges and cubes represent stabilizers, which involve neighboring six qubits. Some qubits are omitted for clearly.}
	\label{Chamoncode}
\end{figure}

In Fig. \ref{3DChamonCodeCut}, we consider the case of $n_1=n_2=n_3=n$, and the corresponding 3D Chamon code encodes $4n$ logical qubits. Fig. \ref{3DChamonCodeCut}(a)$\sim$(c) show the total logical error rate versus the physical qubit error rate $p$ under the pure Pauli $X$, $Y$ and $Z$ noise, respectively. One can see that the code-capacity thresholds of the 3D Chamon code under the pure Pauli $X$, $Y$ and $Z$ noise model are very close, which are all around $14.5\%$ respectively, and much higher than the result, $4.92\%$, in Ref. \cite{zhao2023quantum}. These results support our conjecture. Specifically, the 3D Chamon code achieves equivalent error-correction capability for Pauli $X$, $Z$, and $Y$ errors when $n_1=n_2=n_3$. Fig. \ref{3DChamonCodeCut}(d) shows that the code-capacity threshold of the 3D Chamon code under depolarizing noise model is around $14.5\%$, which is also much higher than the result, $10.5\%$, in Ref. \cite{schwartzman2025generalizing}. In the legend, the notation $L=n_1,n_2,n_3$ represents the corresponding 3D Chamon code, which is constructed by three repetition codes with code length of $n_1$, $n_2$ and $n_3$.
\begin{figure}[htbp]
	\centering
	\includegraphics[width=0.49\textwidth]{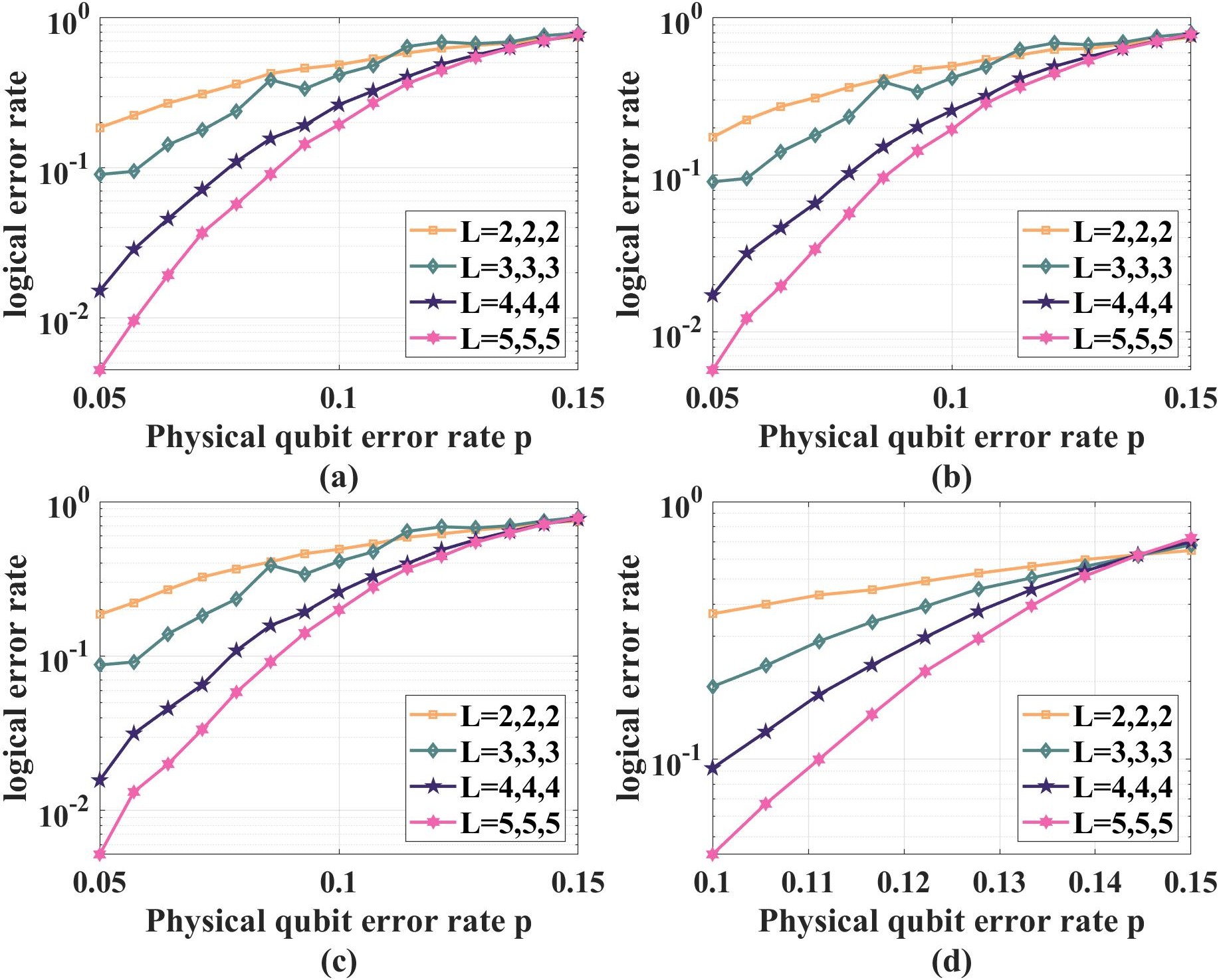}
	\caption{The error-correcting performance of the Chamon code under (a) the pure Pauli $X$ noise model, (b) the pure Pauli $Z$ noise model, (c) the pure Pauli $Y$ noise model and (d) the depolarizing noise model.}
	\label{3DChamonCodeCut}
\end{figure}

\subsection {The 4D Chamon code}
\label{4D Chamon code}
We refer to the code which is the 4D XYZ product of four repetition codes with code length $n_1$, $n_2$, $n_3$ and $n_4$ as the 4D Chamon code and conjecture that it is the 4D generalization of the 3D Chamon code, just as the relation between the 3D and 4D toric codes. This conjecture rests on two key observations. First, the code dimension of the 4D Chamon code, $8\gcd(n_1,n_2)\gcd(n_3,n_4)$, is similar with that of the 3D Chamon code, $4\gcd\left(n_1,n_2,n_3\right)$. Second, from the perspective of code construction, as shown in Fig. \ref{XYZchaincomplex} and \ref{4DXYZproduct}, the tensor structures of the 3D and 4D XYZ product are the same as those of the 3D and 4D homological product, respectively. Since the 3D and 4D toric codes are constructed from the 3D and 4D homological product, this structural correspondence reinforces the generalization relationship.

In Fig. \ref{FourDXYZCompareHPtoricDepCut}, we first consider the case of $n_1=n_2=n_3=n_4$. In this case, the 4D Chamon code encodes $8n_1^2$ logical qubits, while the 4D toric code always encodes 6 logical qubits. In the legend, the notation $L=n_1,n_2,n_3,n_4$ represents the corresponding code, which is constructed by four repetition codes with code length of $n_1$, $n_2$, $n_3$ and $n_4$. Fig. \ref{FourDXYZCompareHPtoricDepCut}(a) and (b) show the total logical error rate versus the physical qubit error rate $p$ of the 4D Chamon code and the 4D toric code under the depolarizing noise model, respectively. One can observe that the code-capacity thresholds of the 4D Chamon code and 4D toric code obtained by FDBP-OSD-0 are around $13\%$ and $16\%$, respectively. Under the pure Pauli $Z$ noise, Fig. \ref{FourDXYZCompareHPtoricDepCut}(c) and (d) show that the thresholds of the 4D Chamon code and the 4D toric code obtained by FDBP-OSD-0 are around $13\%$ and $10.5\%$, respectively. The simulation results exhibit that the 4D Chamon code possesses comparable error-correcting performance to the 4D toric code while encoding more logical qubits, thus demonstrating the superior error-correction capability of the 4D XYZ product code construction.
\begin{figure}
	\centering
	\includegraphics[width=0.49\textwidth]{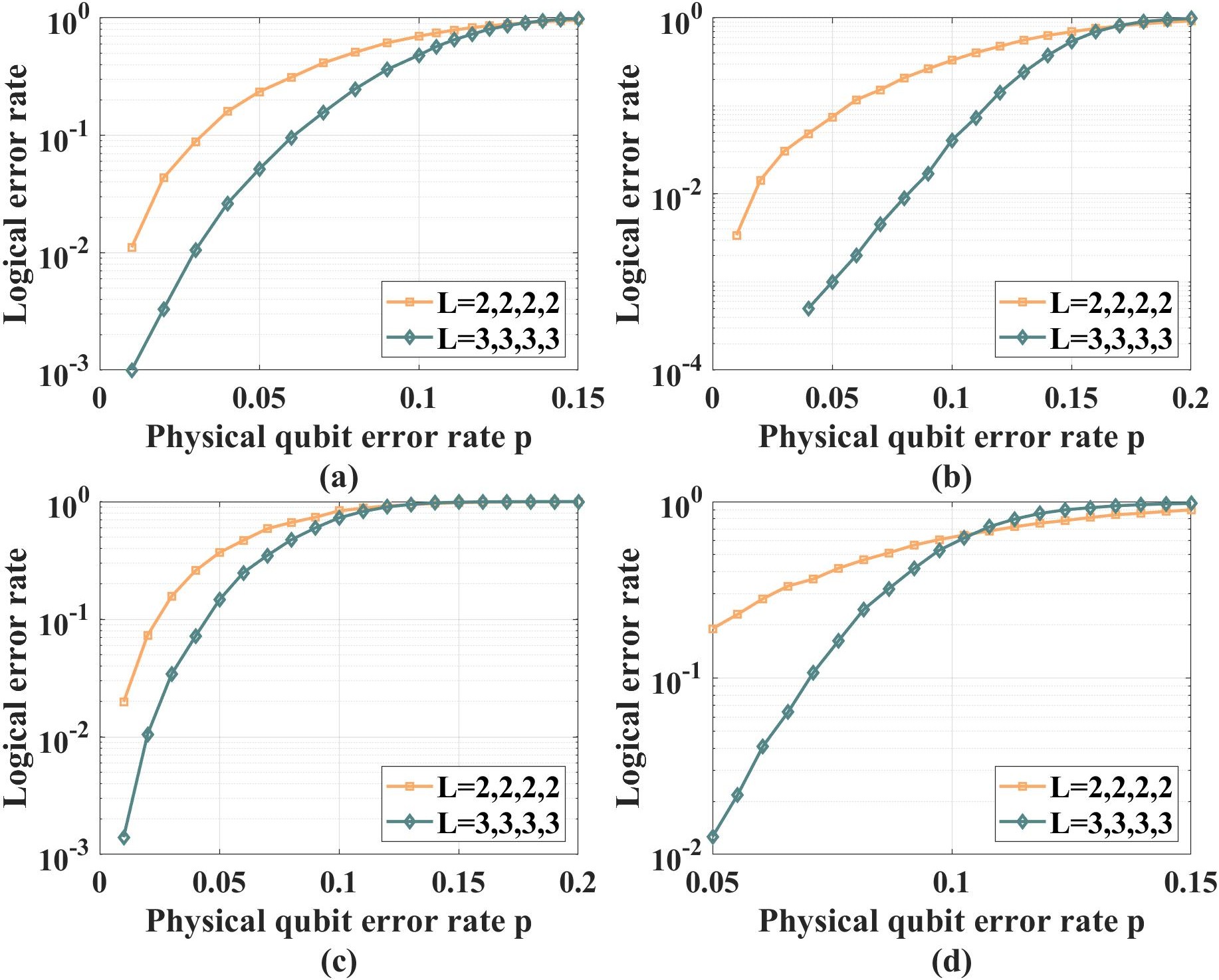}
	\caption{When $n_1=n_2=n_3=n_4$, the error-correcting performance of (a) the 4D Chamon code under the depolarizing noise, (b) the 4D toric code under the depolarizing noise, (c) the 4D Chamon code under the pure Pauli $Z$ noise and (d) the 4D toric code under the pure Pauli $Z$ noise.}
	\label{FourDXYZCompareHPtoricDepCut}
\end{figure}

We second consider the case that $n_1$, $n_2$, $n_3$ and $n_4$ are coprime. In this case, the corresponding 4D Chamon code always encode $8$ logical qubits, which is close to the 4D toric code. Thus, it is more reasonable to compare the error-correcting performance of the 4D Chamon code with the 4D toric code in this case. First, we consider the pure Pauli $Z$ noise. As shown in Fig. \ref{FourDXYZCompareHPToricPureZCoprimeCut} (a) and (b), the code capacity thresholds of the 4D Chamon code and the 4D toric code obtained by FDBP-OSD-0 under pure Pauli $Z$ noise are around $19\%$ and $10.5\%$, respectively. These results implies that the 4D Chamon code possess better error-correcting performance than that of the 4D toric code against the pure Pauli $Z$ noise.
\begin{figure}
	\centering
	\includegraphics[width=0.49\textwidth]{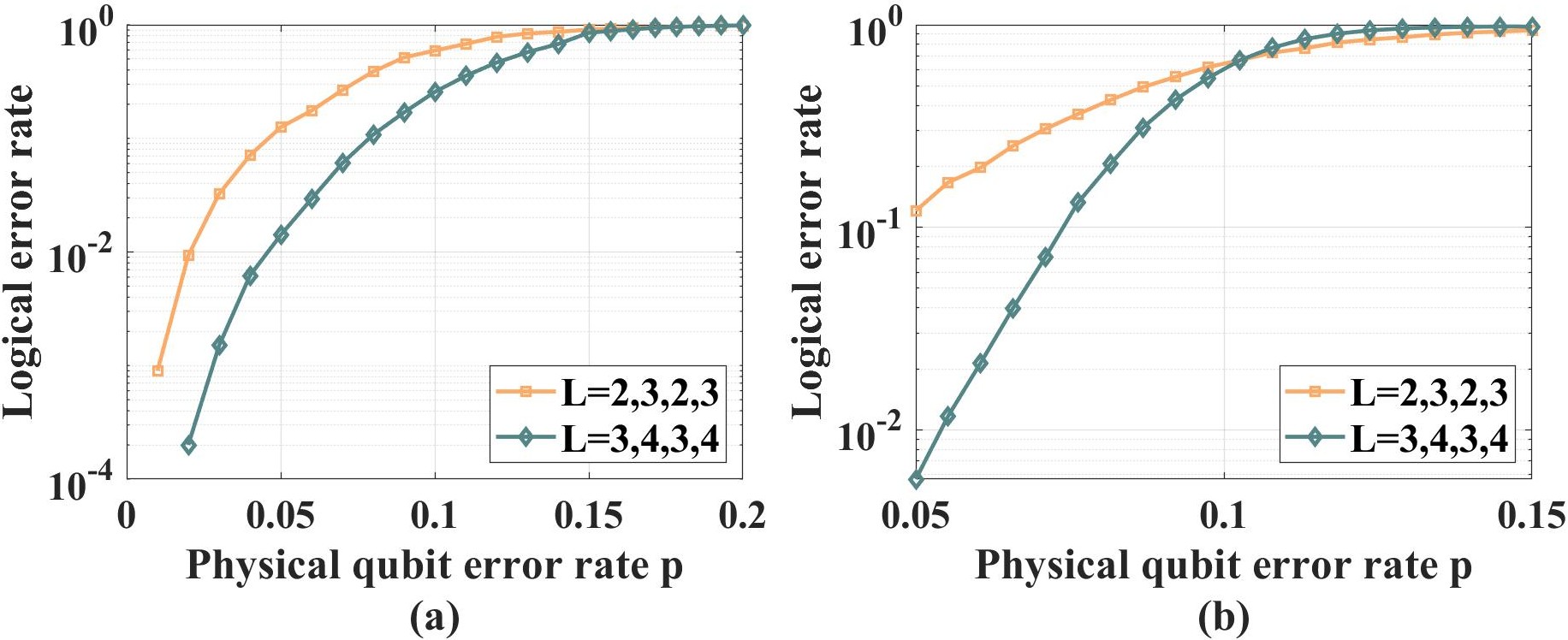}
	\caption{When $n_1$, $n_2$, $n_3$ and $n_4$ are coprime, the error-correcting performance of (a) the 4D Chamon code and (b) the 4D toric code against the pure Pauli $Z$ noise.}
	\label{FourDXYZCompareHPToricPureZCoprimeCut}
\end{figure}

Further, we explore the error-correcting performance of the 4D Chamon code under Pauli-$Z$-biased noise with bias rate $\eta=1000$, $100$, $10$ and $0.5$. Our simulation results show that the corresponding code-capacity thresholds are around $18.5\%$, $18\%$, $17\%$ and $13\%$, respectively. These results imply that, the 4D Chamon code encoding $8$ logical qubits possesses good error-correcting performance against Pauli-$Z$-biased noise.
\begin{figure}[htbp]
	\centering
	\includegraphics[width=0.49\textwidth]{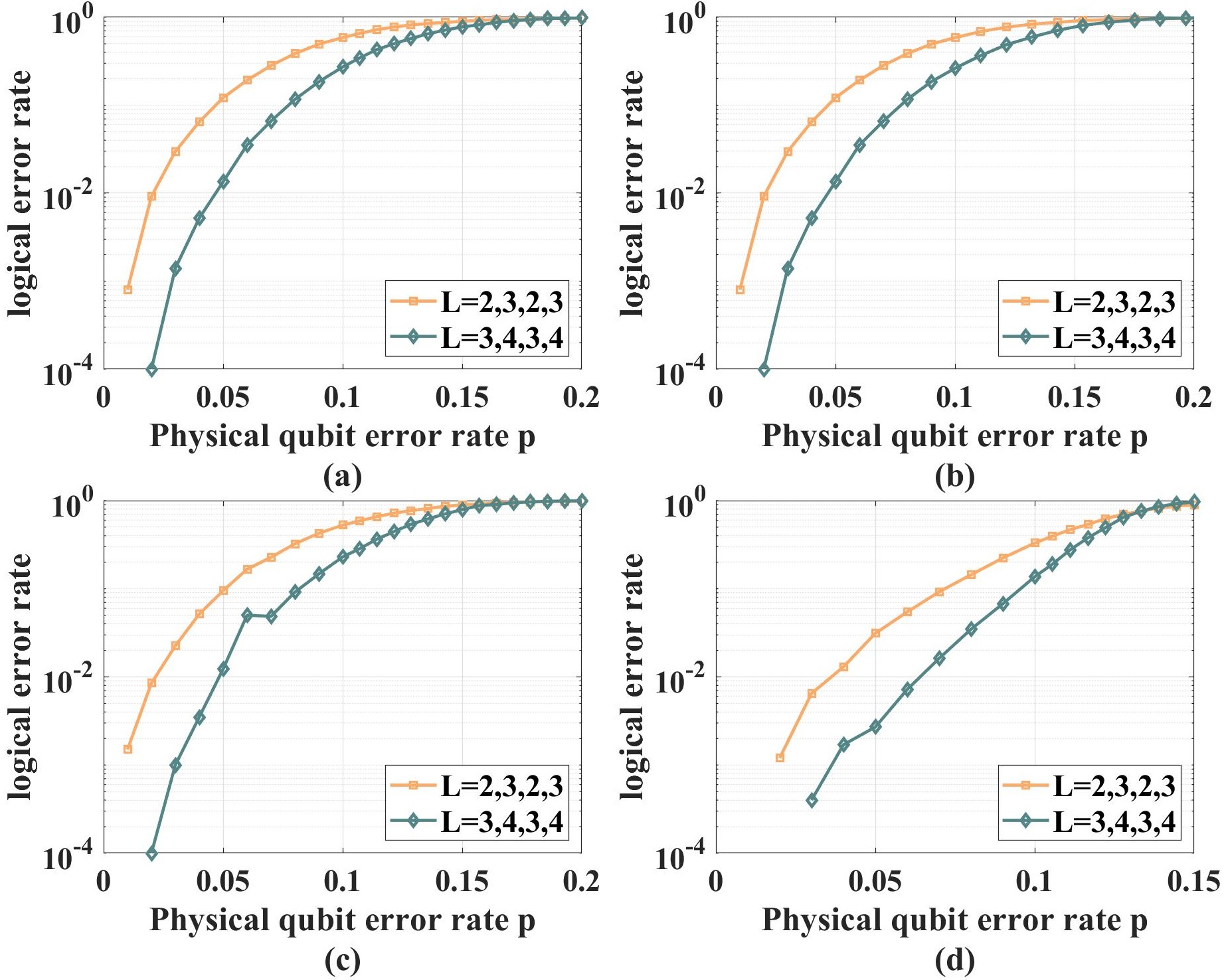}
	\caption{When $n_1$, $n_2$, $n_3$ and $n_4$ are coprime, The error-correcting performance of the 4D Chamon codes against Pauli-$Z$-biased noise with bias rate (a) $\eta=1000$, (b) $100$, (c) $10$ and (d) $0.5$.}
	\label{FourDXYZCompareHPToricZBiasCut}
\end{figure}

\subsection {The 4D XYZ product concatenated code}
\label{4D XYZ concatenated code}
As described in Sect. \ref{Dimension}, the 4D XYZ product concatenated code is constructed from two concatenated codes, which are obtained from two pairs of repetition codes with block lengths $\left(n_1,\ n_2\right)$ and $\left(n_3,\ n_4\right)$ ($n_1$, $n_2$, $n_3$ and $n_4$ are all odd), and only encodes one logical qubit. According to \textbf{Theorem} \ref{The logical operators of 4D XYZ product code} and \textbf{Corollary} \ref{code distance proof}, the minimum weight of its logical $Z$ operator is $n_3n_4$, thus we conjecture that increasing the block lengths $\left(n_3,\ n_4\right)$ of the pair of repetition codes can construct the corresponding 4D XYZ product concatenated codes, which have better error-correcting performance against Pauli-$Z$-biased noise than 4D homological product concatenated codes.

First, we consider the pure Pauli $Z$ noise. In Fig. \ref{FourDXYZCompareHPShorPureZCut}(a), the code-capacity threshold of the 4D XYZ product concatenated code obtained by FDBP-OSD-0 is around $38\%$, which is much higher than that of the 4D homological product concatenated code, which is around 15$\%$ as shown in Fig. \ref{FourDXYZCompareHPShorPureZCut}(b).

\begin{figure}[]
	\centering
	\includegraphics[width=0.49\textwidth]{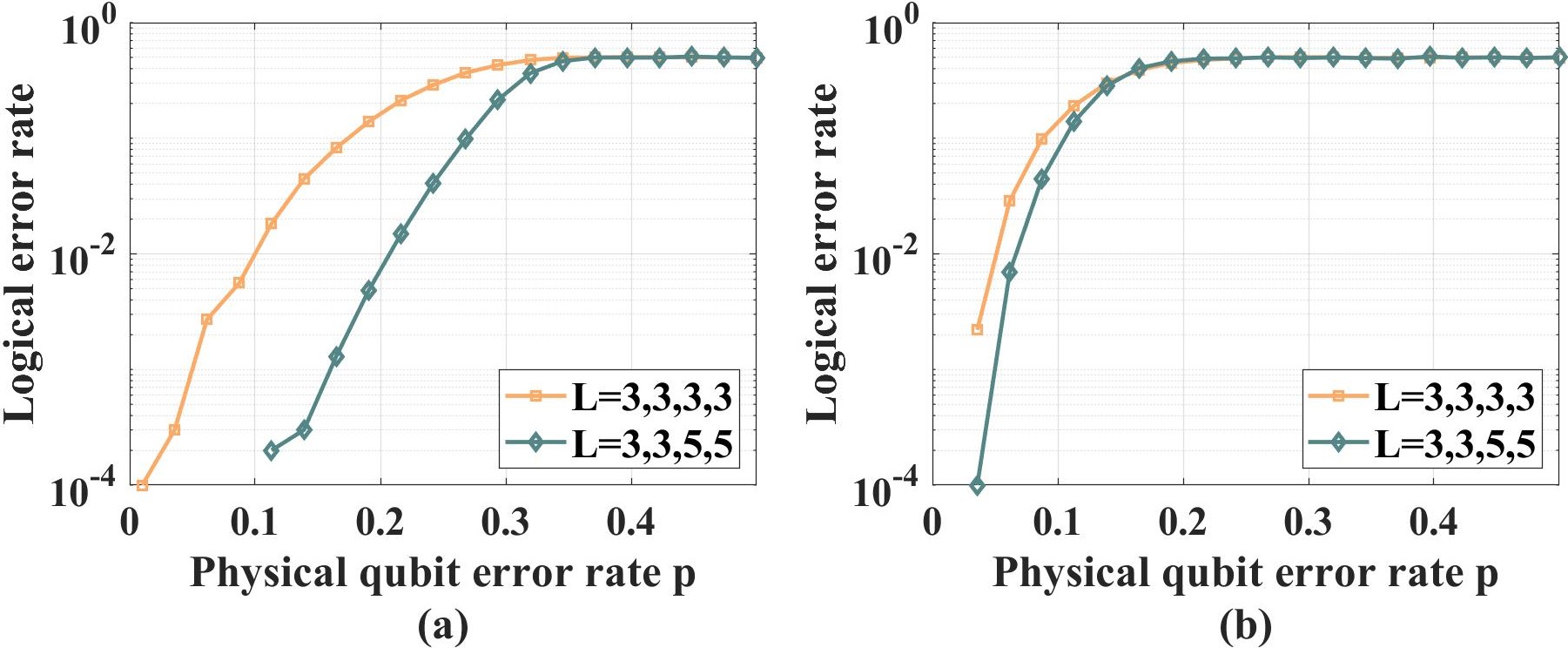}
	\caption{The error-correcting performance of (a) the 4D XYZ product concatenated code and (b) the 4D homological product concatenated code against the pure Pauli $Z$ noise.}
	\label{FourDXYZCompareHPShorPureZCut}
\end{figure}

Further, we explore the error-correcting performance of the 4D XYZ product concatenated code against Pauli-$Z$-biased noise with bias rate $\eta=1000$, 100, 10 and 0.5. Our simulation results show that the corresponding code-capacity thresholds are around $37\%$, $32\%$, $22\%$ and $10\%$, respectively, which support that the 4D XYZ product can construct codes with good error-correcting performance for Pauli-$Z$-biased noise.
\begin{figure}[htbp]
	\centering
	\includegraphics[width=0.49\textwidth]{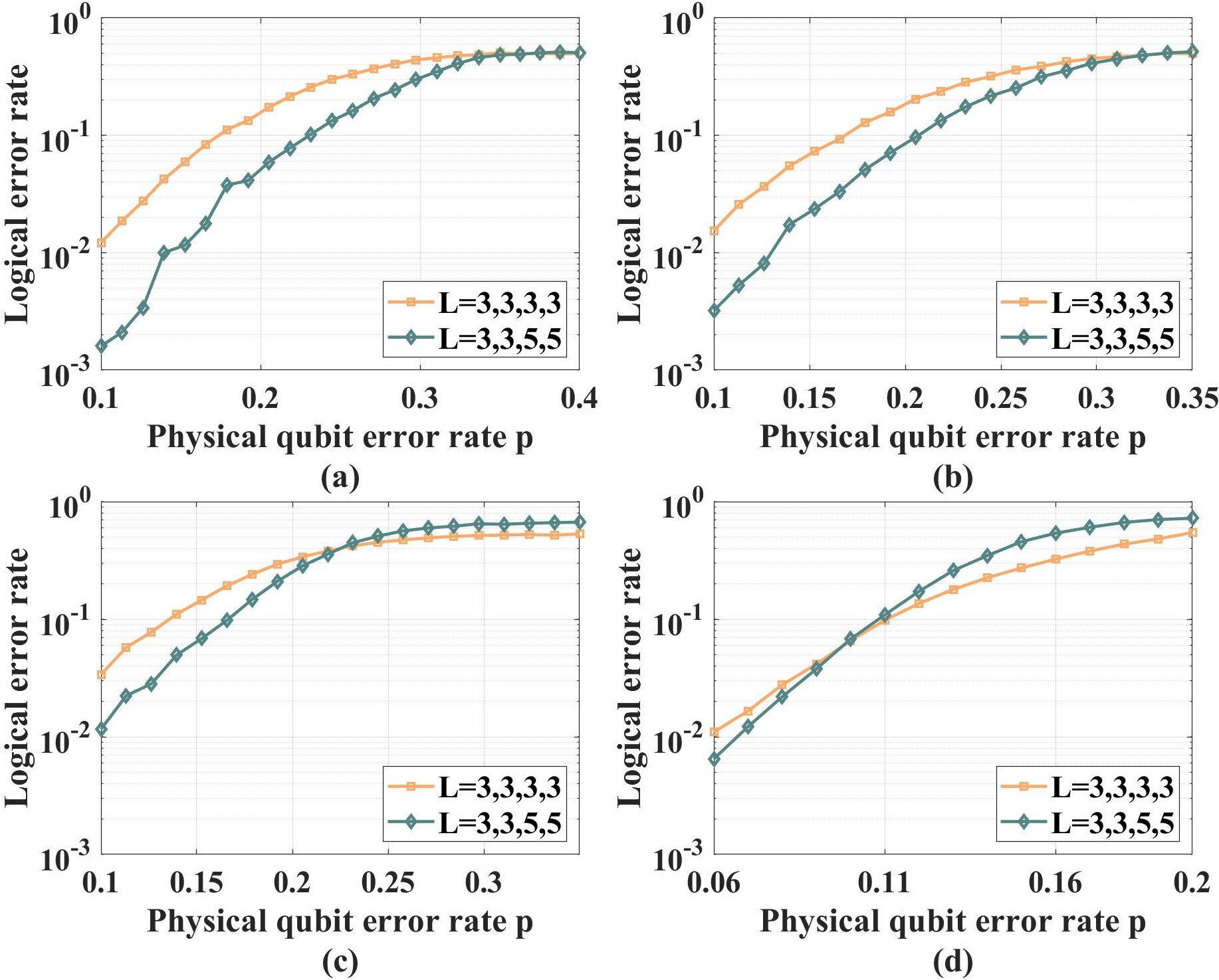}
	\caption{The error-correcting performance of the 4D XYZ product concatenated code against Pauli-$Z$-biased noise with bias rate (a) $\eta=1000$, (b) 100, (c) 10 and (d) 0.5.}
	\label{FourDXYZCompareHPShorZBiasCut}
\end{figure}

\section {Further discussions on the 4D Chamon code}
\label{Further discussions}
In Corollary \ref{dimension of the 4D Chamon code}, we prove that the code dimension of the 4D Chamon code is $8\gcd(n_1,n_2)\gcd(n_3,n_3)$, which is similar to the 3D Chamon code, both being dependent on cubic lattice size. This constitutes one of the salient features inherent to fracton models, while the 3D Chamon code has been rigorously proved to be a 3D fracton model \cite{BRAVYI2011839}. Thus, this naturally prompts a question: is the 4D Chamon code a 4D fracton model? In this paper, we do not rigorously address this question. However, we identify that the 4D Chamon code possesses another salient feature of fracton models—the restricted mobility of excitations. Consequently, we strongly suggest that the 4D Chamon code is likely a novel 4D fracton model \cite{shen2022fracton, li2021fracton}.

To demonstrate that the 4D Chamon code possesses the restricted mobility of excitations, we first specify its geometric arrangement—it is an spin model with eight-qubit nearest-neighbor interactions on a 4D cubic lattice with periodic boundary conditions. We reveal that it possesses rigid logical operators, which corresponds to the restricted mobility of excitations \cite{huang2023tailoring}. 

Considering a 4D cubic lattice $\Lambda=\mathbb{Z}_{2L_x}\times\mathbb{Z}_{2L_y}\times\mathbb{Z}_{2L_z}\times\mathbb{Z}_{2L_w}$ with periodic boundary conditions and linear dimensions $2L_x$, $2L_y$, $2L_z$, $2L_w$. The factor $2$ comes from that we set the lattice constant to be 2 for convenience. Adopting the geometric notations introducing in Ref. \cite{li2021fracton}, any $d$-cube ($0\leq d\leq 4$ dimensional object, namely, 0-cube is vertex, 1-cube is edge, 2-cube is face, 3-cube is cube and 4-cube is hypercube) in $\Lambda$ is denoted by its coordinate of geometric center.  The coordinate $u=(i,j,k,l)\in\Lambda$ of a $d$-cube always contains $(4-d)$ even numbers and $d$ odd numbers. For instance, $(0,0,0,0)$ represents a 0-cube (vertex); $(1,0,0,0)$ represents a 1-cube (edge) whose center is $(1,0,0,0)$; $(1,1,0,0)$ represents a 2-cube (face) whose center is $(1,1,0,0)$; $(1,1,1,0)$ represents a 3-cube (cube) whose center is $(1,1,1,0)$; $(1,1,1,1)$ represents a 4-cube (hypercube) whose center is $(1,1,1,1)$.

The qubits of the 4D Chamon code are placed on the vertices, face centers and hypercube centers in $\Lambda$, and the cubes and edges in $\Lambda$ corresponds to stabilizer generators, which are divided in to four classes, $S$, $T$, $U$ and $V$. Formally,

1. For stabilizer $g_S$ in $S$, it corresponds to a cube $\gamma_S$ along $x$ or $y$ axis. A cube is the common “$face$” shared by two nearest hypercubes. Thus, there are eight qubits nearest to a cube, namely, six qubits on the center of faces and two qubits on the center of hypercubes. For $g_S$, Pauli $Z$ operators act on two qubits placed on the $zw$ plane, Pauli $X$ operators act on four qubits placed on the other two types of planes and Pauli $Y$ operators act on two qubits placed on two centers of hypercubes, which share the cube $\gamma_S$. 

2. For stabilizer $g_T$ in $T$, it corresponds to a cube $\gamma_T$ along $z$ or $w$ axis, and Pauli $Y$ operators act on two qubits placed on the $xy$ plane, Pauli $Z$ operators act on four qubits placed on the other two types of planes and Pauli $X$ operators act on two qubits placed on two centers of hypercubes sharing the cube $\gamma_T$.

3. For stabilizer $g_U$ in $U$, it corresponds to an edge $e_U$ along $x$ or $y$ axis. An edge connects two vertices and is shared by six faces along three distinct axes. Thus, there are eight qubits nearest to an edge, namely, six qubits on the center of faces and two qubits on vertices. For $e_U$, Pauli $Z$ operators act on two qubits placed on the $xy$ plane, Pauli $X$ operators act on four qubits placed on the other two types of planes and Pauli $Y$ operators act on two qubits placed on two vertices connecting to edge $e_U$.

4. For stabilizer $g_V$ in $V$, it corresponds to an edge $e_V$ along $z$ or $w$ axis, and Pauli $Y$ operators act on two qubits placed on the $zw$ plane, Pauli $Z$ operators act on four qubits placed on the other two types of planes and Pauli $X$ operators act on two qubits placed on two vertices connecting to edge $e_V$.

Fig. \ref{4D Chamon code layout} shows two stabilizers in $S$ and $U$, respectively.

\begin{figure}[htbp]
	\centering
	\includegraphics[width=0.4\textwidth]{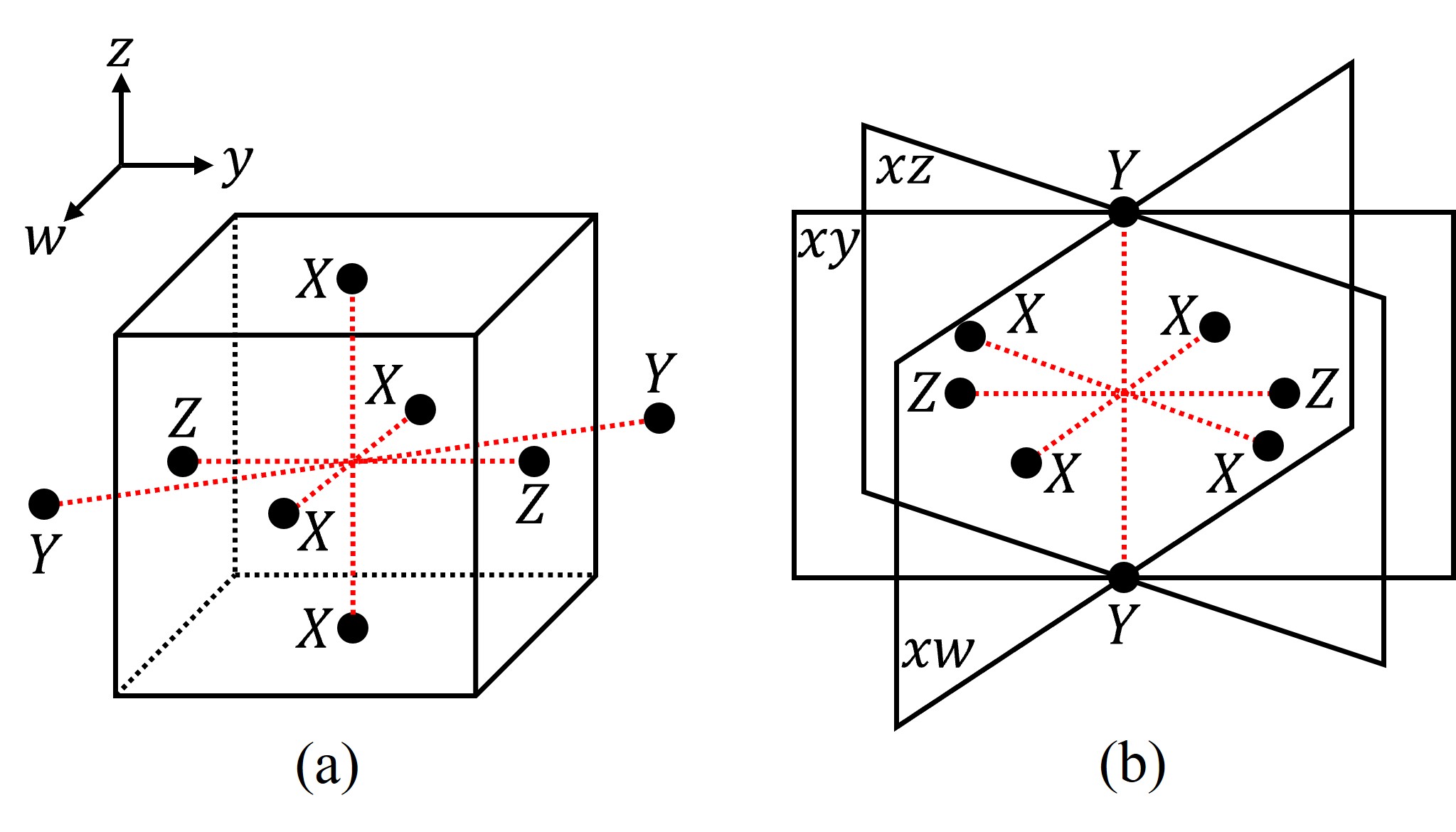}
	\caption{The stabilizers of the 4D Chamon code. (a) A stabilizer in $S$ which corresponds to a cube along $x$ axis. (b) A stabilizer in $U$ which corresponds to an edge along $x$ axis.}
	\label{4D Chamon code layout}
\end{figure}

Next, we first describe a class of string-like operators on a 4D cubic lattice, which are straight lines along the face-diagonals of $xy$ or $zw$ plane and create a pair of excitations, namely, a dipole, located near each end-point of the string. Second, we show that the string-like operators are rigid undeformable, thus the excitations are restricted to move along the face-diagonal, which is similar with 3D Chamon code \cite{BRAVYI2011839}. Considering an operator
\begin{equation}
	W(\gamma) = \prod\limits_{v\in V}X_v \prod\limits_{f\in F}Y_f
\end{equation}
where $V=\{(0,0,0,0), (2,2,0,0),\cdots,(2m,2m,0,0)\}$  is a set of vertices located along the face-diagonal $l$ of $xy$ plane, and $F=\{(1,1,0,0),(3,3,0,0),\cdots,(2m-1,2m-1,0,0)\}$ is a set of face whose centers also locate along $l$. One can check that $W(\gamma)$ commutes with all stabilizers in $S$ and $T$, and only anti-commutes with four stabilizers corresponding to four edges whose center coordinates are $(-1,0,0,0)$, $(0,-1,0,0)$, $(2m,2m+1,0,0)$ and $(2m+1,2m,0,0)$, respectively. An example is shown in Fig. \ref{RigidStringOperator}. One can check that the dipole can only move along the face-diagonal of $xy$ plane by applying Pauli $X$ operators on qubits located at vertices or Pauli $Y$ operators on qubits located at face centers, thus cannot freely move to other locations without creating new excitations. In addition, since the 4D cubic lattice under consideration possesses periodic boundary conditions, if $m = L_x = L_y = L$, the end-points of the operator $W(\gamma)$ are actually the same vertices, thus $W(\gamma)$ is a non-contractible loop which commutes with all stabilizers and is an logical operator corresponding to $X_{L_3}$ in Theorem\ref{The logical operators of 4D XYZ product code}. Thus, the 4D Chamon code possesses rigid logical operators corresponding to the restricted mobility of excitations. \cite{huang2023tailoring}.

\begin{figure}[htbp]
	\centering
	\includegraphics[width=0.4\textwidth]{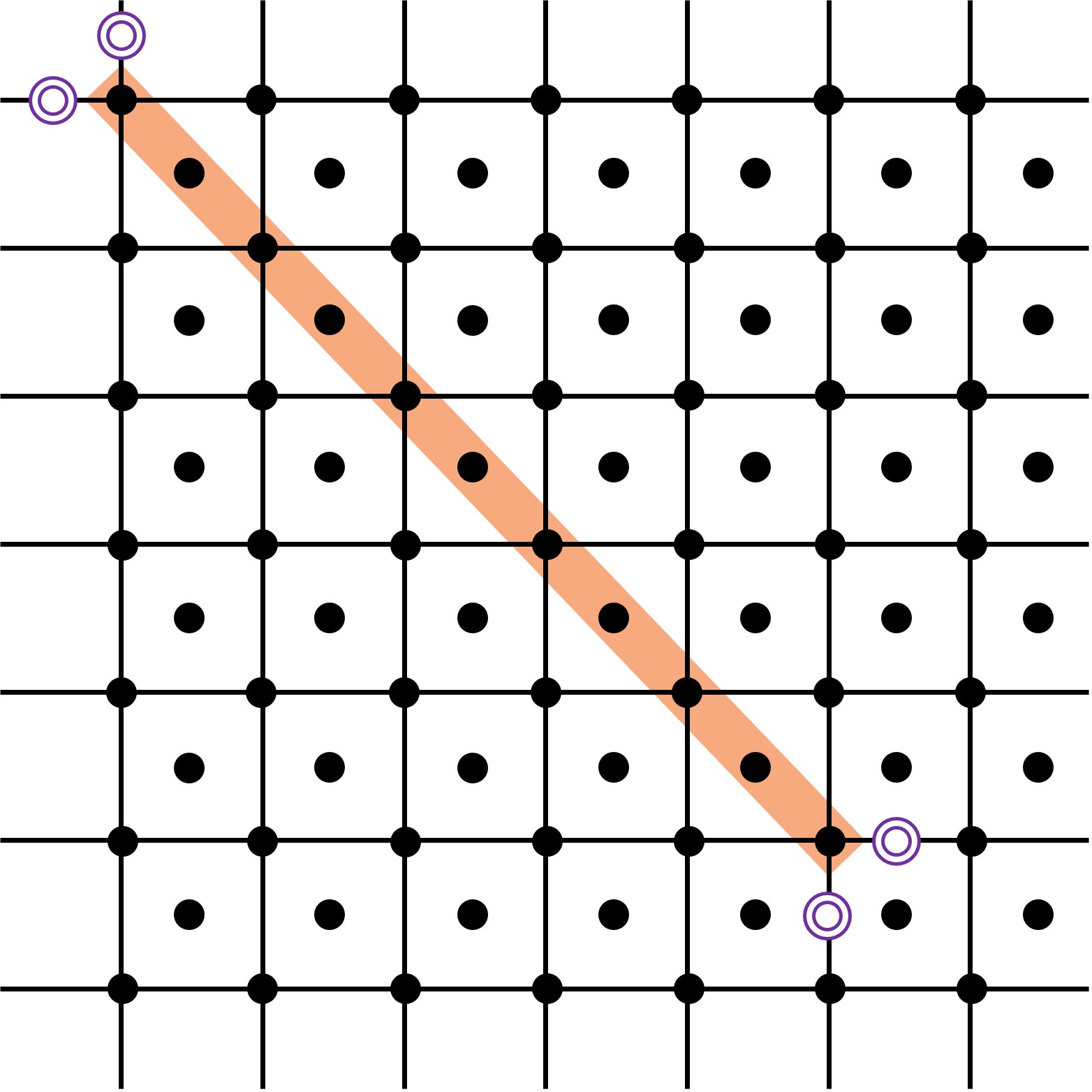}
	\caption{An example of rigid string operator $W(\gamma)$ locating at $xy$ plane and along the face-diagonal of $xy$ plane (the orange shaded
	region). Black dots represent qubits located at vertices and the center of faces. Double purple circles near the end-points of the string indicate locations of excitations created by $W(\gamma)$.  A pair of excitations located near each end-point of the string is called a dipole \cite{BRAVYI2011839}.}
	\label{RigidStringOperator}
\end{figure}

The above discussions strongly suggest that the 4D Chamon code is a novel 4D fracton model. This naturally leads to another question: what are the code parameters and error-correcting performance of the 4D Chamon code compared to the other 4D fracton model—the 4D X-cube model?

Table \ref{4DFractonModelComparison} compares the code dimension $k$ and code distance between the 4D Chamon code and 4D X-cube model. In both isotropic and anisotropic lattices, the code distance of the 4D Chamon code is larger than that of the 4D X-cube model. However, in isotropic lattices, the code dimension of the 4D Chamon code is significantly larger than that of the 4D X-cube model.
\begin{table*}
	\begin{center}
		\caption{Comparison on the code dimension $k$ and code distance between the 4D Chamon code and the 4D X-cube model}		
		\begin{tabular}{c|c|c|c}
			\hline
			& \multirow{2}{*}{$n_1$, $n_2$, $n_3$, $n_4$} & 4D Chamon code & 4D X-cube model \\
			&\ &$k=8\gcd(n_1,n_2)\gcd(n_3,n_4)$&$k=3\sum_{i=1}^{4}n_i - 6$ \\ 
			\hline
			\multirow{7}{*}{Code dimension $k$} & $2,2,2,2$ & 32 & 18 \\
			\cline{2-4}
			& $3,3,3,3$ & 72 & 30 \\
			\cline{2-4}
			& $4,4,4,4$ & 128 & 42 \\
			\cline{2-4}
			& $5,5,5,5$ & 200 & 54 \\
			\cline{2-4}
			& $2,3,2,3$ & 8 & 24 \\
			\cline{2-4}	
			& $3,4,3,4$ & 8 & 36 \\
			\cline{2-4}
			& $4,5,4,5$ & 8 & 48 \\			
			\hline
			\multirow{7}{*}{Code distance} & $2,2,2,2$ & 4 & 2 \\
			\cline{2-4}
			& $3,3,3,3$ & 6 & 3 \\
			\cline{2-4}
			& $4,4,4,4$ & 8 & 4 \\
			\cline{2-4}
			& $5,5,5,5$ & 10 & 5 \\
			\cline{2-4}
			& $2,3,2,3$ & 6 & 2 \\
			\cline{2-4}	
			& $3,4,3,4$ & 12 & 3 \\
			\cline{2-4}
			& $4,5,4,5$ & 20 & 4 \\			
			\hline
		\end{tabular}
		\label{4DFractonModelComparison}
	\end{center}
\end{table*}

Fig. \ref{FourDXCubeDepPureZ}(a) and (b) show the total logical error rate curves of the 4D X-cube model obtained by FDBP-OSD-0 under the depolarizing noise and pure Pauli $Z$ noise, and the intersections of curves are around $1.4\%$ and $1.2\%$, respectively. These values are much smaller than those values of the 4D Chamon code as shown in Fig. \ref{FourDXYZCompareHPtoricDepCut}.

\begin{figure}
	\centering
	\includegraphics[width=0.49\textwidth]{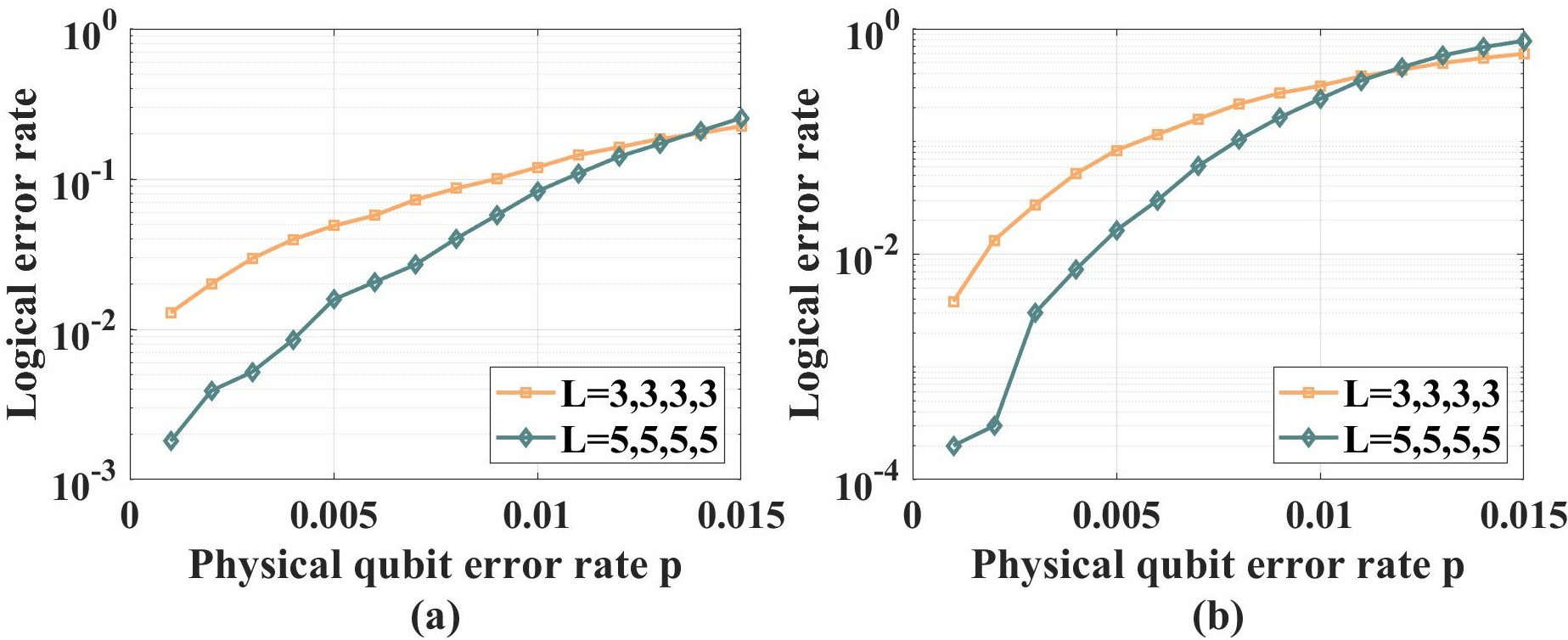}
	\caption{The error-correcting performance of 4D X-cube model against (a) depolarizing noise and (b) pure Pauli $Z$ noise.}
	\label{FourDXCubeDepPureZ}
\end{figure}

\section {Conclusion}
\label{6}
In this paper, exploiting FDBP-OSD-0, we first explore the error-correcting performance of the 3D Chamon code, which is an instance of the 3D XYZ product code construction. For the isotropic 3D Chamon code, our simulation results show that the corresponding code-capacity thresholds under the depolarizing noise model, pure Pauli $X$, $Y$ and $Z$ noise model are all around $14.5\%$. These results are much higher than those in Ref. \cite{schwartzman2025generalizing} and \cite{zhao2023quantum}. Then we show that the 3D XYZ product can be generalized to four dimension and propose the 4D XYZ product code construction, which constructs a class of non-CSS codes by using four classical codes or two CSS codes. Compared with the 4D homological product, the 4D XYZ product can construct non-CSS codes with higher code dimension or code distance. 

To explore the error-correcting performance of 4D XYZ product codes, we study two instances—the 4D XYZ product concatenated code and the 4D Chamon code. Exploiting FDBP-OSD-0, under the pure Pauli $Z$ noise model, simulation results show that the code-capacity thresholds of the isotropic 4D Chamon code, the anisotropic 4D Chamon code and the 4D XYZ product concatenated code are around $13\%$, $19\%$, $38\%$ respectively, which are much higher than those of the isotropic 4D toric code, the anisotropic 4D toric code and the 4D homological product concatenated code, which are around $10.5\%$, $10.5\%$ and $15\%$ respectively. Besides, we also study their error-correcting performance under Pauli-$Z$-biased noise with bias rate $\eta=1000$, $100$, $10$ and $0.5$. For the anisotropic 4D Chamon code encoding 8 logical qubits, the corresponding code-capacity thresholds are around $18.5\%$, $18\%$, $17\%$ and $13\%$, respectively. For the 4D XYZ product concatenated code, the corresponding code-capacity thresholds are around $37\%$, $32\%$, $22\%$ and $10\%$, respectively.

These results indicate that, using two identical component CSS codes, the 4D XYZ product can construct non-CSS codes with superior error-correcting performance against Pauli-$Z$-biased noise compared with the CSS codes constructed by the 4D homological product.

Last but not least, we identify that the 4D Chamon code is an spin model with eight-qubit nearest-neighbor interactions on a 4D cubic lattice with periodic boundary conditions and demonstrate that it possesses two key characteristics of fracton models—ground state degeneracy dependent on cubic lattice size and the restricted mobility of excitations, which strongly suggests that it is a novel 4D fracton model. In addition, the error-correcting performance of the 4D Chamon code is significantly better than that of the other fracton model-the 4D X-cube model, which implies that the 4D Chamon code is a promising candidate for self-correcting quantum memory. More significantly, experimental realization of the 4D quantum surface code has recently been demonstrated in Quantinuum’s H2 trapped-ion quantum computer \cite{berthusen2024experiments}. The implementation exhibits superior error-correcting performance compared to the 2D surface code. This advancement substantiates the feasibility of implementing our proposed 4D Chamon code in practical quantum hardware.

\section*{Data Availability}
The data that support the findings of this study are available from the corresponding author upon reasonable request.

\section*{Acknowledgements}
This work is supported by the Colleges and Universities Stable Support Project of Shenzhen, China (No.GXWD20220817164856008), Shenzhen Science and Technology Program, China (JCYJ20241202123906009), Guangdong Provincial Key Laboratory of Novel Security Intelligence Technologies (2022B1212010005), the Colleges and Universities Stable Support Project of Shenzhen, China (No.GXWD20220811170225001) and Harbin Institute of Technology, Shenzhen - SpinQ quantum information Joint Research Center Project (No.HITSZ20230111).

\appendix
\section {CSS version of the 4D XYZ product}
\label{CSS version}
The non-CSS codes constructed by the 4D XYZ product can be transformed into CSS codes by finite-depth unitary circuits. In Eq. (\ref{4DXYZ stabilizer1}), for each qubit in parts $A$ and $E$, applying Clifford operation $U=SH$, where $H$ and $S$ are the Hadamard gate and phase gate, respectively. Similarly, for each qubit in parts $B$ and $D$, applying Clifford operation $U=HS$. We have,

\begin{widetext}
	\begin{equation}
		\label{4DXYZ stabilizer3}
		\mathcal{S}^\prime =\begin{bmatrix}
			S\\
			T\\
			U\\
			V
		\end{bmatrix} = \begin{bmatrix}
			Z^{\left(I_{m_1}\otimes{\widetilde{\delta}}_0^T\right)} &-Z^{\left(I_{m_1}\otimes{\widetilde{\delta}}_{-1}\right)} &Z^{\left(\delta_{-1}^T\otimes I_{n_B}\right)} &I &I\\
			X^{\left(\delta_{-1}\otimes I_{m_4}\right)} &I &X^{\left(I_{n_A}\otimes{\widetilde{\delta}}_0\right)} &X^{\left(\delta_0^T\otimes I_{m_4}\right)} &I\\
			I &X^{\left(\delta_{-1}\otimes I_{m_3}\right)} &X^{\left(I_{n_A}\otimes{\widetilde{\delta}}_{-1}^T\right)} &I &X^{\left(\delta_0^T\otimes I_{m_3}\right)}\\
			I &I &Z^{\left(\delta_0\otimes I_{n_B}\right)} &-Z^{\left(I_{m_2}\otimes{\widetilde{\delta}}_{0}^{T}\right)} &Z^{\left(I_{m_2}\otimes{\widetilde{\delta}}_{-1}\right)}
		\end{bmatrix}
	\end{equation}
\end{widetext}
The factor '$-1$' can be dropped, and we obtain a CSS code. One can check that all stabilizers in $\mathcal{S}^\prime$ commute. By this way, the tensor-product structure of $\mathcal{S}^\prime$ is shown in Fig. \ref{4DXYZProduct_CSS_Version}.

\begin{figure}[htbp]
	\centering
	\includegraphics[width=0.48\textwidth]{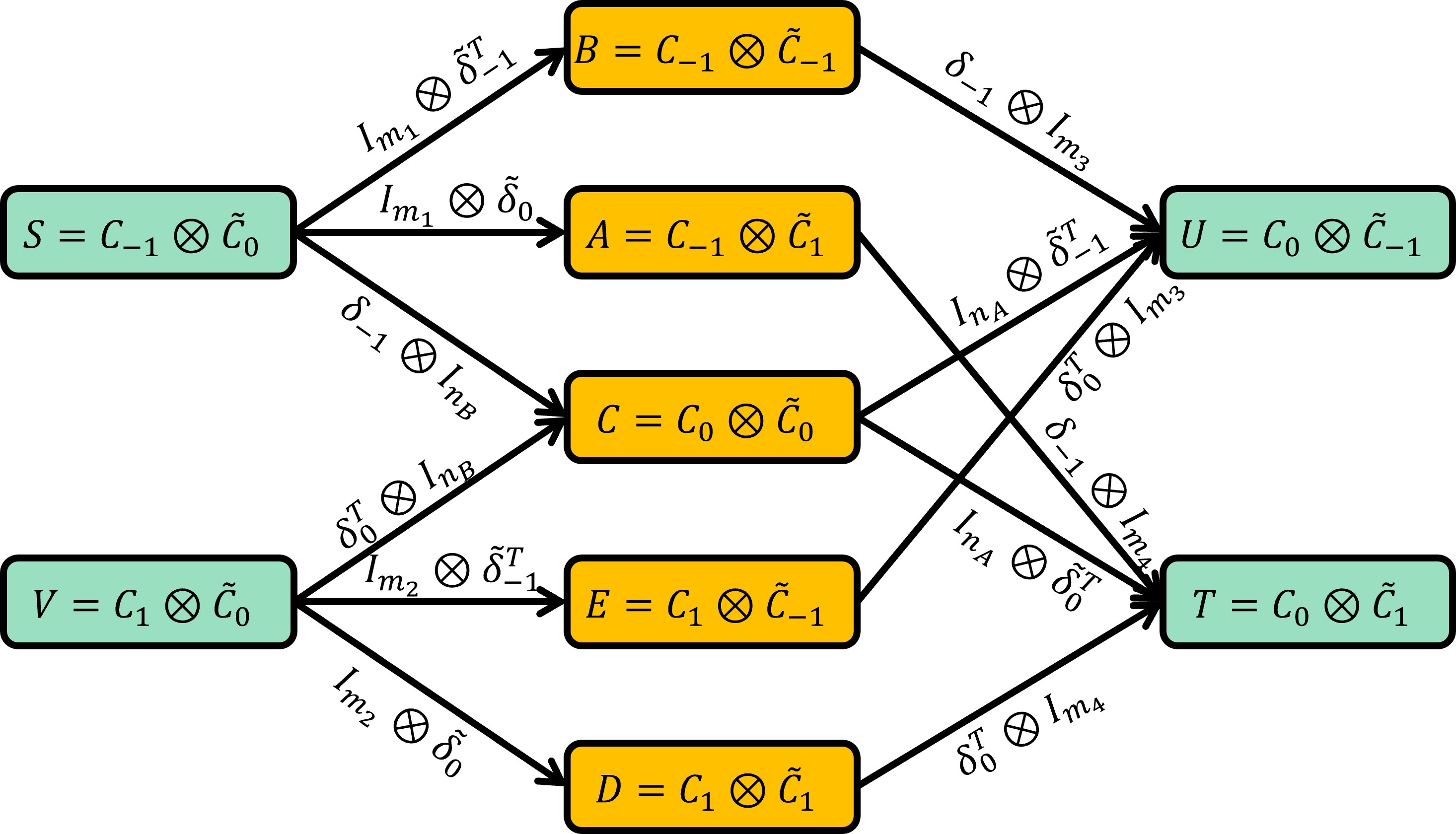}
	\caption{The tensor-product structure of CSS version of the 4D XYZ product of two length-2 chain complexes $\mathfrak{C}_1=C_{-1}\stackrel{\delta_{-1}}{\longrightarrow}C_0\stackrel{\delta_{0}}{\longrightarrow}C_1$ and $\mathfrak{C}_2={\widetilde{C}}_{-1}\stackrel{\widetilde{\delta}_{-1}}{\longrightarrow}{\widetilde{C}}_0\stackrel{\widetilde{\delta}_{0}}{\longrightarrow}{\widetilde{C}}_1$.}
	\label{4DXYZProduct_CSS_Version}
\end{figure}

Under this clifford deformation, the stabilizers of the deformed 4D Chamon code exhibit CSS structure: those in set $S$ (corresponding to cubes along $x$ or $y$ axis) and set $V$ (edges along $z$ or $w$ axis) are $Z$-type, whereas stabilizers in set $U$ (cubes along $z$ or $w$ axis) and set $T$ (edges along $x$- or $y$ axis) are $X$-type. 

Clifford deformation is an established method for tailoring quantum codes against biased noise \cite{bonilla2021xzzx,huang2023tailoring,setiawan2024tailoring}. For example, applying Hadamard gates to specific qubits in the rotated CSS surface codes yields the XZZX surface code \cite{bonilla2021xzzx}, which exhibits remarkable error-correcting performance for biased noise. Thus, this provides a novel perspective for understanding why the 4D XYZ product can construct codes with good error-correcting performance against biased noise.

\section {Code concatenation}
\label{Code concatenation}
Code concatenation is a procedure for constructing a larger quantum code from two smaller quantum codes. However, in our work, what we concatenate are two classical linear codes (specifically, classical repetition codes). This is because we utilize one repetition code as the bit-flip code which can only correct Pauli $X$ errors and the other as the phase-flip code which can only correct Pauli $Z$ errors. Their parity-check matrices both correspond to the classical repetition code. The well-known nine-qubit Shor code  \cite{PhysRevA.52.R2493} is a canonical example of the concatenated code, constructed by concatenating the three-qubit bit-flip code with the three-qubit phase-flip code.

Formally, consider two quantum codes, $C_1$ (the outer code) and $C_2$ (the inner code). Concatenating $C_1$ with $C_2$ involves replacing each physical qubit of $C_1$ with the logical qubit encoded by $C_2$. Consequently, every Pauli $P$ (where $P\in\{X,Y,Z\}$) operator in the stabilizers of $C_1$ must be replaced by the corresponding logical $P$ operator of $C_2$.

We use the nine-qubit Shor code as an illustrative example. Let $C_1$ be the three-qubit phase-flip code, stabilized by generators $\langle X_1X_2, X_2X_3 \rangle$ and $C_2$ be the three-qubit bit-flip code stabilized by generators $\langle Z_1 Z_2,Z_2 Z_3\rangle$ and with logical operator $\hat{X}_L=X_1X_2X_3$. Since $C_2$ serves as the inner code and encodes one logical qubit, replacing the three physical qubits of $C_1$ requires three copies of $C_2$. The resulting concatenated code—the nine-qubit Shor code—is stabilized by $X$-type stabilizers $\langle \hat{X}_L^{(1)}\hat{X}_L^{(2)}, \hat{X}_L^{(2)}\hat{X}_L^{(3)} \rangle = \langle X_1X_2X_3X_4X_5X_6,X_4X_5X_6X_7X_8X_9\rangle$, where the superscript $(i)$ (where $i\in\{1,2,3\}$) denotes the index of $C_2$ copy. Additionally, the $i^{th}$ copy of $C_2$ itself is stabilized by $\langle Z_1^{(i)} Z_2^{(i)},Z_2^{(i)} Z_3^{(i)}\rangle$. Therefore, the $Z$-type stabilizers of the nine-qubit Shor code are ${\langle Z_1^{(1)} Z_2^{(1)},Z_2^{(1)} Z_3^{(1)},Z_1^{(2)} Z_2^{(2)},Z_2^{(2)} Z_3^{(2)},Z_1^{(3)} Z_2^{(3)},Z_2^{(3)} Z_3^{(3)}\rangle} \allowbreak {= \langle Z_1Z_2,Z_2Z_3,Z_4Z_5,Z_5Z_6,Z_7Z_8,Z_8Z_9,\rangle}$.

The locality and LDPC properties of concatenated quantum codes depend critically on the choice of constituent codes. For the property of locality: concatenating repetition codes yields non-local stabilizer measurements (e.g., $X$-type stabilizers of the nine-qubit Shor code), whereas properly chosen constituent codes can achieve good locality—as demonstrated by the $XYZ^2$ hexagonal code in Ref. \cite{srivastava2025sequential}, which are constructed by concatenating a YZZY surface code with a $[[2,1,1]]$ phase-flip detecting code. Similarly for the property of LDPC: when replacing every Pauli $P$ (where $P\in\{X,Y,Z\}$) operator in $C_1$’s stabilizers by the corresponding logical $P$ operator of $C_2$, the resulting code cannot be LDPC if the weight of $C_2$’s logical operators scales with the code length. Conversely, if $C_2$’s logical operators have constant weight and $C_1$ is LDPC, the concatenated code will retain LDPC properties. This is exemplified by the $XYZ^2$ hexagonal code again.

\section {Proof of Theorem \ref{The logical operators of 4D XYZ product code}}
\label{Proof of Theorem 2}
\begin{widetext}
\begin{proof}
Our goal is to find out all independent logical operators, and our method is divided into three steps. First, finding out some operators that commutes with all stabilizer generators in Eq. (\ref{4DXYZ stabilizer2}). Second, excluding those operators which are stabilizers from the operators that we find in the first step, and the remaining operators are logical operators. Finally, we prove that the number of independent logical operator pairs that we find in the second step is equal to the code dimension of the corresponding 4D XYZ product code that we prove in \textbf{Theorem} \ref{The dimension of 4D XYZ product code}.
	
First, we consider the operators with the following form, which commute with all stabilizer generators,
	\begin{equation}
		\begin{aligned}
			&X_{G_1}=\left[I\ \ \ \ \ I\ \ \ \ X\left({\textbf{\emph{r}}^\prime}^\textbf{T}\right)\ \ \ I\ \ \ \ \ I\right]\\
			&Z_{G_1}=\left[I\ \ \ \ \ I\ \ \ \ X\left({\textbf{\emph{w}}^\prime}^\textbf{T}\right)\ \ \ I\ \ \ \ \ I\right]
		\end{aligned}		
	\end{equation}
	where
	\begin{equation}
		{\textbf{\emph{r}}^\prime}^T\in\ker\left(\begin{bmatrix}H_{x_1}\otimes I_{n_B}\\H_{z_1}\otimes I_{n_B}\end{bmatrix}\right)=\ker\left(\begin{bmatrix}H_{x_1}\\H_{z_1}\end{bmatrix}\right)\otimes\mathcal{C}_{n_B}
	\end{equation}
	and
	\begin{equation}
		{\textbf{\emph{w}}^\prime}^T\in\ker\left(\begin{bmatrix}I_{n_A}\otimes H_{x_2} \\I_{n_A}\otimes H_{z_2} \end{bmatrix}\right)=\mathcal{C}_{n_A}\otimes\ker\left(\begin{bmatrix}H_{x_1}\\H_{z_1}\end{bmatrix}\right)
	\end{equation}
	
	Let
	\begin{equation}
		{\textbf{\emph{r}}^{\prime\prime}}^T\in\ker\left(\begin{bmatrix}H_{x_1}\\H_{z_1}\end{bmatrix}\right)\otimes Im\left(H_{x_2}^T,H_{z_2}^T\right)\subset\ker\left(\begin{bmatrix}H_{x_1}\\H_{z_1}\end{bmatrix}\right)\otimes\mathcal{C}_{n_B}
	\end{equation}
	Next we prove that $X_{G_1^\prime}=\left[I\ I\ X\left({\textbf{\emph{r}}^{\prime\prime}}^{T}\right)\ I\ I\right]$ is a stabilizer which can only be generated by the product of some stabilizer generators in $T$ and $U$ and thus should be excluded. Considering an operator $P_1$ which is the product of some stabilizer generators in $T$ and $U$, it must have the following form 

	\begin{equation}
		\label{P1}
		\begin{aligned}
			P_1 &= \left[\textbf{\emph{t}}^T,\textbf{\emph{u}}^T\right]\begin{bmatrix}
				Y^{\left(H_{z_1}^T\otimes I_{m_4}\right)} &I &X^{\left(I_{n_A}\otimes H_{x_2}\right)} &Z^{\left(H_{x_1}^T\otimes I_{m_4}\right)} &I\\
				I &Z^{\left(H_{z_1}^T\otimes I_{m_3}\right)} &X^{\left(I_{n_A}\otimes H_{z_2}\right)} &I, &Y^{\left(H_{x_1}^T\otimes I_{m_3}\right)}
			\end{bmatrix}\\
			&=\left[Y^{\textbf{\emph{t}}^T\left(H_{z_1}^T\otimes I_{m_4}\right)}\ \ \ \ Z^{\textbf{\emph{u}}^T\left(H_{z_1}^T\otimes I_{m_3}\right)}\ \ \ \ X^{\textbf{\emph{t}}^T\left(I_{n_A}\otimes H_{x_2}\right)+\textbf{\emph{u}}^T\left(I_{n_A}\otimes H_{z_2}\right)}\ \ \ \ Z^{\textbf{\emph{t}}^T\left(H_{x_1}^T\otimes I_{m_4}\right)}\ \ \ \ Y^{\textbf{\emph{t}}^T\left(H_{x_1}^T\otimes I_{m_3}\right)}\right]
		\end{aligned}			
	\end{equation}
If the form of $P_1$ is the same with that of $X_{G_1^\prime}$, we must have
	\begin{equation}
		\begin{aligned}
			&H_{z_1}\otimes I_{m_4}\textbf{\emph{t}}=H_{x1}\otimes I_{m_4}\textbf{\emph{t}}=\mathbf{0}\\
			&H_{z1}\otimes I_{m_3}\textbf{\emph{u}}=H_{x1}\otimes I_{m_3}\textbf{\emph{u}}=\mathbf{0}
		\end{aligned}
	\end{equation}
	Thus, we have $\textbf{\emph{t}}\in\ker\left(\begin{bmatrix}H_{x_1}\\H_{z_1}\end{bmatrix}\right)\otimes\mathcal{C}_{m_4}$ and $\textbf{\emph{u}}\in\ker\left(\begin{bmatrix}H_{x_1}\\H_{z_1}\end{bmatrix}\right)\otimes\mathcal{C}_{m_3}$. Let $\textbf{\emph{t}}=\textbf{\emph{x}}\otimes \textbf{\emph{i}}$, and $\textbf{\emph{u}}=\textbf{\emph{y}}\otimes \textbf{\emph{j}}$, where $\textbf{\emph{x}},\textbf{\emph{y}}\in\ker\left(\begin{bmatrix}H_{x_1}\\H_{z_1}\end{bmatrix}\right)$, $\textbf{\emph{i}}\in\mathcal{C}_{m_4}$ and $\textbf{\emph{j}}\in\mathcal{C}_{m_3}$, we have
	\begin{equation}
		\begin{aligned}
			&\left(I_{n_A}\otimes H_{x_2}^T\right)\left(\textbf{\emph{x}}\otimes\textbf{\emph{i}}\right)+\left(I_{n_A}\otimes H_{z_2}^T\right)\left(\textbf{\emph{y}}\otimes\textbf{\emph{j}}\right)=\textbf{\emph{x}}\otimes H_{x_2}^T\textbf{\emph{i}}+\textbf{\emph{y}}\otimes H_{z_2}^T\textbf{\emph{j}}\in\ker\left(\begin{bmatrix}H_{x_1}\\H_{z_1}\end{bmatrix}\right)\otimes Im\left(H_{x_2}^T,H_{z_2}^T\right)
		\end{aligned}
	\end{equation}
	Thus, for vector $\textbf{\emph{r}}\in\ker\left(\begin{bmatrix}H_{x_1}\\H_{z_1}\end{bmatrix}\right)\otimes\left(\mathcal{C}_{n_B}\backslash Im\left(H_{x_2}^T,H_{z_2}^T\right)\right)$, the corresponding operator $X_L=\left(I\ I\ X\left(\textbf{\emph{r}}^T\right)\ I\ I\right)$ is a logical operator.
	
	Similarly, for vector $\textbf{\emph{w}}^{\prime\prime}\in Im\left(\left[H_{x1}^T,H_{z1}^T\right]\right)\otimes\ker\left(\begin{bmatrix}H_{x_2}\\H_{z_2}\end{bmatrix}\right)$,  the corresponding $Z_{G_1^\prime}=\left(I\ I\ Z\left({\textbf{\emph{w}}^{\prime\prime}}^{T}\right)\ I\ I\right)$ is a stabilizer which can only be generated by the product of some stabilizer generators in $S$ and $V$ and thus should be excluded.
	
	To sum up, for vectors 
	\begin{equation}
		\textbf{\emph{r}}\in \ker\left(\begin{bmatrix}H_{x_1}\\H_{z_1}\end{bmatrix}\right)\otimes \left(\mathcal{C}_{n_B}\backslash Im\left(\left[H_{x_2}^T,H_{z_2}^T\right]\right)\right)
	\end{equation}
	and
	\begin{equation}
		\textbf{\emph{w}}\in\left(\mathcal{C}_{n_A}\backslash Im\left(\left[H_{x_1}^T,H_{z_1}^T\right]\right)\right) \otimes \ker\left(\begin{bmatrix}H_{x_2}\\H_{z_2}\end{bmatrix}\right)
	\end{equation}
	the corresponding operators $X_L=\left(I\ I\ X\left(\textbf{\emph{r}}^T\right)\ I\ I\right)$ and $Z_L=\left(I\ I\ Z\left(\textbf{\emph{w}}^T\right)\ I\ I\right)$ must be logical operators.
	
	Notice that the dimension of  vector space $\mathcal{C}_{n_B}\backslash Im\left(\left[H_{x_2}^T,H_{z_2}^T\right]\right)$ is $n_B-\dim\left(Im\left(\left[H_{x_2}^T,H_{z_2}^T\right]\right)\right)= n_B-\dim\left(row\left(\begin{bmatrix}H_{x_2}\\H_{z_2}\end{bmatrix}\right)\right)=\dim\left(\ker\left(\begin{bmatrix}H_{x_2}\\H_{z_2}\end{bmatrix}\right)\right)$. Similarly, the dimension of vector space $\mathcal{C}_{n_A}\backslash Im\left(\left[H_{x_1}^T,H_{z_1}^T\right]\right)$ is equal to $n_A-\dim\left(row\left(\begin{bmatrix}H_{x_1}\\H_{z_1}\end{bmatrix}\right)\right)=\dim\left(\ker\left(\begin{bmatrix}H_{x_1}\\H_{z_1}\end{bmatrix}\right)\right)$. Thus, the numbers of independent logical operators $X_L$ and $Z_L$ are both $\dim\left(\ker\left(\begin{bmatrix}H_{x_1}\\H_{z_1}\end{bmatrix}\right)\right)\times \dim\left(\ker\left(\begin{bmatrix}H_{x_2}\\H_{z_2}\end{bmatrix}\right)\right)$. Moreover, for any vector $\textbf{\emph{r}}\in \ker\left(\begin{bmatrix}H_{x_1}\\H_{z_1}\end{bmatrix}\right)\otimes \left(\mathcal{C}_{n_B}\backslash Im\left(\left[H_{x_2}^T,H_{z_2}^T\right]\right)\right)$, we can find a vector $
	\textbf{\emph{w}}\in\left(\mathcal{C}_{n_A}\backslash Im\left(\left[H_{x_1}^T,H_{z_1}^T\right]\right)\right) \otimes \ker\left(\begin{bmatrix}H_{x_2}\\H_{z_2}\end{bmatrix}\right)$, such that $\textbf{\emph{r}}\cdot\textbf{\emph{w}}^T=1$, which means the corresponding $X_L$ and $Z_L$ anti-commute. Thus, there are $\dim\left(\ker\left(\begin{bmatrix}H_{x_1}\\H_{z_1}\end{bmatrix}\right)\right)\times \dim\left(\ker\left(\begin{bmatrix}H_{x_2}\\H_{z_2}\end{bmatrix}\right)\right)$ pairs of the first type logical operators.
	
	Second, we consider operators with the following form, which also commute with all stabilizer generators,
	\begin{equation}
		\begin{bmatrix}X_{G_2}\\X_{G_3}\\Z_{G_2}\\Z_{G_3}\end{bmatrix}=\begin{bmatrix}X({\textbf{\emph{a}}_1^\prime}^T)\ &I\ &I\ &Y({\textbf{\emph{b}}_1^\prime}^T)\ &I\\I\ &Y({\textbf{\emph{a}}_2^\prime}^T)\ &I\ &I\ &X({\textbf{\emph{b}}_2^\prime}^T)\\Y({\textbf{\emph{c}}_1^\prime}^T)\ &Z({\textbf{\emph{d}}_1^\prime}^T)\ &I\ &I\ &I\\I\ &I\ &I\ &Z({\textbf{\emph{c}}_2^\prime}^T)\ &Y({\textbf{\emph{d}}_2^\prime}^T)\end{bmatrix}
	\end{equation}
	where 
	\begin{equation}
		\left({\textbf{\emph{a}}_1^\prime}^T,\ {\textbf{\emph{b}}_1^\prime}^T\right)^T\in \ker{\left(\left[H_{z_1}^T\otimes I_{m_4},H_{x_1}^T\otimes I_{m_4}\right]\right)}=\ker{\left(\left[H_{z1}^T,H_{x1}^T\right]\right)}\otimes\mathcal{C}_{m_4}
	\end{equation}
	and 
	\begin{equation}
		\left({\textbf{\emph{a}}_2^\prime}^T,\ {\textbf{\emph{b}}_2^\prime}^T\right)^T\in \ker{\left(\left[H_{z_1}^T\otimes I_{m_3},H_{x_1}^T\otimes I_{m_3}\right]\right)}=\ker{\left(\left[H_{z1}^T,H_{x1}^T\right]\right)}\otimes\mathcal{C}_{m_3}
	\end{equation}
	
	Let
	\begin{equation}
		{\pi\left({\textbf{\emph{a}}_1^{\prime\prime}}^T,\ {\textbf{\emph{a}}_2^{\prime\prime}}^T,{\textbf{\emph{b}}_1^{\prime\prime}}^T,{\textbf{\emph{b}}_2^{\prime\prime}}^T\right)}^T\in \ker{\left(\left[H_{z1}^T,H_{x1}^T\right]\right)}\otimes Im\left(\begin{bmatrix}H_{x_2}\\ H_{z_2}\end{bmatrix}\right)
	\end{equation}
	and
	\begin{equation}
	\begin{bmatrix}X_{G_2}^{\prime}\\X_{G_3}^{\prime}\end{bmatrix}=\begin{bmatrix}X({\textbf{\emph{a}}_1^{\prime\prime}}^T)\ &I\ &I\ &Y({\textbf{\emph{b}}_1^{\prime\prime}}^T)\ &I\\I\ &Y({\textbf{\emph{a}}_2^{\prime\prime}}^T)\ &I\ &I\ &X({\textbf{\emph{b}}_2^{\prime\prime}}^T)\end{bmatrix}
	\end{equation}
	Next, we prove that $X_{G_2^{\prime}}X_{G_3^{\prime}}=\left[X\left({\textbf{\emph{a}}_1^{\prime\prime}}^T\right)\ Y\left({\textbf{\emph{a}}_2^{\prime\prime}}^T\right)\ I\ Y\left({\textbf{\emph{b}}_1^{\prime\prime}}^T\right)\ X\left({\textbf{\emph{b}}_2^{\prime\prime}}^T\right)\right]$ is a stabilizer, which can only be generated by the product of some stabilizer generators in $S$ and $V$, thus should be excluded. Considering an operator $P_2$ which is the product of some stabilizer generators in $S$ and $V$, it must have the following form
	\begin{equation}
		\label{P2}
		\begin{aligned}
			P_2 &= \left[\textbf{\emph{s}}^T,\textbf{\emph{v}}^T\right]\begin{bmatrix}
				X^{\left(I_{m_1}\otimes H_{x_2}^T\right)} &Y^{\left(I_{m_1}\otimes H_{z_2}^T\right)} &Z^{\left(H_{z_1}\otimes I_{n_B}\right)} &I &I\\
				I &I &Z^{\left(H_{x1}\otimes I_{n_B}\right)} &Y^{\left(I_{m_2}\otimes H_{x_2}^T\right)} &X^{\left(I_{m_2}\otimes H_{z_2}^T\right)}
			\end{bmatrix}\\
			&=\left[X^{\textbf{\emph{s}}^T\left(I_{m_1}\otimes H_{x_2}^T\right)}\ \ \ \ \ Y^{\textbf{\emph{s}}^T\left(I_{m_1}\otimes H_{z_2}^T\right)}\ \ \ \ Z^{\textbf{\emph{s}}^T\left(H_{z_1}\otimes I_{n_B}\right)+\textbf{\emph{v}}^T\left(H_{x1}\otimes I_{n_B}\right)}\ \ \ \ Y^{\textbf{\emph{v}}^T\left(I_{m_2}\otimes H_{x_2}^T\right)}\ \ \ \ X^{\textbf{\emph{v}}^T\left(I_{m_2}\otimes H_{z_2}^T\right)}\right]
		\end{aligned}			
	\end{equation}
	If the form of $P_2$ is the same with $X_{G_2^{\prime}}X_{G_3^{\prime}}$, we must have
	\begin{equation}
		H_{z1}^T\otimes I_{n_B}\textbf{\emph{s}}+H_{x1}^T\otimes I_{n_B}\textbf{\emph{v}}=\mathbf{0}
	\end{equation}
	Thus, we have $\begin{bmatrix}\textbf{\emph{s}}\\ \textbf{\emph{v}}\end{bmatrix}\in \ker\left(\left[H_{z1}^T\otimes I_{n_B},H_{x1}^T\otimes I_{n_B}\right]\right)=\ker \left(\left[H_{z_1}^T, H_{x_1}^T\right]\right)\otimes \mathcal{C}_{n_B}$. Let $\begin{bmatrix}\textbf{\emph{s}}\\ \textbf{\emph{v}}\end{bmatrix}=\begin{bmatrix}\textbf{\emph{x}}\\ \textbf{\emph{y}}\end{bmatrix}\otimes\textbf{\emph{i}}$, where $\begin{bmatrix}\textbf{\emph{x}}\\ \textbf{\emph{y}}\end{bmatrix}=\left[x_1,\cdots,x_{m_1},y_1,\cdots,y_{m_2}\right]^T\in\ker{\left(\left[H_{z_1}^T,H_{x_1}^T\right]\right)}$ and $\textbf{\emph{i}}\in\mathcal{C}_{n_B}$, we have
	\begin{equation}
		\begin{aligned}
			\begin{bmatrix}
				I_{m_1}\otimes H_{x_2} &\textbf{0}\\
				I_{m_1}\otimes H_{z_2} &\textbf{0}\\
				\textbf{0} &I_{m_2}\otimes H_{x_2}\\
				\textbf{0} &I_{m_2}\otimes H_{z_2}
			\end{bmatrix}\begin{bmatrix}\textbf{\emph{x}}\\ \textbf{\emph{y}}\end{bmatrix}\otimes\textbf{\emph{i}}=\begin{bmatrix}\textbf{\emph{x}}\otimes H_{x_2}\textbf{\emph{i}}\\\textbf{\emph{x}}\otimes H_{z_2}\textbf{\emph{i}}\\\textbf{\emph{y}}\otimes H_{x_2}\textbf{\emph{i}}\\\textbf{\emph{y}}\otimes H_{z_2}\textbf{\emph{i}}\end{bmatrix}=\begin{bmatrix}x_1H_{x_2}\textbf{\emph{i}}\\ \vdots\\ x_{m_1}H_{x_2}\textbf{\emph{i}}\\ x_1H_{z_2}\textbf{\emph{i}}\\ \vdots\\ x_{m_1}H_{z_2}\textbf{\emph{i}}\\ y_1H_{x_2}\textbf{\emph{i}}\\ \vdots\\ y_{m_2}H_{x_2}\textbf{\emph{i}}\\ y_1H_{z_2}\textbf{\emph{i}}\\ \vdots\\ y_{m_2}H_{z_2}\textbf{\emph{i}}\end{bmatrix}
		\end{aligned}
	\end{equation}
	
	It can be seen that
	\begin{equation}
		\begin{aligned}
			\pi\left(\begin{bmatrix}x_1H_{x_2}\textbf{\emph{i}}\\ \vdots\\ x_{m_1}H_{x_2}\textbf{\emph{i}}\\ x_1H_{z_2}\textbf{\emph{i}}\\ \vdots\\ x_{m_1}H_{z_2}\textbf{\emph{i}}\\ y_1H_{x_2}\textbf{\emph{i}}\\ \vdots\\ y_{m_2}H_{x_2}\textbf{\emph{i}}\\ y_1H_{z_2}\textbf{\emph{i}}\\ \vdots\\ y_{m_2}H_{z_2}\textbf{\emph{i}}\end{bmatrix}\right)=\begin{bmatrix}x_1H_{x_2}\textbf{\emph{i}}\\ x_1H_{z_2}\textbf{\emph{i}}\\ \vdots\\ x_{m_1}H_{x_2}\textbf{\emph{i}}\\x_{m_1}H_{z_2}\textbf{\emph{i}}\\ y_1H_{x_2}\textbf{\emph{i}}\\ y_1H_{z_2}\textbf{\emph{i}}\\ \vdots\\ y_{m_2}H_{x_2}\textbf{\emph{i}}\\y_{m_2}H_{z_2}\textbf{\emph{i}}\end{bmatrix}\in\ker{\left(\left[H_{z_1}^T,H_{x_1}^T\right]\right)}\otimes Im\begin{bmatrix}
				H_{x_2}\\H_{z_2}
			\end{bmatrix}
		\end{aligned}		
	\end{equation}
	which means $X_{G_2^{\prime}}X_{G_3^{\prime}}$ is a stabilizer. Thus, for vectors
	\begin{equation}
		{\pi\left(\textbf{\emph{a}}_1^T,\textbf{\emph{a}}_2^T,\textbf{\emph{b}}_1^T,\textbf{\emph{b}}_2^T\right)}^T\in\ker{\left(\left[H_{z_1}^T,H_{x_1}^T\right]\right)}\otimes \left(\mathcal{C}_{m_3+m_4}\backslash Im\begin{bmatrix}
			H_{x_2}\\H_{z_2}
		\end{bmatrix}\right)
	\end{equation}
	the corresponding operators $\begin{bmatrix}X_{L_2}\\X_{L_3}\end{bmatrix}=\begin{bmatrix}X({\textbf{\emph{a}}_1}^T)\ &I\ &I\ &Y({\textbf{\emph{b}}_1}^T)\ &I\\I\ &Y({\textbf{\emph{a}}_2}^T)\ &I\ &I\ &X({\textbf{\emph{b}}_2}^T)\end{bmatrix}$ are logical operators.
	
	Similarly, for vector
	\begin{equation}
		\left({\textbf{\emph{c}}_1^{\prime\prime}}^T,{\textbf{\emph{d}}_1^{\prime\prime}}^T,{\textbf{\emph{c}}_2^{\prime\prime}}^T,{\textbf{\emph{d}}_2^{\prime\prime}}^T\right)^T\in Im\left(\begin{bmatrix}H_{z_1}\\H_{x_1}\end{bmatrix}\right)\otimes\ker\left(\left[H_{x_2}^T,H_{z_2}^T\right]\right)
	\end{equation}
	and
	\begin{equation}
	\begin{bmatrix}Z_{G_2}^{\prime}\\Z_{G_3}^{\prime}\end{bmatrix}=\begin{bmatrix}Y({\textbf{\emph{c}}_1^{\prime\prime}}^T)\ &Z({\textbf{\emph{d}}_1^{\prime\prime}}^T)\ &I\ &I\ &I\\I\ &I\ &I\ &Z({\textbf{\emph{c}}_2^{\prime\prime}}^T)\ &Y({\textbf{\emph{d}}_2^{\prime\prime}}^T)\end{bmatrix}
	\end{equation}
	the operator $Z_{G_2^{\prime}}Z_{G_3^{\prime}}=\left[Y\left({\textbf{\emph{c}}_1^{\prime\prime}}^T\right)\ \ Z\left({\textbf{\emph{d}}_1^{\prime\prime}}^T\right)\ \ I\ \ Z\left({\textbf{\emph{c}}_2^{\prime\prime}}^T\right)\ \ Y\left({\textbf{\emph{d}}_2^{\prime\prime}}^T\right)\right]$ is a stabilizer which can only be generated by the product of some stabilizer generators in $T$ and $U$ and thus should be excluded.
	
	To sum up, for vectors 
	\begin{equation}
		{\pi\left(\textbf{\emph{a}}_1^T,\ \textbf{\emph{a}}_2^T,\textbf{\emph{b}}_1^T,\textbf{\emph{b}}_2^T\right)}^T\in \ker\left(\left[H_{z_1}^T,H_{x_1}^T\right]\right)\otimes\left(\mathcal{C}_{m_3+m_4}\backslash Im\left(\begin{bmatrix}H_{x_2}\\H_{z_2}\end{bmatrix}\right)\right)
	\end{equation}
	and 
	\begin{equation}
		{\pi\left(\textbf{\emph{c}}_1^T,\ \textbf{\emph{d}}_1^T,\textbf{\emph{c}}_2^T,\textbf{\emph{d}}_2^T\right)}^T\in\left(\mathcal{C}_{m_1+m_2}\backslash Im\left(\begin{bmatrix}H_{z_1}\\H_{x_1}\end{bmatrix}\right)\right)\otimes\ker\left(\left[H_{x_2}^T,H_{z_2}^T\right]\right)
	\end{equation}
	the corresponding $\begin{bmatrix}X_{L_2}\\X_{L_3}\end{bmatrix}=\begin{bmatrix}X(\textbf{\emph{a}}_1)\ &I\ &I\ &Y(\textbf{\emph{b}}_1)\ &I\\I\ &Y(\textbf{\emph{a}}_2)\ &I\ &I\ &X(\textbf{\emph{b}}_2)\end{bmatrix}$ and $\begin{bmatrix}Z_{L_2}\\Z_{L_3}\end{bmatrix}=\begin{bmatrix}Y(\textbf{\emph{c}}_1)\ &Z(\textbf{\emph{d}}_1)\ &I\ &I\ &I\\I\ &I\ &I\ &Z(\textbf{\emph{c}}_2)\ &Y(\textbf{\emph{d}}_2)\end{bmatrix}$ must be logical operators.
	
	Notice that the dimension of vector space $\mathcal{C}_{m_3+m_4}\backslash Im\begin{bmatrix}
		H_{x_2}\\H_{z_2}
	\end{bmatrix}$ is $m_3+m_4-\dim\left(Im\begin{bmatrix}
		H_{x_2}\\H_{z_2}
	\end{bmatrix}\right)=m_3+m_4-\dim\left(row\left(\left[H_{x_2}^T,H_{z_2}^T\right]\right)\right)=\dim\left(\ker\left(\left[H_{x_2}^T,H_{z_2}^T\right]\right)\right)$. Similarly, the dimension of vector space $\mathcal{C}_{m_1+m_1}\backslash Im\begin{bmatrix}
		H_{z_1}\\H_{x_1}
	\end{bmatrix}$ is $m_1+m_2-\dim\left(Im\begin{bmatrix}
		H_{z_1}\\H_{x_1}
	\end{bmatrix}\right)=m_1+m_2-\dim\left(row\left(\left[H_{z_1}^T,H_{x_1}^T\right]\right)\right)=\dim\left(\ker\left(\left[H_{z_1}^T,H_{x_1}^T\right]\right)\right)$. Thus, the total number of independent logical operators $X_{L_2}$ and $X_{L_3}$ and that of $Z_{L_2}$ and $Z_{L_3}$ are both $\dim\left(\ker{\left(\left[H_{z_1}^T,H_{x_1}^T\right]\right)}\right)\times \dim\left(\ker\left(\left[H_{x_2}^T,H_{z_2}^T\right]\right)\right)$. Moreover, for any vector $\textbf{\emph{r}}\in \ker\left(\left[H_{z_1}^T,H_{x_1}^T\right]\right)\otimes\left(\mathcal{C}_{m_3+m_4}\backslash Im\left(\begin{bmatrix}H_{x_2}\\H_{z_2}\end{bmatrix}\right)\right)$, we can find a vector $\textbf{\emph{w}}\in\left(\mathcal{C}_{m_1+m_2}\backslash Im\left(\begin{bmatrix}H_{x_1}\\H_{z_1}\end{bmatrix}\right)\right)\otimes\ker\left(\left[H_{z_2}^T,H_{x_2}^T\right]\right)$, such that $\textbf{\emph{r}}\cdot\textbf{\emph{w}}^T=1$, which means the corresponding $X_L$ and $Z_L$ anti-commute. Thus, there are $\dim\left(\ker{\left(\left[H_{z_1}^T,H_{x_1}^T\right]\right)}\right)\times \dim\left(\ker{\left(\left[H_{z_2}^T,H_{x_2}^T\right]\right)}\right)$ pairs of the second type logical operators.
	
	The total number of the first and the second types of logical operators is $\dim\left(\ker\left(\begin{bmatrix}H_{x_1}\\H_{z_1}\end{bmatrix}\right)\right)\times \dim\left(\ker\left(\begin{bmatrix}H_{x_2}\\H_{z_2}\end{bmatrix}\right)\right)+\dim\left(\ker{\left(\left[H_{z_1}^T,H_{x_1}^T\right]\right)}\right)\times \dim\left(\ker{\left(\left[H_{z_2}^T,H_{x_2}^T\right]\right)}\right)$. Now, we prove that this number is equal to the code dimension of the corresponding 4D XYZ product code as we prove in \textbf{Theorem} \ref{The dimension of 4D XYZ product code}.
	
	Let $a=\dim\left(\ker\left(\begin{bmatrix}H_{x_1}\\H_{z_1}\end{bmatrix}\right)\right)$ and $b=\dim\left(\ker\left(\begin{bmatrix}H_{x_2}\\H_{z_2}\end{bmatrix}\right)\right)$, we have $\dim\left(\ker{\left(\left[H_{z_1}^T,H_{x_1}^T\right]\right)}\right)=m_1+m_2-\dim\left(row{\left(\left[H_{z_1}^T,H_{x_1}^T\right]\right)}\right)=m_1+m_2-\left(n_A-\dim\left(\ker\left(\begin{bmatrix}H_{x_1}\\H_{z_1}\end{bmatrix}\right)\right)\right)=m_1+m_2-(n_A-a)$. Similarly, $\dim\left(\ker{\left(\left[H_{z_2}^T,H_{x_2}^T\right]\right)}\right)=m_3+m_4-\dim\left(row{\left(\left[H_{z_2}^T,H_{x_2}^T\right]\right)}\right)=m_3+m_4-\left(n_B-\dim\left(\ker\left(\begin{bmatrix}H_{x_2}\\H_{z_2}\end{bmatrix}\right)\right)\right)=m_3+m_4-(n_B-b)$. Thus,

	\begin{equation}
		\begin{aligned}
			&\dim\left(\ker\left(\begin{bmatrix}H_{x_1}\\H_{z_1}\end{bmatrix}\right)\right)\times \dim\left(\ker\left(\begin{bmatrix}H_{x_2}\\H_{z_2}\end{bmatrix}\right)\right)+\dim\left(\ker{\left(\left[H_{z_1}^T,H_{x_1}^T\right]\right)}\right)\times \dim\left(\ker{\left(\left[H_{z_2}^T,H_{x_2}^T\right]\right)}\right)\\
			&=ab+\left[m_1+m_2-\left(n_A-a\right)\right]\left[m_3+m_4-\left(n_B-b\right)\right]\\
			&=a\left[b-\left(n_B-m_3-m_4\right)\right]+b\left[a-\left(n_A-m_1-m_2\right)\right]+\left(n_A-m_1-m_2\right)\left(n_B-m_3-m_4\right)
		\end{aligned}		
	\end{equation}
which is equal to the code dimension as we prove in \textbf{Theorem} \ref{The dimension of 4D XYZ product code}. Thus, the proof is completed.
\end{proof}
\end{widetext}

\section {Proof of Lemma \ref{minimum weight}}
\label{Proof of Lemma 2}
\begin{proof}
	Here, we prove the minimum weight of non-zero vectors in space $\mathcal{C}_{n}\backslash Im (H_1)$ with that of $\mathcal{C}_{n}\backslash \ker (H_2)$ following a similar fashion.
	
	The space $\mathcal{C}_{n}$ can be represented by the row space of an $n\times n$ identity matrix $I_n$, namely, $row(I_n)$. Thus, this problem is equivalent to finding the minimum weight of non-zero vectors in space $row(I_n)$ but not in $Im (H_1)$. To do this, one can construct a block matrix $\left[H_1, I_n\right]$, then using Gaussian elimination to find out a set of column indices of its pivots $\textbf{P}$. For the rows which are in $I_n$ but not in $H_1^T$ and whose indices are in $\textbf{P}$, they are the vectors in space $row(I_n)$ but not in $Im (H_1)$. Since the weight of each row of $I_n$ is one, thus the minimum weight of non-zero vectors in space $\mathcal{C}_{n}\backslash Im (H_1)$ is one and the proof is completed.
\end{proof}

\section {Comparison of number of 4-cycles}
\label{number of 4-cycles}
Table \ref{4-cycles} gives the number of 4-cycles in the Tanner graphs of the 3D Chamon code, 4D Chamon code, 3D toric code and 4D toric code. One can see that the numbers of 4-cycles in the Tanner graph of the 3D and 4D Chamon codes are much higher than those of the 3D and 4D toric codes.

\begin{table}[htbp]
	\begin{center}
		\caption{Comparison of the 3D Chamon code, 4D Chamon code, 3D toric code and 4D toric code in the number of 4-cycles.}		
		\begin{tabular}{c|c|c}
			\hline
			& $n_1$, $n_2$, $n_3$ &  number of 4-cycles \\
			\hline
			\multirow{5}{*}{3D Chamon code} & $2,2,2$ & 240 \\
			\cline{2-3}
			& $3,3,3$ & 648\\
			\cline{2-3}
			& $4,4,4$ & 1536\\
			\cline{2-3}
			& $2,3,4$ & 624\\
			\cline{2-3}	
			& $3,4,5$ & 1440\\		
			\hline
			\multirow{5}{*}{3D toric code} & $2,2,2$ & 132\\
			\cline{2-3}
			& $3,3,3$ & 324\\
			\cline{2-3}
			& $4,4,4$ & 768\\
			\cline{2-3}
			& $2,3,4$ & 201\\
			\cline{2-3}	
			& $3,4,5$ & 467\\		
			\hline
			& $n_1$, $n_2$, $n_3$, $n_4$ &  number of 4-cycles \\
			\hline
			\multirow{2}{*}{4D Chamon code} & $2,2,2,2$ & 1792 \\
			\cline{2-3}
			& $2,3,2,3$ & 3744\\
			\hline
			\multirow{2}{*}{4D toric code} & $2,2,2,2$ & 519 \\
			\cline{2-3}
			& $2,3,2,3$ & 1095\\
			\hline
		\end{tabular}
		\label{4-cycles}
	\end{center}
\end{table}


\newpage
\bibliography{sn-bibliography}

\end{document}